\newcommand\blfootnote[1]{%
 \begingroup
 \renewcommand\thefootnote{}\footnote{#1}%
 \addtocounter{footnote}{-1}%
 \endgroup
}
\newtheorem{theorem}{Theorem}
\newtheorem{definition}{Definition}
\newtheorem{lemma}{Lemma}
\newtheorem{assumption}{Assumption}
\definecolor{RED}{rgb}{0.6,0.,0.}
\definecolor{BLUE}{rgb}{0.,0.,0.6}
\definecolor{GREEN}{rgb}{0.,0.6,0.}
\definecolor{MALINA}{rgb}{0.6,0.,0.6}
\definecolor{YELLOW}{rgb}{0.8,0.8,0}
\newcommand{\squeezeup}{\vspace{-1.5 mm}}
\begin{document}
\title{Learning with End-Users in Distribution Grids:\\Topology and Parameter Estimation}
\author{Sejun Park$^{*}$,~Deepjyoti Deka$^\dag$,~Scott Backhaus$^\ddag$,~Michael Chertkov$^{\mathsection}$\\
$*$KAIST, Daejeon, Korea\\
$\dag$Theoretical Division, Los Alamos National Laboratory, Los Alamos, New Mexico, USA\\
$^\ddag$ Quantum  Electromagnetics  Group, National Institute of Standards and Technology, Boulder, Colorado, USA\\
$^\mathsection$ Dept. of Mathematics, University of Arizona, Tucson, Arizona, USA}

\maketitle
\begin{abstract}
Efficient operation of distribution grids in the smart-grid era is hindered by the limited presence of real-time nodal and line meters. In particular, this prevents the easy estimation of grid topology and associated line parameters that are necessary for control and optimization efforts in the grid. This paper studies the problems of topology and parameter estimation in radial balanced distribution grids where measurements are restricted to only the leaf nodes and all intermediate nodes are unobserved/hidden. To this end, we propose two exact learning algorithms that use balanced voltage and injection measured only at the end-users. The first algorithm requires time-stamped voltage samples, statistics of nodal power injections and permissible line impedances to recover the true topology. The second and improved algorithm requires only time-stamped voltage and complex power samples to recover both the true topology and impedances without any additional input (e.g., number of grid nodes, statistics of injections at hidden nodes, permissible line impedances). We prove the correctness of both learning algorithms for grids where unobserved buses/nodes have a degree greater than three and discuss extensions to regimes where that assumption doesn't hold. Further, we present computational and, more importantly, the sample complexity of our proposed algorithm for joint topology and impedance estimation. We illustrate the performance of the designed algorithms through numerical experiments on the IEEE and custom power distribution models. 
\end{abstract}
\blfootnote{
The authors acknowledge the support from the Department of Energy through the Grid Modernization Lab Consortium and the Center for Nonlinear Studies (CNLS) at Los Alamos National Laboratory for this work.}
\begin{IEEEkeywords}
Distribution networks, Missing data, Power flows, Sample complexity, Topology and Impedance estimation
\end{IEEEkeywords}
\section{Introduction}
\label{sec:intro}
Distribution grids include the low and medium voltage transmission lines that help transfer power from the distribution substation to the final consumers. Structurally, a majority of distribution grids are radial in structure. However, unlike traditional passive distribution grids, modern ones have smart controllable loads, household renewable generators (e.g., solar panels), and battery storage devices (e.g., electric vehicles). The presence of active devices has made distribution grids dynamic, re-configurable, and an important location for smart grid operations like demand response, frequency regulation and inter-household energy settlements/transactions. However, optimal operations under different regimes require real-time state estimation in the grid, in particular of the current radial topology of current operational lines, and their impedances. In addition, real or near real-time estimation of the distribution grid topology and corresponding line impedances is not straightforward due to the limited availability of real-time measurement devices, unlike in high voltage transmission grids. In recent years, Phasor Measurement Unit (PMU) technology and its alternatives (e.g., micro-PMUs \cite{micropmu}, FNETs \cite{FNET}) have become available in distribution grids, but their presence is not ubiquitous \cite{hoffman}. Among others, the presence of underground lines in urban areas makes meter placement, direct estimation, and calibration of parameters challenging. Thus, there is a greater need to develop efficient algorithms that can provably estimate topology and line parameters under sparse meter presence and infrequent calibration of line parameters. More importantly, new loads such as smart air-conditioners or electric vehicles connected to the grid at the end-user level have the ability to measure and communicate nodal voltages and injections. In this work, we consider such scenarios and analyze the topology and parameter estimation problem in grids where only leaf nodes measurements of the grids are available. 

\subsection{Prior Work}
Past research in topology or parameter estimation has proposed different algorithms that differ primarily on the availability of data and type of measurements (nodal or line based). For available line measurements, \cite{ramstanford} uses a cycle basis and maximum likelihood tests to estimate the topology. When nodal voltage based measurements are available at all nodes, graphical model based formulations have been proposed to estimate the operational lines for both radial grids \cite{bolognani2013identification,dekathreephase} and loopy grids \cite{ram_loop,dekairep}. In a similar measurement regime including nodal voltages, \cite{dekatcns} present greedy topology learning schemes based on trends in second moments of voltage magnitudes. Real-data driven and model-free schemes using signature based tests to reconstruct topology and line parameters are presented in \cite{cavraro2015data,arya,sandia1}.

It is worth mentioning that the majority of the prior work relies on the availability of nodal measurements (voltage and/or injection) at all nodes of the grid. In particular, in work involving missing nodes \cite{dekatcns}, injection statistics at all nodes are assumed to be known. However, this might be a strong assumption due to unavailable meters and historical information for missing nodes. In addition, none of the mentioned work provides guaranteed topology and impedance estimation in the presence of missing nodes. In this paper, we provide efficient algorithms for both topology and impedance estimation on all operational lines in the grid in a severely measurement deficient regime where voltage and/or injection measurements at only end-users (leaf nodes in radial grids) are observed while all other nodal quantities are unobserved. 

\vspace{-0.1in}
\subsection{Contribution}
In this paper, we address to estimate topology and impedance on all operational lines only using the leaf node measurements in a balanced radial grid. Unlike prior works, we assume that all other nodal quantities of missing nodes are not available.
To this end, we propose two algorithms for topology and impedance recovery with only leaf node measurements in a radial grid that are provably correct for linearized power flows \cite{89BWa,bolognani2016existence, dekatcns}. 

The first algorithm utilizes time-stamped voltage magnitude samples and complex injection statistics of leaf nodes and identifies operational edges from an over-complete set of permissible edges with known impedances. Operational edges and intermediate missing nodes are identified based on a novel relationship between second order moments of voltage and power injections at the observed leaf nodes. We show that the algorithm has $O(|\mathcal V|^3)$ computational complexity.

We further improve the first algorithm to present our second algorithm that jointly estimates operational edges and their impedances only using time-stamped voltage magnitude and injection samples of leaf nodes. Unlike the first algorithm, the second algorithm does not require any knowledge of the missing nodes or the permissible lines. 
The second algorithm first recovers the impedance distance between all observed leaf nodes and iteratively identifies each operational edge along with its impedance. The second algorithm also has $O(|\mathcal V|^3)$ computational complexity. In addition, we prove that it has $O(|\mathcal V|\log|\mathcal V|)$ sample complexity for the correct recovery of the topology.
Simulations results on IEEE test cases with ac power flow models demonstrate the practical use of our algorithms. To the best of our knowledge, this is the first work which provides guaranteed topology and impedance reconstruction in balanced distribution grids, only using the leaf nodes measurements. 
We present a summary of both algorithms in Table \ref{table:algcomp}.
Parts of the work have been presented in IEEE SmartGridComm 2016 \cite{dekasmartgridcomm} and PSCC 2018 \cite{sejunpscc}. This journal version includes new theoretical results on sample complexities that prove the correctness for the performance of the algorithms at finite samples. Further, we include a detailed discussion for the extension of the developed algorithms and additional simulation results on ac power flow based samples.

\begin{table*}
\centering
\caption{Summary of topology learning algorithms with missing nodes}
\label{table:algcomp}
 {\footnotesize
 \begin{tabular}{ | c | c | c | c | c |}
 \hline
 Algorithm & Output & Observations (\textbf{available only at leaf nodes}) & Prior Information & Assumptions \\ 
 \hline
 Algorithm \ref{alg:learningdeep} & Topology & Time-stamped voltage magnitude samples & Line impedances & Uncorrelated power injections\\ 
 & & Complex power injection statistics&of all permissible lines & Missing nodes have degree $\ge 3$ \\ \hline
 Algorithm \ref{alg:main}& Topology & Time-stamped voltage magnitude samples &None & Uncorrelated power injections\\ 
 & Impedances &Time-stamped complex power injection samples& & Missing nodes have degree $\ge 3$ \\ 
 \hline
 \end{tabular}
 }
 \vspace{-0.1in}
\end{table*}
The rest of the paper is organized as follows. Section \ref{sec:preliminary} introduces nomenclature and power flow relations in the distribution grid. The first algorithm and the second algorithm are described in Section \ref{sec:firstalg} and Section \ref{sec:main} respectively. Numerical experiments are presented in Section \ref{sec:experiments}. Finally, Section \ref{sec:conclusions} is reserved for conclusions and discussions.

\section{Distribution Grid Topology and Power Flows}\label{sec:preliminary}
\begin{figure}[ht]
 \centering
\includegraphics[width=0.27\textwidth]{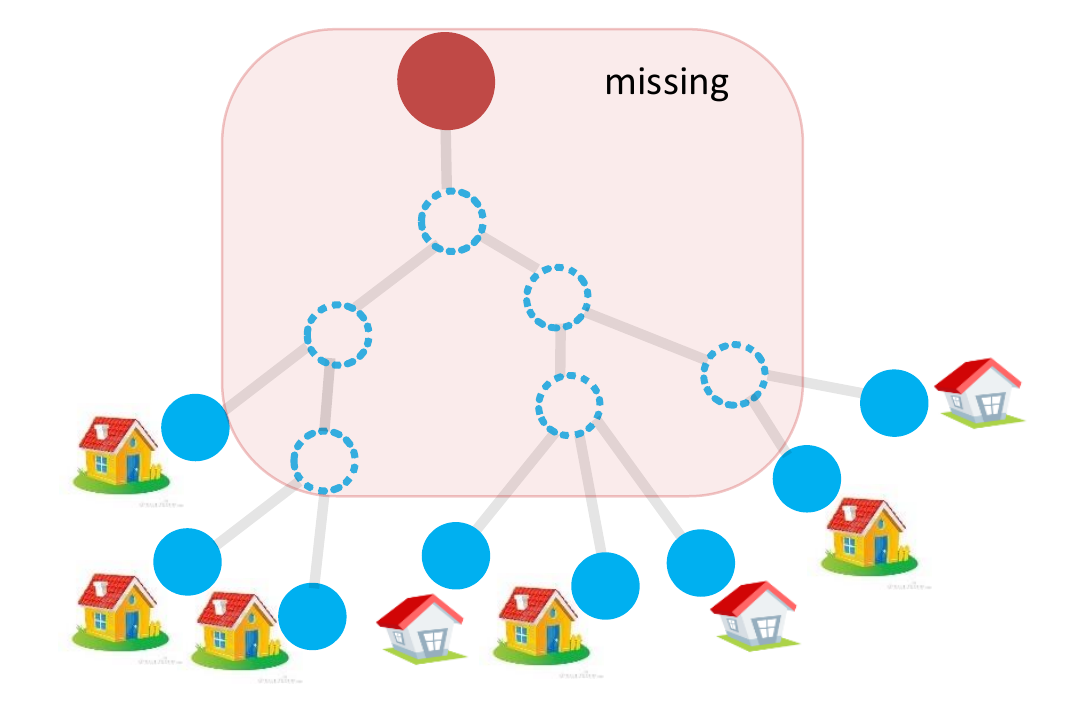}
\caption{Illustration of a radial distribution grid. The red node denotes the substation bus. Blue nodes denote observed end-users. Dotted nodes denote unobserved and unknown intermediate nodes. Grey lines denote unknown operational edges.
}
 \label{fig:radialgrid}
\squeezeup
\end{figure}
\noindent{\bf Radial Topology:}
We consider radial distribution grids in this paper. Mathematically, we define a grid by a graph $\mathcal G=(\mathcal V,\mathcal E)$, where the set of buses/nodes is denoted by $\mathcal V$ and the set of undirected operational lines/edges is denoted by $\mathcal E$.
We denote the set of all lines/edges including non-operational ones by $\bar{\mathcal E}$.
The operational grid is `radial' if $\mathcal G$ is tree-structured. Fig. \ref{fig:radialgrid} shows an illustration of a distribution grid. We use lower-case alphabets $a,b,c,\dots$ to represent buses/nodes and a pair $(ab)$ to denote a line/edge between nodes $a$ and $b$. We denote $t\in\mathcal V$ as a root node (reference/substation bus). The case with multiple substations is discussed in later sections as an extension of the algorithms in the paper.
We denote $\mathcal P_{ab}$ as the unique path from a node $a$ to a node $b$ in a radial grid $\mathcal G$. For a node $a$, all nodes whose path to the root contains $a$ is called descendants of $a$ and denoted by a set $\mathcal{D}_a$. If $(ab)$ is an edge and $b \in \mathcal{D}_a$, then a node $b$ is called `child' of a `parent' node $a$. Nodes that are children of a same parent are called `siblings'. $\mathcal L\subset\mathcal V$ denotes the set of leaf nodes that are observed in our learning algorithms. The remaining intermediate/missing nodes in the grid are assumed to be unobserved. Next, we discuss the power flow models used in this paper for our algorithm designs. 

\noindent{\bf Power Flow Models:}
In a radial grid $\mathcal G=(\mathcal V,\mathcal E)$, we consider the balanced power flow operation satisfying the following Kirchhoff's law which expresses complex power injections at each node in terms of the node-voltages and line impedances:
\begin{equation}\label{eq:powerflow}
p_a+iq_a=\sum_{b:(ab)\in\mathcal E}\frac{v_a^2-v_av_b\exp(i\theta_a-i\theta_b)}{z^*_{ab}}.
\end{equation}
Here, $z_{ab},v_a,\theta_a,p_a,q_a$ denote impedance of $(ab)\in\mathcal E$, balanced voltage magnitude, voltage phase, active and reactive power at $a\in\mathcal V$, respectively. 
Since Eq.~(\ref{eq:powerflow}) is non-convex, we consider a linearized approximation that neglects second order terms in Eq.~(\ref{eq:powerflow}) termed \textbf{Linear Coupled Power Flow (LC-PF)} model \cite{bolognani2016existence,dekatcns}:
\begin{equation}\label{eq:lcpf}
\begin{split}
&p_a=\sum_{b:(ab)\in\mathcal E}\big[\beta_{ab}(\theta_a-\theta_b)+g_{ab}(v_a-v_b)\big]\\
&q_a=\sum_{b:(ab)\in\mathcal E}\big[\beta_{ab}(v_a-v_b)-g_{ab}(\theta_a-\theta_b)\big]
\end{split}
\end{equation}
where $g_{ab}=r_{ab}/(x_{ab}^2+r_{ab}^2)$, $\beta_{ab}=x_{ab}/(x_{ab}^2+r_{ab}^2)$ and $r_{ab},x_{ab}$ are resistance, reactance of $(ab)$, respectively, i.e., $z_{ab}=r_{ab}+ix_{ab}$. Following the standard notation, we consider the substation/root node as a reference bus and measure voltage magnitude and phase at each non-substation bus with respect to it. Further, due to the lossless nature of the linearized power flow model, the injection at the reference bus is the negative of the sum of injections at all other nodes. One can thus ignore the reference bus from the power flow analysis and consider a reduced model comprising of power flow equations at the non-reference buses in the grid. Further, by considering only deviations from the respective steady state values, we model $p,q,v,\theta$ as zero mean random variables. The LC-PF model is equivalent to a first order approximation of voltage magnitudes in the LinDistFlow equations introduced in \cite{89BWa} for distribution grids. The LC-PF model Eq.~(\ref{eq:lcpf}) can also be stated in the following matrix form \cite{dekatcns}
\begin{align}\label{eq:lcpf2}
&v=H^{-1}_{1/r}p+H^{-1}_{1/x}q\qquad \theta=H^{-1}_{1/x}p-H^{-1}_{1/r}q
\end{align}
where $v,\theta,p,q$ are respectively vectors of voltage magnitude, voltage phase, active and reactive power at the non-substation buses of the grid. 
$H_{1/r},H_{1/x}$ represent the reduced weight Laplacian matrices for $\mathcal G\setminus\{t\}$ where $1/r_{ab},1/x_{ab}$ are used as edge-weights of $(ab)$ respectively.\footnote{$\mathcal G\setminus\{t\}$ denotes a subgraph of $\mathcal G$ induced by $\mathcal V\setminus\{t\}$.} We mention a structural property of $H^{-1}_{1/r},H^{-1}_{1/x}$ that arises due to the radial topology.
\begin{lemma}[\cite{68Resh,dekatcns}]\label{lemma1}
Let $H_{1/r}$ be the reduced weighted Laplacian matrix of a grid $\mathcal G$. Then, its inverse satisfies
\begin{align}
 H_{1/r}^{-1}(a,b)&= \sum_{(cd) \in {\cal P}_{at}\bigcap {\cal P}_{bt}} r_{cd}.\label{Hrxinv}
\end{align}
\end{lemma}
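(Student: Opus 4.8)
The plan is to verify directly that the matrix $M$ with entries $M(a,b)=\sum_{(cd)\in\mathcal{P}_{at}\cap\mathcal{P}_{bt}}r_{cd}$ is a right inverse of $H_{1/r}$; since $H_{1/r}$ is square this will give $M=H_{1/r}^{-1}$. For completeness I would first note that $H_{1/r}$ is nonsingular: it is the principal submatrix of the weighted graph Laplacian $L$ of the connected graph $\mathcal{G}$ obtained by deleting the row and column of $t$, and for $x\ne 0$ one has $x^{\top}H_{1/r}x=\tilde{x}^{\top}L\tilde{x}>0$, where $\tilde{x}$ is the zero-extension of $x$, because the only vectors in $\ker L$ are constant and a nonzero constant cannot vanish at $t$.

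To organize the entrywise check I would root the tree at $t$, write $\pi(a)$ for the parent of a non-root node $a$ so that the neighbours of $a$ are exactly $\pi(a)$ together with the children of $a$, and set $R(u)=\sum_{(cd)\in\mathcal{P}_{ut}}r_{cd}$ with $R(t)=0$. The basic observation is that $\mathcal{P}_{at}\cap\mathcal{P}_{bt}=\mathcal{P}_{mt}$ where $m$ is the lowest common ancestor of $a$ and $b$, so $M(a,b)=R(m)$ and in particular $M(a,a)=R(a)$. Writing $\mathbf{1}[\cdot]$ for the indicator of a predicate and adopting the convention $M(t,b):=0$, the two facts I would establish are the telescoping identities $M(a,b)-M(\pi(a),b)=r_{a\pi(a)}\,\mathbf{1}[b\in\mathcal{D}_a]$ and, for each child $c$ of $a$, $M(a,b)-M(c,b)=-r_{ac}\,\mathbf{1}[b\in\mathcal{D}_c]$. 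Each follows from a short case split on whether $b$ lies in the relevant subtree: if it does, the two lowest common ancestors differ by exactly the edge in question, and if it does not, they coincide and the difference vanishes.

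With these in hand, I would expand $(H_{1/r}M)(a,b)=H_{1/r}(a,a)M(a,b)+\sum_{c\ne t,\,(ac)\in\mathcal{E}}H_{1/r}(a,c)M(c,b)$, substitute $H_{1/r}(a,a)=\tfrac{1}{r_{a\pi(a)}}+\sum_{c\,\text{child of }a}\tfrac{1}{r_{ac}}$ and $H_{1/r}(a,c)=-\tfrac{1}{r_{ac}}$, and regroup so that $\tfrac{1}{r_{a\pi(a)}}$ multiplies $M(a,b)-M(\pi(a),b)$ and each $\tfrac{1}{r_{ac}}$ multiplies $M(a,b)-M(c,b)$. The telescoping identities cancel the resistances and leave $(H_{1/r}M)(a,b)=\mathbf{1}[b\in\mathcal{D}_a]-\sum_{c\,\text{child of }a}\mathbf{1}[b\in\mathcal{D}_c]$. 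Since $\mathcal{D}_a$ is the disjoint union of $\{a\}$ with the sets $\mathcal{D}_c$ over children $c$ of $a$, this equals $\mathbf{1}[b=a]$, i.e.\ $H_{1/r}M=I$, which proves the claim.

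I do not expect any deep difficulty; the only thing needing care is the bookkeeping in the telescoping identities, and in particular the boundary case $\pi(a)=t$, where $M(\pi(a),b)$ is absent from the reduced product. The convention $M(t,b)=0$ handles it, since then the first identity reads $M(a,b)=r_{at}\mathbf{1}[b\in\mathcal{D}_a]=R(a)\mathbf{1}[b\in\mathcal{D}_a]$, which is correct because the only ancestor of $a$ is $t$. (An alternative, shorter route is to factor $H_{1/r}=\mathcal{B}\,\mathrm{diag}(1/r_e)\,\mathcal{B}^{\top}$ with $\mathcal{B}$ the reduced incidence matrix, which for a tree is square and invertible with $(\mathcal{B}^{-1})_{e,a}=\pm\mathbf{1}[e\in\mathcal{P}_{at}]$, and then read off $H_{1/r}^{-1}(a,b)=\sum_{e}r_e(\mathcal{B}^{-1})_{e,a}(\mathcal{B}^{-1})_{e,b}=\sum_{e\in\mathcal{P}_{at}\cap\mathcal{P}_{bt}}r_e$; but establishing the form of $\mathcal{B}^{-1}$ is itself a triangularity argument of similar length.)
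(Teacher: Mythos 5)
Your proposal is correct and complete. Note that the paper itself offers no proof of this lemma -- it is imported verbatim from \cite{68Resh,dekatcns} -- so there is no in-paper argument to compare against; what you have written is a self-contained verification that would stand on its own. The two telescoping identities are the right bookkeeping: the case splits on whether $b$ lies in the relevant subtree are all handled correctly (using that $\mathcal{D}_a$ contains $a$ itself, consistent with the paper's definition, and that $\mathcal{D}_a=\{a\}\sqcup\bigsqcup_{c}\mathcal{D}_c$ over children $c$), and the convention $M(t,b)=0$ cleanly absorbs the boundary case where $a$ is a child of the root, since then the lowest common ancestor of $a$ and any $b\notin\mathcal{D}_a$ is $t$ and $M(a,b)=0$ as required. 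The preliminary positive-definiteness argument is sound (it needs $r_{cd}>0$, which holds for resistances) though strictly redundant once $H_{1/r}M=I$ is established for a square $M$. Your parenthetical alternative via $H_{1/r}=\mathcal{B}\,\mathrm{diag}(1/r_e)\,\mathcal{B}^{\top}$ with the reduced incidence matrix is the route most commonly taken in the cited literature; it is equivalent in content, trading your telescoping cancellation for the triangular structure of $\mathcal{B}^{-1}$. Either version suffices.
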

Thus, the $(a,b)^{th}$ entry in $H^{-1}_{1/r}$ is equal to the sum of line resistances on edges common to paths from node $a$ and $b$ to the root. As ${\cal P}_{at} \subset {\cal P}_{bt}$ for a parent-child pair $a,b$, Eq.~(\ref{Hrxinv}) gives the following for a parent node $a$ and a child node $b$ for all $c$.
\begin{align}
{\huge H}_{1/r}^{-1}(a,c)-{\huge H}_{1/r}^{-1}(b,c) =\begin{cases}r_{ab} & \text{if $c$ is a descendant of $b$}\\
0 & \text{otherwise} \end{cases}.
\end{align}
\vspace{-0.3in}
\subsection{Assumptions on Distribution Grid Topology}
We now present two assumptions on the distribution grid topology and statistics of power injections, required for the correctness of our topology/impedance learning algorithms.
\begin{assumption}\label{asm:deg3}
All missing nodes have degrees at least 3.
\end{assumption}
Assumption \ref{asm:deg3} implies that each missing node has at least two children and that all leaf nodes are observed. In its absence, the system is thus under-determined and multiple configurations satisfy the available measurements (see \cite{dekasmartgridcomm}).
In later sections, we discuss topology learning without Assumption \ref{asm:deg3}. Assumption \ref{asm:deg3} is akin to assumptions for recovery in graphical models \cite{pearl}. In addition, we assume that the complex power injections at different nodes are uncorrelated.
\begin{assumption}\label{asm:indep}
Injections at all non-substations nodes are modeled as $PQ$ loads with $\mathbb{E}[p_ap_b]=\mathbb{E}[q_aq_b]=\mathbb{E}[p_aq_b]=0~~\forall a\ne b$.
\end{assumption}
As considered in prior studies \cite{bolognani2013identification, dekatcns}, Assumption \ref{asm:indep} is well-justified over sufficiently short time intervals while considering deviations of injections at end-users. For intermediate nodes that are involved in the separation of power into downstream lines, leakage, or device losses contribute to the net power injection and are independent of other nodes. In particular, note that Assumption \ref{asm:indep} does not restrict the class of distributions that can be used to model individual node’s power injection and applies for both positive and negative nodal injections. We discuss techniques to extend our work to cases with correlated user injection profiles and multi-phase systems in future works.

\vspace{-0.1in}
\section{Topology Learning Algorithm with Voltage Samples}\label{sec:firstalg}
In this section, we discuss properties of voltages and injections at leaf nodes, and utilize them to design the first topology learning algorithm, Algorithm \ref{alg:learningdeep}, introduced in \cite{dekasmartgridcomm}. Algorithm \ref{alg:learningdeep} utilizes voltage samples and injection statistics at all leaf nodes and identifies all operational edges from an over-complete set of permissible edges $\bar{\mathcal E}$ with known impedances. Using the LC-PF Eq.~(\ref{eq:lcpf2}), one can write the second moments of nodal voltages with that of nodal injections as:
\begin{align}\label{eq:expand}
\mathbb{E}[vv^T]&=H^{-1}_{1/r}\mathbb{E}[pp^T]H^{-1}_{1/r}+H^{-1}_{1/x}\mathbb{E}[qq^T]H^{-1}_{1/x} \nonumber\\
&~+H^{-1}_{1/r}\mathbb{E}[pq^T]H^{-1}_{1/x}+H^{-1}_{1/x}\mathbb{E}[qp^T]H^{-1}_{1/r}
\end{align}
Note that $\mathbb{E}[pp^T],\mathbb{E}[qq^T],\mathbb{E}[qp^T],\mathbb{E}[pq^T]$ are diagonal matrices from Assumption \ref{asm:indep}. For the notational convenience, we first define the variance $\phi_{ab}$ of the difference of voltage measurements at $a,b\in\mathcal V$ as follow:
\begin{equation}\label{eq:phi}
\phi_{ab}=\mathbb{E}[(v_a-v_b)^2]=\mathbb{E}[v_a^2]+\mathbb{E}[v_b^2]-2\mathbb{E}[v_av_b].
\end{equation}
Note that $\phi_{ab}$ can be estimated for all leaf pairs in $\mathcal L$ using the observed voltage samples. Under LC-PF model, the following result holds:
\begin{theorem}\label{thm:phiab}
Let $a,b\in\mathcal L$ have a common parent $k_1$. Then
\begin{align}
\phi_{ab}&=r_{ak_1}^2\mathbb{E}[p_a^2]+x_{ak_1}^2\mathbb{E}[q_a^2]+2r_{ak_1}x_{ak_1}\mathbb{E}[p_aq_a]\notag\\
&\,+r_{k_1b}^2\mathbb{E}[p_b^2]+x_{k_1b}^2\mathbb{E}[q_b^2]+2r_{k_1b}x_{k_1b}\mathbb{E}[p_bq_b].\label{eq:phiab}
\end{align}
\end{theorem}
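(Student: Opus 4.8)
The plan is to reduce the random variable $v_a-v_b$ to a short linear combination of nodal injections using the structure of $H^{-1}_{1/r}$ and $H^{-1}_{1/x}$ described by Lemma \ref{lemma1}, and then expand $\mathbb{E}[(v_a-v_b)^2]$ term by term, invoking Assumption \ref{asm:indep} to discard all cross terms between distinct nodes.

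First I would use the matrix form Eq.~(\ref{eq:lcpf2}), $v=H^{-1}_{1/r}p+H^{-1}_{1/x}q$, to write $v_a-v_b=\sum_c\big(H^{-1}_{1/r}(a,c)-H^{-1}_{1/r}(b,c)\big)p_c+\sum_c\big(H^{-1}_{1/x}(a,c)-H^{-1}_{1/x}(b,c)\big)q_c$. The crux is to identify the support of the two difference vectors. Since $a$ and $b$ are leaves with common parent $k_1$, the parent--child identity displayed just after Lemma \ref{lemma1}, applied to the pairs $(k_1,a)$ and $(k_1,b)$, gives $H^{-1}_{1/r}(a,c)=H^{-1}_{1/r}(k_1,c)-r_{ak_1}$ when $c$ is a descendant of $a$ and $H^{-1}_{1/r}(a,c)=H^{-1}_{1/r}(k_1,c)$ otherwise, and similarly for $b$; because $a$ is a leaf its only descendant is $a$ itself, and likewise for $b$. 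Subtracting, $H^{-1}_{1/r}(a,c)-H^{-1}_{1/r}(b,c)$ equals $-r_{ak_1}$ at $c=a$, equals $r_{k_1b}$ at $c=b$, and vanishes for every other $c$; the same statement holds verbatim with $r$ replaced by $x$ for $H^{-1}_{1/x}$. Therefore $v_a-v_b=-r_{ak_1}p_a+r_{k_1b}p_b-x_{ak_1}q_a+x_{k_1b}q_b$.

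Finally I would square this four-term expression and take expectations. Every cross term pairs injections at the distinct nodes $a$ and $b$ -- namely $p_ap_b$, $q_aq_b$, $p_aq_b$, and $p_bq_a$ -- and each has zero expectation by Assumption \ref{asm:indep}; what survives are the four squared terms together with the two within-node products $p_aq_a$ and $p_bq_b$, appearing with coefficients $2r_{ak_1}x_{ak_1}$ and $2r_{k_1b}x_{k_1b}$ respectively, which is exactly Eq.~(\ref{eq:phiab}). The only nontrivial point -- and the natural place for a slip -- is the support computation for the difference vectors: one must use that $a$ and $b$ are leaves so that $H^{-1}_{1/r}(a,\cdot)$ and $H^{-1}_{1/r}(b,\cdot)$ agree on every coordinate outside $\{a,b\}$; once that is in hand, the remaining algebra is routine.
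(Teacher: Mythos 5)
Your proposal is correct and follows essentially the same route the paper indicates: expanding $\phi_{ab}$ via the LC-PF relation Eq.~(\ref{eq:lcpf2}) and the structure of $H^{-1}_{1/r}$, $H^{-1}_{1/x}$ from Lemma \ref{lemma1}, then killing cross terms with Assumption \ref{asm:indep}. Your organization---first collapsing $v_a-v_b$ to the four-term combination supported on $\{a,b\}$ before squaring---is a clean way to carry out the expansion of Eq.~(\ref{eq:expand}), and the sign you assign to the parent--child increment is immaterial since the expression is squared.
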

The derivation follows by expanding $\phi_{ab}$ using Eq.~\eqref{eq:expand} and using Lemma \ref{lemma1}. Note that aside from pathological cases, Theorem \ref{thm:phiab} is satisfied only by the true parent $k_1$ of nodes $a,b$. Thus, the equality can be used to identify the true parent of sibling leaves (see Fig. \ref{fig:alg1}, \ref{fig:alg2} for an example). The next result involves $\phi$ values at three leaf nodes in $\mathcal L$. 

\begin{theorem}[Theorem 2 of \cite{dekasmartgridcomm}]\label{thm:phiacbc}
Let $a,b\in\mathcal L$ have a common parent $k_1$. Consider $c\in\mathcal L$ such that $c,k_1\in\mathcal D_{k_2}$ and $\mathcal P_{k_1t}\cap\mathcal P_{ct}=\mathcal P_{k_2t}$ for some intermediate node $k_2$ (see Fig. \ref{fig:alg6} for an example). Then
\begin{align}
\phi_{ac}-&\phi_{bc}=\mathbb{E}[p_a^2]((r_a^{k_2})^2-(r_{k_1}^{k_2})^2)+\mathbb{E}[q_a^2]((x_a^{k_2})^2-(x_{k_1}^{k_2})^2)\notag\\
&+2\mathbb{E}[p_aq_a](r_a^{k_2}x_a^{k_2}-r_{k_1}^{k_2}x_{k_1}^{k_2})-\mathbb{E}[p_b^2]((r_b^{k_2})^2-(r_{k_1}^{k_2})^2)\notag\\
&-\mathbb{E}[q_b^2]((x_b^{k_2})^2-(x_{k_1}^{k_2})^2)-2\mathbb{E}[p_bq_b](r_b^{k_2}x_b^{k_2}-r_{k_1}^{k_2}x_{k_1}^{k_2})\label{eq:phiacbc}
\end{align}
where $r_d^e=\sum_{(fg)\in\mathcal P_{de}}r_{fg},x_d^{e}=\sum_{(fg)\in\mathcal P_{de}}x_{fg}$.
\end{theorem}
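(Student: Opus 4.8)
The plan is to start from the difference-of-squares identity
\[
\phi_{ac}-\phi_{bc}=\mathbb{E}\big[(v_a-v_c)^2-(v_b-v_c)^2\big]=\mathbb{E}\big[(v_a-v_b)(v_a+v_b-2v_c)\big],
\]
and to exploit the fact that $v_a-v_b$ involves only four injections. Indeed, applying the identity displayed just after Lemma~\ref{lemma1} to the parent--child pairs $(k_1,a)$ and $(k_1,b)$ gives $H^{-1}_{1/r}(a,n)=H^{-1}_{1/r}(b,n)=H^{-1}_{1/r}(k_1,n)$ for every node $n\notin\{a,b\}$, and likewise for $H^{-1}_{1/x}$; together with Eq.~\eqref{eq:lcpf2} this yields
\[
v_a-v_b=r_{ak_1}p_a-r_{k_1b}p_b+x_{ak_1}q_a-x_{k_1b}q_b
\]
(squaring this relation and invoking Assumption~\ref{asm:indep} is exactly how Theorem~\ref{thm:phiab} is obtained). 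By Assumption~\ref{asm:indep} the injections $p_a,q_a,p_b,q_b$ are uncorrelated with every other injection, so in the expectation above only the terms carrying $\mathbb{E}[p_a^2],\mathbb{E}[q_a^2],\mathbb{E}[p_aq_a]$ and $\mathbb{E}[p_b^2],\mathbb{E}[q_b^2],\mathbb{E}[p_bq_b]$ survive, each multiplied by the product of the coefficient of that injection in $v_a-v_b$ and its coefficient in $v_a+v_b-2v_c$.

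What then remains is to read off the coefficients of $p_a,q_a,p_b,q_b$ in $v_a+v_b-2v_c$ using Lemma~\ref{lemma1}. Taking $c\notin\{a,b\}$ (the configuration relevant to the algorithm; cf.\ Fig.~\ref{fig:alg6}), the edge $(ak_1)$ is not on $\mathcal P_{ct}$, so $\mathcal P_{at}\cap\mathcal P_{ct}=\mathcal P_{k_1t}\cap\mathcal P_{ct}=\mathcal P_{k_2t}$ by hypothesis, and similarly $\mathcal P_{bt}\cap\mathcal P_{ct}=\mathcal P_{k_2t}$; hence $H^{-1}_{1/r}(c,a)=H^{-1}_{1/r}(c,b)=r_{k_2}^{t}$ and $H^{-1}_{1/x}(c,a)=H^{-1}_{1/x}(c,b)=x_{k_2}^{t}$, writing $r_{k_2}^{t}=\sum_{(fg)\in\mathcal P_{k_2t}}r_{fg}$. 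Combining this with $H^{-1}_{1/r}(a,a)=r_a^{t}$, $H^{-1}_{1/r}(b,a)=H^{-1}_{1/r}(k_1,a)=r_{k_1}^{t}$ and the telescoping relations $r_a^{t}-r_{k_2}^{t}=r_a^{k_2}$, $r_{k_1}^{t}-r_{k_2}^{t}=r_{k_1}^{k_2}$ (valid since $a,k_1\in\mathcal D_{k_2}$), the coefficient of $p_a$ in $v_a+v_b-2v_c$ equals $r_a^{k_2}+r_{k_1}^{k_2}$, whereas its coefficient in $v_a-v_b$ equals $r_{ak_1}=r_a^{k_2}-r_{k_1}^{k_2}$; the analogous statements hold with $x$ replacing $r$ for $q_a$, and with $a$ replaced by $b$ for $p_b,q_b$, the latter picking up an extra minus sign since $v_a-v_b\mapsto-(v_a-v_b)$ under $a\leftrightarrow b$ while $v_a+v_b-2v_c$ is symmetric.

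Assembling the six surviving terms then gives Eq.~\eqref{eq:phiacbc}: the $\mathbb{E}[p_a^2]$ and $\mathbb{E}[q_a^2]$ coefficients come out as $(r_a^{k_2})^2-(r_{k_1}^{k_2})^2$ and $(x_a^{k_2})^2-(x_{k_1}^{k_2})^2$ through $(\alpha-\alpha')(\alpha+\alpha')=\alpha^2-(\alpha')^2$; the $\mathbb{E}[p_aq_a]$ coefficient equals $(r_a^{k_2}-r_{k_1}^{k_2})(x_a^{k_2}+x_{k_1}^{k_2})+(x_a^{k_2}-x_{k_1}^{k_2})(r_a^{k_2}+r_{k_1}^{k_2})=2(r_a^{k_2}x_a^{k_2}-r_{k_1}^{k_2}x_{k_1}^{k_2})$; and the three $b$-terms are identical with $a\to b$ and an overall sign flip. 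I expect the only delicate part to be the tree bookkeeping --- locating the pairwise least common ancestors correctly so that Lemma~\ref{lemma1}'s path-intersection sums collapse to the distances $r_\cdot^\cdot,x_\cdot^\cdot$, in particular verifying $\mathcal P_{at}\cap\mathcal P_{ct}=\mathcal P_{bt}\cap\mathcal P_{ct}=\mathcal P_{k_2t}$ --- together with noticing that the degenerate case $c\in\{a,b\}$, which would falsify the identity, is ruled out by the hypotheses. The remaining manipulation is routine algebra.
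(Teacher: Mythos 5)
Your proof is correct and follows the same route the paper indicates for this result (algebraic expansion of $\phi_{ac}-\phi_{bc}$ via the LC-PF relation, Lemma~\ref{lemma1} for the path-intersection entries of $H^{-1}_{1/r},H^{-1}_{1/x}$, and Assumption~\ref{asm:indep} to kill all cross terms); the paper itself only sketches this and defers details to \cite{dekasmartgridcomm}. Your factorization $\phi_{ac}-\phi_{bc}=\mathbb{E}[(v_a-v_b)(v_a+v_b-2v_c)]$, with $v_a-v_b$ depending only on $p_a,q_a,p_b,q_b$, is just a clean bookkeeping device for that same expansion, and the coefficient computations all check out.
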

The proof of Theorem \ref{thm:phiacbc} uses algebraic expansions of the expression for $\phi_{ac}-\phi_{bc}$ and application of Lemma \ref{lemma1}. We refer the reader to check \cite{dekasmartgridcomm} for the details. Crucially, Theorem \ref{thm:phiacbc} enables us to identify edges between missing intermediate nodes. For example, consider the case in Fig. \ref{fig:alg3}. If edges from node $k_1$ to leaves $a,b$ have already been discovered, then one can assert the existence of edge $(k_1k_2)$ to $k_1$'s parent $k_2$ by checking if Eq.~\eqref{eq:phiacbc} holds. We use this result in our algorithm to learn edges iteratively from parents of leaves to the root. However, it needs to be mentioned that the right side of Eq.~\eqref{eq:phiacbc} does not depend on the path to a node $c$. Thus, $c$ can only be identified as a descendant of $k_2$. Its true location cannot be identified, in particular, if a leaf node $c$ does not have another leaf node as a sibling. Locations of such leaf nodes are determined once the rest of the grid is recovered. We arrange the identified intermediate edges in reverse order and check for Eq.~(\ref{eq:phiacbc}) to arrive at the true parent of an unidentified leaf node $c$. The post-order node traversal \cite{Cormen2001} is necessary to ensure that the true parent of $c$ is checked before other intermediate nodes on $\mathcal{P}_{ct}$. The steps are outlined in Algorithm \ref{alg:learningdeep}.

Now, we briefly explain Algorithm \ref{alg:learningdeep}, which takes voltage samples and injection statistics at leaf nodes and outputs the set of operational edges $\mathcal{E}$ from the set of input permissible edges $\bar{\mathcal{E}}$ with known impedances. In Steps 5-11, Algorithm \ref{alg:learningdeep} identifies sibling relationships of leaf nodes and find their parent using Theorem \ref{thm:phiab}. In Steps 12-24, the algorithm identifies edges between missing nodes using Theorem \ref{thm:phiacbc}, as explained in the example in the last paragraph. Similarly, in Steps 26-33, the algorithm finds parents of leaf nodes without sibling leaf nodes. Fig.~\ref{fig:alg} illustrates Algorithm \ref{alg:learningdeep} step by step.
\begin{figure*}
\centering
 \begin{subfigure}{0.12\textwidth}
 \includegraphics[width=\textwidth]{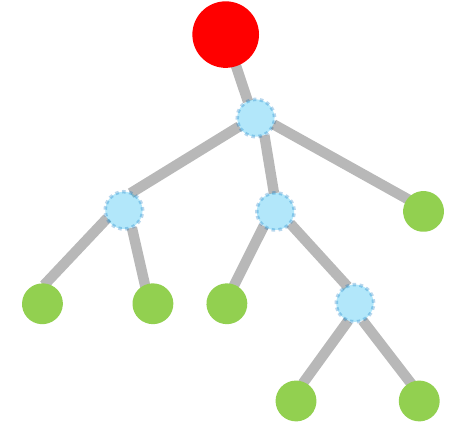}
 \caption{}
 \label{fig:alg1}
 \end{subfigure}
 ~
 \begin{subfigure}{0.12\textwidth}
 \includegraphics[width=\textwidth]{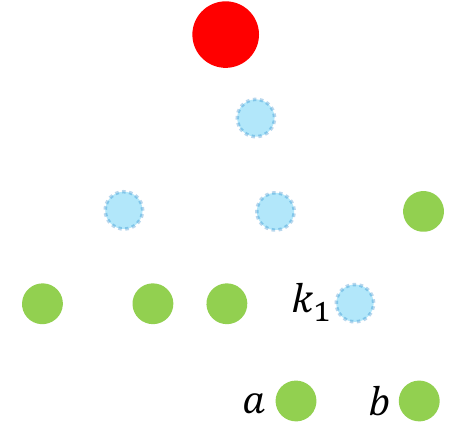}
 \caption{}
 \label{fig:alg2}
 \end{subfigure}
 ~
 \begin{subfigure}{0.12\textwidth}
 \includegraphics[width=\textwidth]{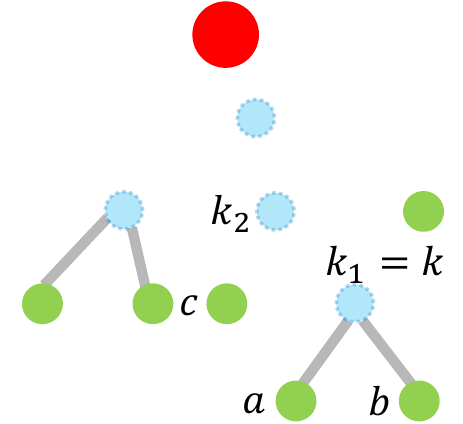}
 \caption{}
 \label{fig:alg3}
 \end{subfigure}
 ~
 \begin{subfigure}{0.12\textwidth}
 \includegraphics[width=\textwidth]{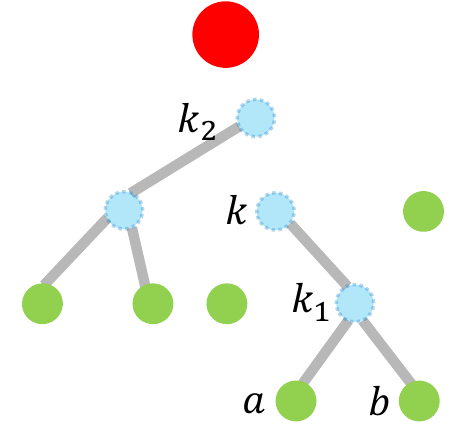}
 \caption{}
 \label{fig:alg4}
 \end{subfigure}
 ~
 \begin{subfigure}{0.12\textwidth}
 \includegraphics[width=\textwidth]{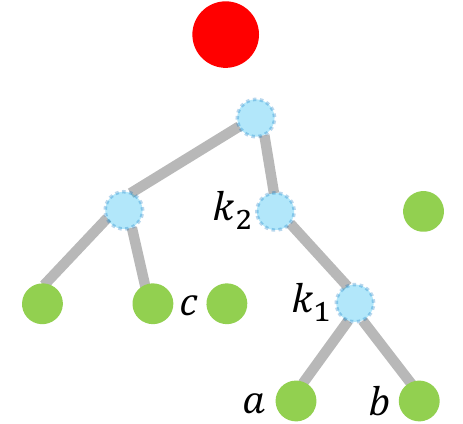}
 \caption{}
 \label{fig:alg5}
 \end{subfigure}
 ~
 \begin{subfigure}{0.12\textwidth}
 \includegraphics[width=\textwidth]{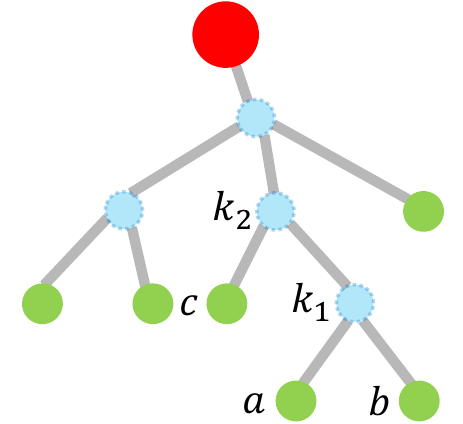}
 \caption{}
 \label{fig:alg6}
 \end{subfigure}
\caption{Illustration of Algorithm \ref{alg:learningdeep} (a) an original topology with missing blue nodes and observed green nodes (b \& c) finding parent $k_1$ of sibling leaf nodes $a,c$ (Steps $5-11$) (d \& e) iterative recovery of missing parent and grandparent of $k_1$ (Steps $12-24$) (f) finding parents of leaf nodes without sibling leaves (Steps $25-36$) to get recovered topology
}\label{fig:alg}
\vspace{-0.1in}
\end{figure*}

\begin{algorithm}\caption{Topology Learning Algorithm with Voltage Magnitude Samples}\label{alg:learningdeep}
\begin{algorithmic}[1]
\State {\bf Input:} 
$\mathcal L,\mathcal M=\mathcal V\setminus\mathcal L$, $\{\mathbb{E}[p_a^2],\mathbb{E}[p_aq_a],\mathbb{E}[q_a^2]:a\in\mathcal L\}$, $v^1,\dots,v^m$, 
$\{r_{ab},x_{ab}:(ab)\in\bar{\mathcal E}\}$
\State {\bf Output:} $(\mathcal V,\mathcal E)$
\State {\bf Initialization:} $par_a\leftarrow\emptyset$, $des_a\leftarrow\emptyset$ for all $a\in\mathcal V$, $\mathcal E\leftarrow\emptyset$ 
\State Compute $\phi_{ab}=\mathbb{E}[(v_a-v_b)^2]$ for all $a,b\in\mathcal L$
\For{$a\in\mathcal L$}
\If{$par_a=\emptyset,\exists b\in\mathcal L,\exists k_1\in\mathcal M$ s.t. $a,b,k_1$ satisfy Eq.~(\ref{eq:phiab}) with tolerance $\tau_1$}
\State $\mathcal E\leftarrow\mathcal E\cup \{(ak_1),(bk_1)\}$
\State $par_a\leftarrow\{k_1\}$, $par_b\leftarrow\{k_1\}$, $des_{k_1}\leftarrow\{a,b\}$
\EndIf
\EndFor
\State $\mathcal L\leftarrow\{a:a\in\mathcal L,par_a=\emptyset\}$
\Do
\State $\mathcal M_1\leftarrow\{k:k\in\mathcal M,par_{k}=\emptyset,des_{k}\ne\emptyset\}$
\State $\mathcal M_2\leftarrow\{k:k\in\mathcal M,des_{k}=\emptyset\}$
\For{$k\in\mathcal M_1$ with $a,b\in des_k$}
\State $k_1\leftarrow par_a$
\If{$\exists k_2 \in \mathcal M-\mathcal M_2$, $\phi_{ac}-\phi_{bc}$ satisfies Eq.~(\ref{eq:phiacbc}) with tolerance $\tau_2$ for some $c\in des_{k_2}$}\footnotemark
\State $\mathcal E\leftarrow\mathcal E\cup\{(kk_2)\},par_k\leftarrow\{k_2\}$
\ElsIf{$\exists k_2\in\mathcal M_2$ s.t. $\phi_{ac}-\phi_{bc}$ satisfies Eq.~(\ref{eq:phiacbc}) with tolerance $\tau_2$ for some $c\in\mathcal L$}
\State $\mathcal E\leftarrow\mathcal E\cup\{(kk_2)\},par_k\leftarrow\{k_2\}$
\State $des_{k_2}\leftarrow des_k$
\EndIf
\EndFor
\doWhile{$|\{k_1:k_1\in\mathcal M_1,par_{k_1}\ne\emptyset\}|>0$}
\State Form a post-order traversal node set $\mathcal W$ using $par_a$ for all $a\in\mathcal M$ such that $des_a\ne\emptyset$
\For{$c\in\mathcal L$}
\For{$j=1$ to $|\mathcal W|$}
\State $k_2\leftarrow\mathcal W(j)$ with $a,b\in des_{k_2},k_1\leftarrow par_a$
\If{$\phi_{ac}-\phi_{bc}$ satisfies Eq.~(\ref{eq:phiacbc}) with tolerance $\tau_2$}
\State $\mathcal E\leftarrow\mathcal E\cup\{(ck_2)\},j\leftarrow|\mathcal W|,\mathcal W\leftarrow\mathcal W\setminus\{k_2\}$
\EndIf
\EndFor
\EndFor
\If{$|\mathcal M_1|=1$}
\State Join $k\in\mathcal M_1$ to root
\EndIf
\end{algorithmic}
\end{algorithm}
\footnotetext{Eq.~(\ref{eq:phiacbc}) is checked under the assumption that $(kk_2)$ exists.}

\noindent\textbf{Computational Complexity of Algorithm \ref{alg:learningdeep}:}
Algorithm \ref{alg:learningdeep} has three major parts, Steps 5-11, Steps 12-24 and Steps 26-33 where the rest part has complexity $O(|\mathcal V|^2)$ which arises from Step 4. Steps 5-11 iterate over a set $(a,b,k_1)\in\mathcal L\times\mathcal L\times\mathcal M$ where each iteration takes $O(1)$ computations.
Therefore, the complexity for steps 5-11 is $O(|\mathcal V|^3|)$. Steps 12-24 and steps 26-33 iterate over sets $(c,k_1,k_2)\in\mathcal L\times\mathcal M_1\times\mathcal M$ and $(c,k_2)\in\mathcal L\times\mathcal M$ respectively, where each iteration takes $O(1)$ computations. 
Therefore, the complexity for Steps 12-24 and 29-24 are $O(|\mathcal V|^3|)$ and $O(|\mathcal V|^2|)$, respectively. Hence, the overall computational complexity of Algorithm \ref{alg:learningdeep} is $O(|\mathcal V|^3)$.

\noindent\textbf{Modification for Finite Samples}: Note that in reality, due to finite samples, the equality relations Eqs.~(\ref{eq:phiab}, \ref{eq:phiacbc}) will not hold with equality. In that setting, we compute the relative difference between the left and right sides for either relation Eq.~(\ref{eq:phiab}) or Eq.~(\ref{eq:phiacbc}). We consider the relations to be satisfied in Algorithm \ref{alg:learningdeep} if the relative differences are respectively less than user defined tolerances $\tau_1, \tau_2$.

The main bottleneck of Algorithm \ref{alg:learningdeep} is that it requires knowledge about permissible edges and impedances which might not be available in real distribution grids. To overcome this, we propose a new algorithm in the next section, which only requires leaf node measurements for recovering the true topology and line impedances on operational edges. 

\section{Topology and Impedance Learning Algorithm with Voltage Magnitude and Power Samples}\label{sec:main}
Our algorithm, termed Algorithm \ref{alg:main}, utilizes time-stamped observations of voltage magnitudes and complex power injections at the end-nodes as the input. Our algorithm mainly utilizes the notion of additive `distance' defined as a distance over the graph, which satisfies the weighted metric property, $d(a,b)=
\sum_{(cd)\in\mathcal{P}_{ab}}d(c,d)$. We first estimate this distance between all leaf node pairs, and then utilize the recursive grouping algorithm \cite{choi2011learning} to learn the operational topology of the grid. Under Assumption \ref{asm:indep} and using the LC-PF Eq.~(\ref{eq:lcpf2}) for observed nodes $a,b$, we derive the following identity
\begin{equation}\label{eq:dist3}
\begin{split}
&\mathbb{E}[v_ap_b]=H^{-1}_{1/r}(a,b)\mathbb{E}[p_b^2]+H^{-1}_{1/x}(a,b)\mathbb{E}[p_bq_b]\\
&\mathbb{E}[v_aq_b]=H^{-1}_{1/r}(a,b)\mathbb{E}[p_bq_b]+H^{-1}_{1/x}(a,b)\mathbb{E}[q_b^2]
\end{split}
\end{equation}
where $\mathbb{E}[v_ap_b],\mathbb{E}[v_aq_b],\mathbb{E}[p_b^2],\mathbb{E}[p_bq_b],\mathbb{E}[q_b^2]$ can be computed from measurements at observed nodes $a$ and $b$. Using Eq.~(\ref{eq:dist3}), one can estimate the value of $H^{-1}_{1/r}(a,b)$ and $H^{-1}_{1/x}(a,b)$ for any observed $a,b\in\mathcal V$ unless $\mathbb{E}[p_b^2]\mathbb{E}[q_b^2]=(\mathbb{E}[p_bq_b])^2$. To avoid such pathological situations, we make the following assumption.
\begin{assumption}\label{asm:bound}
There exists a constant $D>0$ such that for all node $a\in\mathcal V$, $\big|\mathbb{E}[p_a^2]\mathbb{E}[q_a^2]-(\mathbb{E}[p_aq_a])^2\big|\ge D.$
\end{assumption}
Once $H^{-1}_{1/r}(a,b)$ is estimated, one can derive the resistance distance (effective resistance) between observed nodes $a,b$ as
\begin{equation}\label{eq:dist4}
d_r(a,b)=\smashoperator[lr]{\sum_{(cd)\in\mathcal
P_{ab}}}r_{cd}=H^{-1}_{1/r}(a,a)+H^{-1}_{1/r}(b,b)-2H^{-1}_{1/r}(a,b)
\end{equation}
Note that for radial grids, the effective resistance is an additive distance metric between nodes $a$ and $b$ in the grid. Similarly, one can also estimate the additive reactance distance $d_x(a,b)$. Following estimation of $d_r(a,b)$ for all pairs of observed nodes, we utilize the recursive grouping algorithm (RG) \cite{choi2011learning}, which leads to consistent topology and impedance estimation of the power grid $\mathcal{G}$ under Assumption \ref{asm:deg3}.

\begin{figure*}
\centering
 \begin{subfigure}{0.12\textwidth}
 \includegraphics[width=\textwidth]{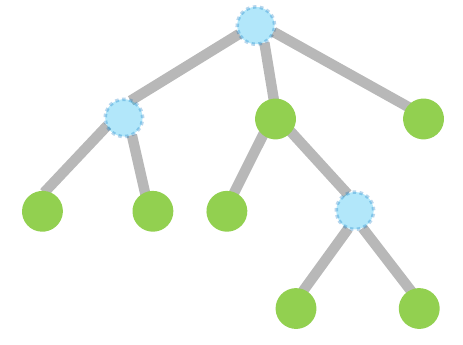}
 \caption{}
 \label{fig:rg1}
 \end{subfigure}
 ~
 \begin{subfigure}{0.12\textwidth}
 \includegraphics[width=\textwidth]{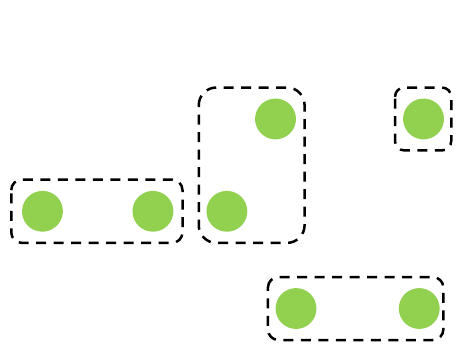}
 \caption{}
 \label{fig:rg2}
 \end{subfigure}
 ~
 \begin{subfigure}{0.12\textwidth}
 \includegraphics[width=\textwidth]{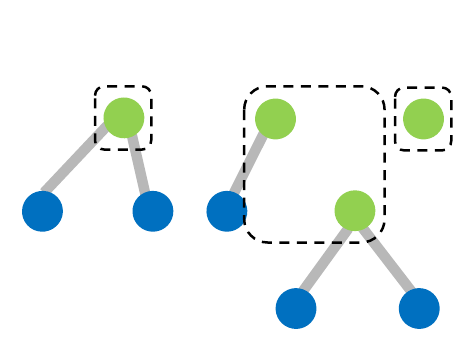}
 \caption{}
 \label{fig:rg3}
 \end{subfigure}
 ~
 \begin{subfigure}{0.12\textwidth}
 \includegraphics[width=\textwidth]{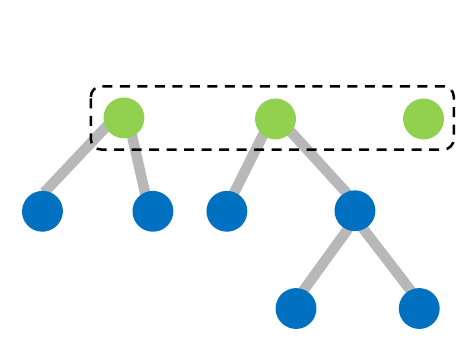}
 \caption{}
 \label{fig:rg4}
 \end{subfigure}
 ~
 \begin{subfigure}{0.12\textwidth}
 \includegraphics[width=\textwidth]{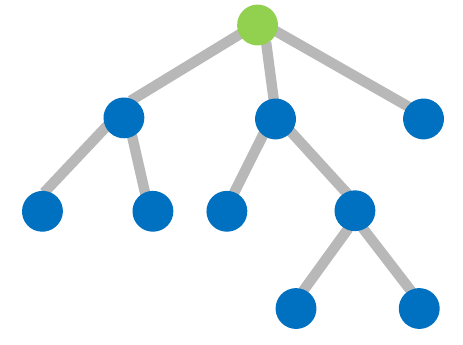}
 \caption{}
 \label{fig:rg5}
 \end{subfigure}
 ~
 \begin{subfigure}{0.12\textwidth}
 \includegraphics[width=\textwidth]{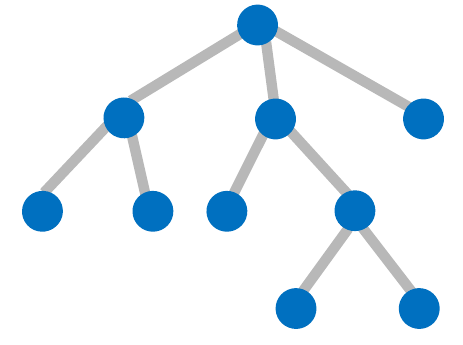}
 \caption{}
 \label{fig:rg6}
 \end{subfigure}
 \caption{Illustration of Algorithm \ref{alg:rg} (a) an original topology with blue missing nodes and green observed nodes ($\mathcal O$). Note that Algorithm \ref{alg:rg} works both with or without internal observed nodes. (b) a partition $\Pi$ (dashed boxes) of $\mathcal O$ generated by a node, its siblings and parent in the first iteration of RG (c) addition of edges and missing nodes and updated $\mathcal O$ after the first iteration of RG and partitions in the second iteration (d) updated $\mathcal O$ after the second iteration and partition of $\mathcal O$ in the third iteration (e) result after the third iteration of RG (f) the recovered topology
 }
 \label{fig:rg}
 \vspace{-0.1in}
\end{figure*}
\subsection{Recursive Grouping Algorithm}
Here, we introduce the recursive grouping (RG) algorithm that recovers the true radial topology given any additive distance $d(\cdot,\cdot)$ between all leaf nodes.
Let us first assume that the exact values of $d(\cdot,\cdot)$ are known for all pairs of observed nodes. Under this assumption, RG utilizes the following lemma \cite{choi2011learning} for the topology and impedance recovery.
We will extend this to the noisy $d(\cdot,\cdot)$ case in Section \ref{sec:finite}.
We note that `parent', `child' in algorithms and lemmas in this section is not related to the substation node as defined in Section \ref{sec:preliminary}.
\begin{lemma}[Lemma 4 of \cite{choi2011learning}]\label{lem:dist}
For $\Phi_{abc}:=d(a,c)-d(b,c)$, the following relation holds:
\begin{itemize}
 \item[a)] $\Phi_{abc}=d(a,b)$ for all $c\in\mathcal V\setminus\{a,b\}$ if and only if $a$ is a leaf node and $b$ is its parent.
 \item[b)] $-d(a,b)\le \Phi_{abc}=\Phi_{abc^\prime}\le d(a,b)$ for all $c,c^\prime\in\mathcal V\setminus\{a,b\}$ if and only if $a,b$ are leaf nodes with common parent, i.e., they belong to the same group of siblings.
\end{itemize}
\end{lemma}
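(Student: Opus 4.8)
The plan is to exploit the additive (tree-metric) structure of $d(\cdot,\cdot)$ through the \emph{meeting point} of a node triple. For distinct $a,b,c$ in the radial grid, let $m=m(a,b,c)$ be the unique node lying on all three paths $\mathcal P_{ab}$, $\mathcal P_{bc}$, $\mathcal P_{ac}$; equivalently, $m$ is the node of $\mathcal P_{ab}$ closest to $c$, i.e.\ the projection of $c$ onto $\mathcal P_{ab}$. Additivity of $d$ gives $d(a,c)=d(a,m)+d(m,c)$ and $d(b,c)=d(b,m)+d(m,c)$, so $\Phi_{abc}=d(a,m)-d(b,m)$, and since $m\in\mathcal P_{ab}$ we also have $d(a,m)+d(m,b)=d(a,b)$, whence $\Phi_{abc}=2\,d(a,m)-d(a,b)$. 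I would record two consequences up front: (i) $-d(a,b)\le\Phi_{abc}\le d(a,b)$ holds \emph{automatically} for every $c$ (since $0\le d(a,m)\le d(a,b)$), so in part (b) the inequalities carry no information and the real content is the equality $\Phi_{abc}=\Phi_{abc'}$; and (ii) since the line resistances are strictly positive, $m\mapsto 2\,d(a,m)-d(a,b)$ is strictly monotone along $\mathcal P_{ab}$, so $\Phi_{abc}=\Phi_{abc'}$ if and only if $c$ and $c'$ project onto the same vertex of $\mathcal P_{ab}$.

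For part (a) I would argue: if $a$ is a leaf with parent $b$, then $\mathcal P_{ab}$ is a single edge and every $c\ne a,b$ projects onto $b$, so $m=b$ and $\Phi_{abc}=d(a,b)-0=d(a,b)$. Conversely, if $\Phi_{abc}=d(a,b)$ for all $c$, then $2\,d(a,m(a,b,c))=2\,d(a,b)$, so $d(a,m(a,b,c))=d(a,b)$, which by strict positivity forces $m(a,b,c)=b$ for every $c$; hence deleting $b$ separates $a$ from every other node, so $b$ is the only neighbour of $a$, i.e.\ $a$ is a leaf and $b$ is its parent.

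For part (b), consequence (ii) reduces the hypothesis to the statement that every $c\in\mathcal V\setminus\{a,b\}$ projects onto one common vertex $m^\star$ of $\mathcal P_{ab}$. If $a,b$ are leaves with common parent $k$, then $k$ is the unique internal vertex of $\mathcal P_{ab}$ and every other $c$ projects onto $k$, so $\Phi_{abc}=d(a,k)-d(b,k)$ is the same for all $c$, and it lies strictly inside $[-d(a,b),d(a,b)]$ because $d(a,k),d(b,k)>0$. Conversely, assume all $c\ne a,b$ share the projection $m^\star$. Any internal vertex of $\mathcal P_{ab}$ is a node distinct from $a,b$ that projects onto itself, so $\mathcal P_{ab}$ has at most one internal vertex; if it has exactly one, say $w$, then $m^\star=w$, and requiring every remaining $c$ to also project onto $w$ forces the two $\mathcal P_{ab}$-edges at $w$ to isolate $a$ and $b$ respectively, i.e.\ $a,b$ are leaves with common parent $w$. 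If instead $\mathcal P_{ab}$ is a single edge, then $m^\star\in\{a,b\}$, giving $\Phi_{abc}\equiv d(a,b)$ or $\Phi_{abc}\equiv -d(a,b)$ --- exactly the parent--leaf configurations of part (a), which are excluded under the strict reading of (b) (equivalently, these pairs are filtered out by testing (a) first in RG).

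The additivity bookkeeping and both ``if'' directions are routine once the identity $\Phi_{abc}=2\,d(a,m)-d(a,b)$ is in hand. I expect the main obstacle to be the converse of (b): turning the analytic statement ``$\Phi_{abc}$ is constant in $c$'' into the combinatorial statement that $\mathcal P_{ab}$ has a single vertex onto which all off-path nodes project, and then separating the genuine sibling case from the degenerate parent--leaf boundary case --- which is precisely where strict positivity of the edge weights and the prior resolution of part (a) enter.
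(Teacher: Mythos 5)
Your proof is correct. Note that the paper does not prove this lemma at all --- it imports it verbatim as Lemma~4 of the cited reference (Choi et al.) and relies on that citation --- so there is no in-paper argument to compare against; your meeting-point derivation is essentially the standard proof of the cited result. The key identity $\Phi_{abc}=2\,d(a,m(a,b,c))-d(a,b)$, together with strict positivity of the edge weights making $m\mapsto 2\,d(a,m)-d(a,b)$ injective along $\mathcal P_{ab}$, correctly reduces both directions to the combinatorial statement about where off-path nodes project, and your case analysis for the converse of (b) (at most one internal vertex of $\mathcal P_{ab}$, then forcing $a,b$ to be leaves) is sound. You also correctly identify the one genuine imprecision in the statement as written: with the non-strict bounds $-d(a,b)\le\Phi_{abc}\le d(a,b)$, a leaf--parent pair from part (a) also satisfies the condition of part (b) with $\Phi_{abc}\equiv\pm d(a,b)$, so the ``only if'' of (b) holds only under strict inequalities (as in the original reference) or under the convention that the parent--child test of (a) is applied first, which is exactly how the recursive grouping algorithm uses it. The only gloss worth adding is that the ``parent'' in this lemma is simply a designated neighbour in the latent-tree sense (the paper flags this explicitly), so concluding ``$b$ is the only neighbour of $a$'' in your converse of (a) is indeed the full claim.
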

Using Lemma \ref{lem:dist} a), the parent-child relationships for a set of observed nodes $\mathcal O$ can be identified. Similarly, using Lemma \ref{lem:dist} b), sibling groups can be identified. 

The formal description of RG is given in Algorithm \ref{alg:rg}. The input of RG is a set of observed nodes $\mathcal O\subset\mathcal V$ and the additive distance $d(a,b)$ for all $a,b\in\mathcal O$. Now, we discuss the working of RG steps through an illustrative example given in Fig. \ref{fig:rg}, where green nodes represent $\mathcal O$. 
First, RG finds groups of siblings and their parents using Lemma \ref{lem:dist}, as shown in Fig. \ref{fig:rg2}. Edges are added between all identified parent-child pairs.
For identified siblings without an observed parent, a parent node is inserted and connected to its children, as shown in Fig. \ref{fig:rg3}. $d(\cdot,\cdot)$ is updated for the newly added parents using the fact that distances are additive. For siblings $a,b\in\mathcal O$ and their newly added parent $h$, the distances $d(a,h)$ and $d(c,h)$ for any $c \in \mathcal{O}$ are calculated by
\begin{align}
d(a,h)&=\frac{1}{2}(d(a,b)+\Phi_{abc}), \text{~any $c\in\mathcal O$}\label{eq:dist1}\\
d(c,h)&=d(a,c)-d(a,h)\label{eq:dist2}
\end{align}
Finally, RG updates $\mathcal O$ with newly added parents and nodes without established parent or child relations illustrated by green nodes in Fig. \ref{fig:rg3}. The process is iterated, and new edges are added unless $|\mathcal O|\le 2$, which applies when an edge can be added to remaining vertices or when a single vertex is left.
Fig. \ref{fig:rg4}-\ref{fig:rg6} illustrates iterations of the RG after the first one. 

\begin{algorithm}\caption{Recursive Grouping Algorithm ($\mathtt{RG})$}\label{alg:rg}
\begin{algorithmic}[1]
\State Input: $\mathcal O$, $\{d(a,b):a,b\in\mathcal O\}$
\State Output: $(\mathcal V,\mathcal E)$, $\{d(a,b):a,b\in\mathcal V\}$
\State Initialization: $\mathcal V=\mathcal O, \mathcal E=\emptyset$
\While{$|\mathcal O|>2$}
\State $\mathcal O_{NEW}\leftarrow\emptyset$.
\State Compute $\Phi_{abc}=d(a,c)-d(b,c)$ for all $a,b,c\in \mathcal O$.
\State Find a coarsest partition $\Pi$ of $\mathcal O$ such that any two nodes in $S\in\Pi$ are either leaves and sibling, or a parent and a leaf child.\footnotemark
\For{$S\in\Pi$}
\If{$|S|=1$}
\State $\mathcal O_{NEW}\leftarrow \mathcal O_{NEW}\cup S$.
\ElsIf{a parent $p_S\in S$ exists}
\State $\mathcal E\leftarrow\mathcal E\cup\big\{(p_Sa):a\in S\setminus\{p_S\}\big\}$
\State $\mathcal O_{NEW}\leftarrow \mathcal O_{NEW}\cup\{p_S\}$
\Else
\State Add a parent $h_S$ of $S$ as follows
\State $\mathcal V\leftarrow\mathcal V\cup\{h_S\}$
\State $\mathcal E\leftarrow\mathcal E\cup\big\{(h_Sa):a\in S\big\}$
\State $\mathcal O_{NEW}\leftarrow \mathcal O_{NEW}\cup\{h_S\}$
\EndIf
\EndFor
\State Update $d(\cdot,\cdot)$ for $\mathcal O_{NEW}$ using Eqs.~(\ref{eq:dist1}, \ref{eq:dist2}).
\State $\mathcal O\leftarrow \mathcal O_{NEW}$.
\EndWhile
\If{$|\mathcal O|=2$}
\State $\mathcal E\leftarrow\mathcal{E}\cup\{(ab):a,b\in\mathcal O, a\ne b\}$
\EndIf
\end{algorithmic}
\end{algorithm}
\footnotetext{$\Pi$ is a coarsest partition if for any $\Pi^\prime$ and for any $S^\prime\in\Pi^\prime$, there exists $S\in\Pi$ such that $S^\prime\subset S$. The coarsest partition $\Pi$ in Algorithm \ref{alg:rg} represents a collection of sets of siblings and their parent.}
For topology estimation in radial grids, we propose the following two stage algorithm with missing modes:
\begin{itemize}
 \item[1.]For all $a,b\in\mathcal O$, calculate ${d}_r(a,b)$ and ${d}_x(a,b)$ using Eqs.~(\ref{eq:dist3}, \ref{eq:dist4}) and second order moments.
 \item[2.]Recover missing nodes and edges using RG.
\end{itemize}
The formal statement of the algorithm is presented in Algorithm \ref{alg:main}. Note that by learning the impedances based distances, the impedance of each operational edge is jointly estimated along with the topology. This is possible due to the availability of injection samples that enable computation of the right side of Eqs.~(\ref{eq:dist3}). The previous Algorithm \ref{alg:learningdeep} used only injection statistics, which are not sufficient for impedance estimation.
Next, we briefly some extensions of Algorithm \ref{alg:main}.\\
\noindent{\bf Learning with multiple substations:} In the setting where multiple substations may exist, with each powering a subset of the buses in a radial topology, we need to first separate the observed buses into groups, one group per substation. As shown in \cite{dekatcns}, bus voltages in distribution grids under different substations are uncorrelated. Thus, the separation of buses can be done by looking at the correlation in their observed voltage magnitudes alone before running Algorithm \ref{alg:main}.\\
\noindent{\bf Learning without Assumption \ref{asm:deg3}}:
Note that RG estimates the additive distance to identify sibling nodes and then recovers their parent. If some internal node $b$ has degree 2 (its child $c$ has no sibling), then $b$ cannot be identified using Lemma \ref{lem:dist} b). Instead, if $b$'s parent $a$ has a degree $>2$, then $b$'s child $c$ will get connected to $a$. In other words, RG outputs a topology without degree 2 nodes by adding edges between their parent and child. This reduced graph is exactly the Kron-reduced model \cite{kron} derived by removing degree $2$ nodes from the grid graph. Note that the estimated line impedance of discovered edge $(ac)$ will be the sums of impedances of the two edges $(ab)$ and $(bc)$ that connect $a$ and $c$ in the original graph, where missing node $b$ has degree $2$. This is indeed the true impedance in the Kron reduced graph when missing nodes of degree $2$ are removed. Thus, our algorithm preserves the impedance in the reduced graph. Similarly, if some leaf nodes are not observed, their resistive/reactive distances to other leaves are not computed. However, it does not affect the topology learning in the remaining graph without the missing leaves.\\
\noindent\textbf{Recovering unobserved internal injection and voltages:}
Once the topology and line impedances are correctly estimated, one can also recover unobserved time-stamped voltage magnitude/phase and complex power injection samples of missing nodes by solving linear equations at each time step: LC-PF Eq.~\eqref{eq:lcpf2}. Moreover, one can also compute the second order statistics (variances of injections) of missing nodes directly using similar relations that relate the covariances Eq.~\eqref{eq:expand}. Since the recovery is performed using time-stamped samples,
it can be easily extended to the finite sample case (See Section \ref{sec:finite} and Section \ref{sec:complexity} for more information).
\begin{algorithm}[t]\caption{Topology/Impedance Learning Algorithm with Voltage and Power Samples}\label{alg:main}
\begin{algorithmic}[1]
\State Input: $\mathcal O$, $\{\mathbb{E}[v_ap_b],\mathbb{E}[v_aq_b],\mathbb{E}[p_a^2],\mathbb{E}[q_a^2],\mathbb{E}[p_aq_a]:a,b\in\mathcal O\}$
\State Output: $(\mathcal V,\mathcal E)$, $\{r_{ab},x_{ab}:(ab)\in\mathcal E\}$
\For{$a,b\in\mathcal O$}
\State $\begin{bmatrix}
 H_{1/r}^{-1}(a,b) \\ H_{1/x}^{-1}(a,b)
 \end{bmatrix}\leftarrow
 \begin{bmatrix}
 \mathbb{E}[p_b^2] & \mathbb{E}[p_bq_b] \\ \mathbb{E}[p_bq_b] & \mathbb{E}[q_b^2]
 \end{bmatrix}^{-1}\begin{bmatrix}
 \mathbb{E}[v_ap_b] \\ \mathbb{E}[v_aq_b]
 \end{bmatrix}$
 \vspace{0.03in}
\EndFor
\For{$a,b\in\mathcal O$}
\State $d_r(a,b)\leftarrow H_{1/r}^{-1}(a,a)+H_{1/r}^{-1}(b,b)-2H_{1/r}^{-1}(a,b)$
\State $d_x(a,b)\leftarrow H_{1/x}^{-1}(a,a)+H_{1/x}^{-1}(b,b)-2H_{1/x}^{-1}(a,b)$
\EndFor
\State $(\mathcal V,\mathcal E),\{d_r(a,b):a,b\in\mathcal V\}\leftarrow\mathtt{RG}(\mathcal O,\{d_r(a,b):a,b\in\mathcal O\})$
\For{$(ab)\in\mathcal E$}
\State $r_{ab}\leftarrow d_r(a,b)$,
 $x_{ab}\leftarrow d_x(a,b)$ where $d_x(a,b)$ is obtained using $(\mathcal V, \mathcal E)$
\EndFor
\end{algorithmic}
\end{algorithm}
\vspace{-0.075in}
\subsection{Recursive Grouping with Finite Samples}\label{sec:finite}
In a practical scenario, due to the finite number of injection and voltage samples, one can compute only the approximated value $\widehat d_r$ of $d_r$ rather than the exact value. In other words, the variance of the distance is nonzero. To account for it, we allow some tolerance $\varepsilon$ for finding parent-child and sibling relationships in Lemma \ref{lem:dist}. In addition, to test the relationship of $a,b$, we only use nodes that are close enough to both $a$ and $b$, i.e., nodes in $\mathcal K_{ab}$
where $\mathcal K_{ab}$ satisfies
$$\mathcal K_{ab}=\{c\in\mathcal O\setminus\{a,b\}:\widehat d_r(a,c),\widehat d_r(b,c)<\tau\}$$
for some constant $\tau$. Let us now present rules which guide the relationships of nodes using samples.
\begin{itemize}
\item[a)] Set $a$ as a parent of $b$ if $|\widehat d_r(a,b)-\widehat\Phi_{abc}|\le\varepsilon~\forall c\in\mathcal K_{ab}$.
\vspace{0.05in}
\item[b)] Set $a,b$ as siblings if $\underset{{c\in \mathcal K_{ab}}}{\max}\widehat\Phi_{abc}-\underset{{c\in \mathcal K_{ab}}}{\min}\widehat\Phi_{abc}\le\varepsilon.$ 
\end{itemize}
Update of the distance is done in a similar manner where we use averaging to mitigate the variability from finite sample sizes. For $a\in\mathcal O$ and its newly added parent $h$, we set
\begin{align*}
&\widehat d_r(a,h)=\\
&~\frac{1}{2(|\mathcal C(h)|-1)}\sum_{b\in\mathcal C(h)\setminus a}\left(\widehat d_r(a,b)+\frac{1}{|\mathcal K_{ab}|}\sum_{c\in \mathcal K_{ab}}\widehat\Phi_{abc}\right)
\end{align*}
where $\mathcal C(h)$ denotes the children set of $h$. Likewise, for $c\notin\mathcal C(h)$, we set
\begin{align*}
\widehat d_r(c,h)=\frac{1}{|\mathcal C(h)|}\sum_{a\in\mathcal C(h)}\left(\widehat d_r(a,c)-\widehat d_r(a,h)\right).
\end{align*}

\vspace{-0.075in}
\subsection{Recursive Grouping with Non-linear Power flows}
In the last section, we introduce the consistent topology and impedance learning algorithm under the LC-PF model.
However, as we are interested in simulations over samples generated by non-linear ac power flow models, there is some limitation for directly applying Algorithm \ref{alg:main} into real examples. In this section, we address these bottlenecks and propose a simple variant of Algorithm \ref{alg:main} for practical implementations.

First, Algorithm \ref{alg:rg} is very sensitive to the tolerances used for finite sample lengths. If the tolerance is too small, the algorithm outputs an error as it cannot find sibling relationships. In contrast, if the tolerance is too large, the algorithm outputs a loose topology with a small number of missing nodes, which in the worst case can result in a star topology. Second, even with an infinite number of samples, since the real model is not linear, the approximated distance does not converge to the real distance. 
This causes a serious problem in large grids as different distances may have different errors that may not be handled by a fixed tolerance. 
Third, the algorithm does not utilize $d_r$ and $d_x$ at once. 
To resolve these issues, we dynamically vary the tolerance $\varepsilon$ in our experiments as follows:
\begin{itemize}
 \item[1.] In Algorithm \ref{alg:rg}, if no parent-child relationship is updated, set $d\leftarrow d_x$ instead of $d_r$ and iterate.
 If the parent-child relationship is updated, set $d\leftarrow d_r$.
 \item[2.] If the algorithm does not find the parent-child relationship after Step 1, increase the tolerance value $\varepsilon\leftarrow \alpha\varepsilon$ ($\alpha>1$) and set $d\leftarrow d_r$.
 If the parent-child relationship is updated, reset $\varepsilon$ to the initial input value. Otherwise, go to Step 1. 
\end{itemize}
Note that this procedure uses both resistive and reactive distances to determine edges. If both fail due to a small tolerance value, the algorithm increases the tolerance to find the appropriate relationships. The possibility to consider several values of tolerance can also help in handling the non-linearity of ac power flow models. 
We note that this modified algorithm is at least good as Algorithm \ref{alg:main} under the LC-PF model.
\vspace{-0.075in}
\subsection{Sample and Computational Complexity}\label{sec:complexity}
In this section, we show that Algorithm \ref{alg:main} has $O(d|\mathcal V|^3)$ computational complexity and under mild assumptions, it has $O(|\mathcal V|\log|\mathcal V|)$ sample complexity where $d$ denotes the depth of the grid.
In the computational complexity, $|\mathcal V|^3$ follows from computing $\Phi_{abc}$ in RG and $d$ follows from the iterations of RG. The sample complexity result is stated in the following theorem where its proof is presented in Section \ref{sec:pfthm:main}.
\begin{theorem}\label{thm:main}
Suppose that a radial grid $(\mathcal V,\mathcal E)$ has a constant depth.
Under Assumptions \ref{asm:deg3}-\ref{asm:bound} and assuming the LC-PF model, if line impedances are constantly upper and lower bounded, nodal power injections are zero mean sub-Gaussian with constantly bounded sub-Gaussian parameters, and the number of samples is greater than $C|\mathcal V|\log(|\mathcal V|/\eta)$ for some constant $C$, then there exist $\varepsilon,\tau>0$ such that Algorithm \ref{alg:main} recovers the true topology with probability at least $1-\eta$.
\end{theorem}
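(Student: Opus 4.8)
The plan is to decouple a deterministic \emph{robustness} statement for the recursive grouping routine $\mathtt{RG}$ from a probabilistic \emph{concentration} statement for the estimated distances, and then reconcile the two by a suitable choice of the tolerances $\varepsilon,\tau$.

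\textbf{Step 1 (robustness of $\mathtt{RG}$).} First I would prove a purely deterministic claim: there is a constant $\gamma>0$, depending only on the (constant) depth $d$ and on the upper/lower bounds on the line resistances and reactances, such that if \emph{every} distance value handled by $\mathtt{RG}$ — the inputs $\widehat d_r(a,b)$, $a,b\in\mathcal O$, together with every $\widehat d_r(\cdot,h)$ produced by the averaging updates of Section~\ref{sec:finite} for an inserted hidden node $h$ — lies within $\gamma$ of the corresponding true additive distance, then running $\mathtt{RG}$ with any $\varepsilon\in(2\gamma,\,r_{\min}-2\gamma)$ and any $\tau>2d\,r_{\max}$ returns exactly the true pair $(\mathcal V,\mathcal E)$, and the last loop of Algorithm~\ref{alg:main} then returns the exact $r_{ab}=d_r(a,b)$ and $x_{ab}=d_x(a,b)$ on every operational edge. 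The key ingredient is Lemma~\ref{lem:dist}: for a true sibling pair $\Phi_{abc}$ is \emph{exactly} constant in $c$, whereas for a pair that is neither a sibling pair nor in parent--leaf position the quantities compared in rules (a)--(b) of Section~\ref{sec:finite} differ by at least $r_{\min}$, since an operational edge of resistance $\ge r_{\min}$ always lies strictly between them. Hence, once $\gamma$ is small compared with $r_{\min}$, the $\varepsilon$-thresholded tests accept precisely the true sibling groups and parent--leaf pairs in the current $\mathcal O$; by Assumption~\ref{asm:deg3} every hidden node has such a detectable sibling pair, so no hidden node is ``missed''. The hidden node then inserted, and the distances recomputed by the formulas of Section~\ref{sec:finite}, agree with the true quantities up to $O(\gamma)$, and one proceeds by induction over the iterations of $\mathtt{RG}$. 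Since the grid has constant depth, $\mathtt{RG}$ terminates in $O(1)$ rounds, and one averaging update turns a distance error $\le\rho$ together with a $\Phi$-error $\le2\rho$ into an error $\le\tfrac{3}{2}\rho$, so the running error stays below $\gamma$ throughout. I expect this robustness step — essentially a re-derivation of the stability behind the consistency proof of \cite{choi2011learning}, specialized to our explicit resistance/reactance distances and the constant-depth regime — to be the main obstacle.

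\textbf{Step 2 (concentration of the estimated distances).} Next I would control the empirical moments feeding Step~4 of Algorithm~\ref{alg:main}. By the matrix form Eq.~\eqref{eq:lcpf2}, $v_a=\sum_c H^{-1}_{1/r}(a,c)\,p_c+H^{-1}_{1/x}(a,c)\,q_c$ is a linear combination of the independent, zero-mean, sub-Gaussian injections whose coefficients are, by Lemma~\ref{lemma1}, sums of at most $d$ bounded impedances and hence $O(1)$; therefore $v_a$ is sub-Gaussian with variance $O(|\mathcal V|)$ — it is a weighted sum of $|\mathcal V|$ independent injections with $O(1)$ weights — and each product $v_ap_b,\,v_aq_b,\,p_b^2,\,q_b^2,\,p_bq_b$ is sub-exponential with scale parameter $O(\sqrt{|\mathcal V|})$. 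Bernstein's inequality then yields, for $\delta$ below a fixed constant and each pair $a,b\in\mathcal O$,
\begin{equation*}
\Pr\bigl[\,\bigl|\widehat{\mathbb E}[v_ap_b]-\mathbb E[v_ap_b]\bigr|>\delta\,\bigr]\;\le\;2\exp\bigl(-c\,m\,\delta^2/|\mathcal V|\bigr),
\end{equation*}
and analogously for the other four moments; a union bound over the $O(|\mathcal V|^2)$ pairs makes all of them $\delta$-accurate with probability $\ge1-\eta$ as soon as $m\ge C|\mathcal V|\log(|\mathcal V|/\eta)/\delta^2$. On this event Assumption~\ref{asm:bound} keeps the $2\times2$ matrix inverted in Step~4 well conditioned — its true determinant is $\ge D$, hence the empirical one is $\ge D/2$ — so the identities \eqref{eq:dist3} (valid by Assumption~\ref{asm:indep}) give $\widehat H^{-1}_{1/r}(a,b)$ within $C_1\delta$ of the truth, and by the effective-resistance identity \eqref{eq:dist4} so is $\widehat d_r(a,b)$ up to a constant (and $\widehat d_x(a,b)$ likewise).

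\textbf{Step 3 (conclusion).} Finally I would fix the target accuracy $\delta$ to be the constant for which $C_1\delta$, after the $O(1)$ rounds of $\tfrac{3}{2}$-type inflation from Step~1, still lies below the $\gamma$ of Step~1; this $\delta$ depends only on the depth, the impedance bounds, the sub-Gaussian parameters and the constant $D$ of Assumption~\ref{asm:bound}. With $\delta$ so fixed, the requirement of Step~2 reads $m\ge C|\mathcal V|\log(|\mathcal V|/\eta)$, as claimed; and on the probability-$(1-\eta)$ event that all initial distance estimates are within $C_1\delta$ of the true additive distances, Step~1 guarantees that $\mathtt{RG}$ — hence Algorithm~\ref{alg:main}, for the pair $(\varepsilon,\tau)$ determined by $\gamma$ — outputs the true topology, together with the correct line impedances.
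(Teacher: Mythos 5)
Your proposal is correct and follows essentially the same route as the paper: the paper's proof consists of Lemma~\ref{lem:uniformbound} (uniform $\delta$-accuracy of all $\widehat d_r(a,b)$ from $C|\mathcal V|\log(|\mathcal V|/\eta)$ samples, obtained exactly as in your Step~2 via the $O(|\mathcal V|)$ variance proxy of $v_a$, sub-exponential concentration of the products, Assumption~\ref{asm:bound} for the $2\times2$ inversion, and a union bound) combined with the observation that uniformly $\delta$-accurate distances let $\mathtt{RG}$ succeed for some $\varepsilon,\tau$. Your Step~1 actually supplies more detail on that last robustness claim than the paper does, which simply asserts it.
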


We note that the sub-Gaussian distribution in Theorem \ref{thm:main} is a broad class of light-tail distributions, including the Gaussian distribution, defined as below.
\begin{definition}
A zero mean random variable $X$ is sub-Gaussian if there exists a constant $K\ge 0$ such that $\mathbb{P}(|X|>t)\le e^{1-t^2/K^2}$ for all $t\ge 0$.
\end{definition}

\vspace{-0.075in}
\subsection{Proof of Theorem \ref{thm:main}}\label{sec:pfthm:main}
We first provide the following key lemma that the estimated resistance distances are uniformly bounded from the true distances if $C|\mathcal V|\log(|\mathcal V|/\eta)$ samples are given.
\begin{lemma}\label{lem:uniformbound}
Under assumptions in Theorem \ref{thm:main}, for any constant $\delta>0$, there exists a constant $C>0$ such that if the number of samples is greater than $C|\mathcal V|\log(|\mathcal V|/{\eta})$, then,
$|\widehat d_r(a,b)-d_r(a,b)|\le\delta$ for all $a,b\in\mathcal O$
with probability at least $1-\eta$.
\end{lemma}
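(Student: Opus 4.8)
The plan is to reduce the claim about the empirical additive distances $\widehat d_r(a,b)$ to a concentration statement about the empirical second-order moments that enter Eqs.~\eqref{eq:dist3}--\eqref{eq:dist4}, and then propagate the error through the (fixed, smooth) algebraic maps that convert those moments into $\widehat d_r$. The quantities we actually estimate from the $m$ samples are $\widehat{\mathbb{E}}[v_ap_b]$, $\widehat{\mathbb{E}}[v_aq_b]$, $\widehat{\mathbb{E}}[p_b^2]$, $\widehat{\mathbb{E}}[p_bq_b]$, $\widehat{\mathbb{E}}[q_b^2]$ for all observed pairs $a,b\in\mathcal O$. First I would establish that each such empirical moment is within $\delta'$ of its expectation with probability at least $1-\eta/|\mathcal O|^2$ once $m\ge C'|\mathcal V|\log(|\mathcal V|/\eta)$. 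This is where the sub-Gaussian hypothesis is used: under the LC-PF model Eq.~\eqref{eq:lcpf2}, each $v_a$ is a fixed linear combination (with coefficients bounded by entries of $H^{-1}_{1/r},H^{-1}_{1/x}$, hence by sums of bounded impedances over paths, hence by $O(|\mathcal V|)$ or $O(\text{depth})$) of the independent sub-Gaussian injections $p,q$; so $v_a$ is sub-Gaussian with a parameter that is polynomially bounded in $|\mathcal V|$ — and, since the depth is constant, actually bounded by a constant. Products of two sub-Gaussian variables are sub-exponential, so $v_ap_b$, $p_b^2$, etc.\ are sub-exponential with constantly bounded parameters; Bernstein's inequality then gives $\mathbb{P}(|\widehat{\mathbb{E}}[v_ap_b]-\mathbb{E}[v_ap_b]|>\delta')\le 2\exp(-cm\min(\delta'^2,\delta'))$. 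Choosing $m\ge C'|\mathcal V|\log(|\mathcal V|/\eta)$ makes this at most $\eta/(5|\mathcal O|^2)$, and a union bound over the $O(|\mathcal V|^2)$ moments (five per pair) yields the claim for all of them simultaneously with probability $\ge 1-\eta$.

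Second, I would show that $\widehat d_r(a,b)$ is a Lipschitz function of these moments on the relevant domain, so that a uniform $\delta'$-error in the moments yields a uniform $O(\delta')$-error in $\widehat d_r$. Here Assumption~\ref{asm:bound} is essential: it guarantees $|\mathbb{E}[p_b^2]\mathbb{E}[q_b^2]-(\mathbb{E}[p_bq_b])^2|\ge D>0$, so the $2\times2$ matrix inverted in Step~4 of Algorithm~\ref{alg:main} has operator norm bounded by a constant (for $\delta'$ small enough the empirical matrix is still invertible with a comparable bound, by continuity of the determinant). Consequently the map from $(\widehat{\mathbb{E}}[p_b^2],\widehat{\mathbb{E}}[p_bq_b],\widehat{\mathbb{E}}[q_b^2],\widehat{\mathbb{E}}[v_ap_b],\widehat{\mathbb{E}}[v_aq_b])$ to $\widehat H^{-1}_{1/r}(a,b)$ is Lipschitz with a constant depending only on $D$ and the (constant) bounds on impedances and injection moments; Eq.~\eqref{eq:dist4} then combines three such entries linearly, so $|\widehat d_r(a,b)-d_r(a,b)|\le c''\delta'$ for all $a,b\in\mathcal O$ on the good event. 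Taking $\delta'=\delta/c''$ and absorbing constants into $C$ gives exactly the statement of Lemma~\ref{lem:uniformbound}.

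The main obstacle is the first step: controlling the sub-Gaussian/sub-exponential parameters of the $v_a$'s and their products uniformly over all nodes, and making sure the dependence on $|\mathcal V|$ that enters through the path-sums in Lemma~\ref{lemma1} (entries of $H^{-1}_{1/r}$) is absorbed correctly — this is precisely why the theorem assumes constant depth and constantly bounded impedances, which cap $\|H^{-1}_{1/r}\|_\infty$ by a constant rather than letting it grow with $|\mathcal V|$. Once that bookkeeping is done carefully, the Bernstein bound plus the union bound over $O(|\mathcal V|^2)$ events forces the sample size $m\asymp |\mathcal V|\log(|\mathcal V|/\eta)$, and the rest is the deterministic Lipschitz propagation of errors described above. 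A secondary technical point to handle is that the empirical $2\times2$ injection-covariance matrix must be shown invertible on the good event (not merely the population one), which follows from $D>0$ and $\delta'\le D/2$, say, by the standard perturbation bound on matrix inverses.
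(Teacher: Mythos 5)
Your overall strategy is the same as the paper's: concentrate the five empirical moments per pair via sub-exponential (Bernstein-type) tail bounds plus a union bound over $O(|\mathcal V|^2)$ events, then propagate the error deterministically through the $2\times 2$ inversion and Eq.~(\ref{eq:dist4}), using Assumption~\ref{asm:bound} to keep that map Lipschitz. Your second step is in fact a more explicit version of what the paper compresses into ``such a constant $\zeta$ always exists due to Assumption~\ref{asm:bound},'' and your remark about verifying invertibility of the \emph{empirical} covariance matrix on the good event is a correct and worthwhile addition.

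There is, however, a genuine error in your first step. You claim that because the depth is constant, $v_a$ is sub-Gaussian with a \emph{constant} parameter. That is false: $v_a=\sum_{c}H^{-1}_{1/r}(a,c)p_c+H^{-1}_{1/x}(a,c)q_c$ is a sum of $\Theta(|\mathcal V|)$ \emph{independent} sub-Gaussian terms, each with an $O(1)$ coefficient (this is what constant depth and bounded impedances buy you, via Lemma~\ref{lemma1}). The variance proxies of independent summands add, so $\mathrm{Var}(v_a)=\Theta(|\mathcal V|)$ and the sub-Gaussian parameter of $v_a$ is $\Theta(\sqrt{|\mathcal V|})$, not $O(1)$. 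Consequently $v_ap_b$ is sub-exponential with variance proxy scaling like $|\mathcal V|$ and range of validity shrinking like $1/\sqrt{|\mathcal V|}$ (this is exactly the content of the paper's Lemma~\ref{lem:subgaus}), and it is precisely this scaling — not the union bound — that forces $n\gtrsim|\mathcal V|\log(|\mathcal V|/\eta)$. Your write-up is also internally inconsistent on this point: if the products really had constantly bounded sub-exponential parameters as you assert, the union bound over $O(|\mathcal V|^2)$ events would only require $n\asymp\log(|\mathcal V|/\eta)$ samples, contradicting your concluding claim that the argument ``forces $m\asymp|\mathcal V|\log(|\mathcal V|/\eta)$.'' The gap is repairable — replace the constant-parameter claim with the correct $\sqrt{|\mathcal V|}$ bookkeeping for $\bar p_b=\sum_{c\ne b}H^{-1}_{1/r}(a,c)p_c$ and carry the $|\mathcal V|$ factor through Bernstein — but as written the concentration step does not hold together.
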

If $|\widehat d_r(a,b)-d_r(a,b)|\le\delta$ holds for a sufficiently small constant $\delta$, one can observe that RG recovers the true topology for some $\varepsilon,\tau$ due to the constant depth and constantly lower bounded line impedances, i.e., Lemma \ref{lem:uniformbound} implies Theorem \ref{thm:main}. 
\begin{proof}[Proof of Lemma \ref{lem:uniformbound}]\label{sec:pflem:uniformbound}
We show that the empirical expectations $\widehat{\mathbb{E}}[v_ap_b]$, $\widehat{\mathbb{E}}[v_aq_b]$, $\widehat{\mathbb{E}}[p_a^2]$, $\widehat{\mathbb{E}}[q_a^2]$ and $\widehat{\mathbb{E}}[p_aq_a]$ are close enough to its true expectations so that the result of Lemma \ref{lem:uniformbound} holds.
For bounding errors, we first define the error event $$\mathcal E_{v_ap_b}(\zeta):=\{|\mathbb{\widehat E}[v_ap_b]-\mathbb{E}[v_ap_b]|\ge\zeta\}$$
where $\mathcal E_{v_aq_b}(\zeta)$, $\mathcal E_{p_a^2}(\zeta)$, $\mathcal E_{q_a^2}(\zeta)$, $\mathcal E_{p_aq_b}(\zeta)$ are also defined in a similar manner.
To bound the probability of error events, we introduce the following key lemmas.
\begin{lemma}\label{lem:subgaus}
Under assumptions in Theorem \ref{thm:main}, the following inequalities hold for some constants $\alpha,M>0$: for $|\lambda|\le{M}$
$$\mathbb{E}[e^{\lambda (p_a^2-\mathbb{E}[p_a^2])}],\mathbb{E}[e^{\lambda (q_a^2-\mathbb{E}[q_a^2])}],\mathbb{E}[e^{\lambda (p_aq_q-\mathbb{E}[p_aq_a])}]\le e^{\lambda^2\alpha^2}$$
and for $|\lambda|\le\frac{M}{\sqrt{|\mathcal V|}}$
$$\mathbb{E}[e^{\lambda (v_ap_b-\mathbb{E}[v_ap_b])}],\mathbb{E}[e^{\lambda (v_aq_b-\mathbb{E}[v_aq_b])}]\le e^{\lambda^2|\mathcal V|\alpha^2}.
$$
\end{lemma}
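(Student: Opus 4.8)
The plan is to establish Lemma \ref{lem:subgaus} by exploiting the structure of the LC-PF model Eq.~\eqref{eq:lcpf2}, which expresses each $v_a$ as a \emph{linear} combination of the independent injections $p_b,q_b$, and then to propagate these moment generating function (MGF) bounds into tail bounds for the empirical estimators, which by Eq.~\eqref{eq:dist3}--\eqref{eq:dist4} control $\widehat d_r$.

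\textbf{Step 1: MGF bounds for quadratic terms (Lemma \ref{lem:subgaus}).}
For the first group, note that $p_a,q_a$ are zero-mean sub-Gaussian by hypothesis, so $p_a^2,q_a^2,p_aq_a$ are sub-exponential random variables; their centered versions therefore satisfy an MGF bound $\mathbb{E}[e^{\lambda(p_a^2-\mathbb{E}[p_a^2])}]\le e^{\lambda^2\alpha^2}$ for $|\lambda|\le M$, by the standard equivalence between sub-exponentiality and a bounded-region quadratic MGF bound (e.g.\ via the Orlicz-norm characterization). For $v_ap_b$, I would write $v_a=\sum_{c}\big(H^{-1}_{1/r}(a,c)\,p_c+H^{-1}_{1/x}(a,c)\,q_c\big)$ from Eq.~\eqref{eq:lcpf2}; since line impedances are constantly bounded, Lemma \ref{lemma1} gives $|H^{-1}_{1/r}(a,c)|,|H^{-1}_{1/x}(a,c)|=O(1)$ (path lengths are $O(1)$ under constant depth), so $v_a$ is sub-Gaussian with parameter $O(\sqrt{|\mathcal V|})$ (summing $O(|\mathcal V|)$ independent $O(1)$-sub-Gaussian terms). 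Then $v_ap_b$ is a product of a $O(\sqrt{|\mathcal V|})$-sub-Gaussian and an $O(1)$-sub-Gaussian variable; such a product is sub-exponential with parameter $O(\sqrt{|\mathcal V|})$, giving the claimed MGF bound on the region $|\lambda|\le M/\sqrt{|\mathcal V|}$. (A small technical point: the $c=b$ term makes $v_a$ and $p_b$ dependent, but one can split off that single term and bound it separately, or just use Cauchy--Schwarz on MGFs since the cross term is $O(1)$-sub-exponential on its own.)

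\textbf{Step 2: From MGF bounds to concentration of empirical moments, then to $\widehat d_r$.}
Given $m$ i.i.d.\ samples, a Chernoff bound with the MGF estimates of Step 1 yields $\mathbb{P}(\mathcal E_{v_ap_b}(\zeta))\le 2\exp\!\big(-c\,m\min\{\zeta^2/|\mathcal V|,\zeta/\sqrt{|\mathcal V|}\}\big)$ (Bernstein-type, because of the restricted $\lambda$-range), and analogous bounds for $\mathcal E_{p_a^2},\mathcal E_{q_a^2},\mathcal E_{p_aq_a}$ without the $|\mathcal V|$ factors. Taking $\zeta$ a small constant and $m\ge C|\mathcal V|\log(|\mathcal V|/\eta)$ makes each such probability at most $\eta/|\mathcal V|^2$; a union bound over the $O(|\mathcal V|^2)$ pairs $(a,b)$ then gives, with probability $\ge 1-\eta$, that \emph{all} these empirical moments are within a small constant $\zeta$ of their true values simultaneously. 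Finally, I would feed this into the formulas of Algorithm \ref{alg:main}: from Eq.~\eqref{eq:dist3} the matrix being inverted is $\begin{bmatrix}\mathbb{E}[p_b^2]&\mathbb{E}[p_bq_b]\\ \mathbb{E}[p_bq_b]&\mathbb{E}[q_b^2]\end{bmatrix}$, whose determinant is bounded below by $D>0$ by Assumption \ref{asm:bound}, so the inverse is Lipschitz in its entries on the relevant region; combined with bounded line impedances (hence bounded $H^{-1}_{1/r}(a,b)$, so bounded $\mathbb{E}[v_ap_b]$ etc.), a short perturbation argument shows $|\widehat H^{-1}_{1/r}(a,b)-H^{-1}_{1/r}(a,b)|\le O(\zeta)$ uniformly, and then Eq.~\eqref{eq:dist4} gives $|\widehat d_r(a,b)-d_r(a,b)|\le\delta$ for all $a,b\in\mathcal O$, proving Lemma \ref{lem:uniformbound}. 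The passage from Lemma \ref{lem:uniformbound} to Theorem \ref{thm:main} is the remark already made in the text: for $\delta$ small relative to the (constant) minimum edge impedance, the tolerances $\varepsilon,\tau$ can be chosen so that rules (a),(b) of Section \ref{sec:finite} make exactly the correct sibling/parent decisions at every iteration of RG, and since the depth is constant RG terminates after $O(1)$ rounds without error accumulation destroying the guarantee.

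\textbf{Main obstacle.}
The crux is Step 1 for the cross terms $v_ap_b$: one must carefully track how the sub-Gaussian/sub-exponential parameters scale with $|\mathcal V|$, since $v_a$ aggregates $\Theta(|\mathcal V|)$ independent contributions and the resulting $\sqrt{|\mathcal V|}$ inflation of the sub-exponential parameter is precisely what forces the sample size to be $\Theta(|\mathcal V|\log|\mathcal V|)$ rather than $\Theta(\log|\mathcal V|)$. Getting the constant-depth and bounded-impedance hypotheses to deliver the $O(1)$ bound on each $H^{-1}_{1/r}(a,c)$ entry (via Lemma \ref{lemma1}) is essential here and is the place where the structural assumptions are really used; everything downstream (Chernoff, union bound, perturbation of the $2\times2$ inverse) is routine once the right MGF region and parameter scaling are in hand.
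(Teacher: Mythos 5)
Your proof of Lemma \ref{lem:subgaus} is correct and follows the same overall strategy as the paper: expand $v_a$ linearly in the injections via Eq.~\eqref{eq:lcpf2}, use Lemma \ref{lemma1} together with constant depth and bounded impedances to get $|H^{-1}_{1/r}(a,c)|,|H^{-1}_{1/x}(a,c)|=O(1)$, conclude that $v_a$ is $O(\sqrt{|\mathcal V|})$-sub-Gaussian, and then treat $v_ap_b$ as a product of sub-Gaussians, which is sub-exponential with parameter $O(\sqrt{|\mathcal V|})$ and hence satisfies the MGF bound on the restricted region $|\lambda|\le M/\sqrt{|\mathcal V|}$. Where you diverge is in the machinery for products of \emph{dependent} sub-Gaussians. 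The paper's product lemma (Lemma \ref{lem:subgausprod}, proved by a Taylor-series/Stirling computation) applies only to independent factors, so the paper must (i) handle $p_aq_a$ via the polarization identity $p_aq_a=\tfrac12(p_a+q_a)^2-\tfrac12p_a^2-\tfrac12q_a^2$ and a separate sum lemma (Lemma \ref{lem:subexpsum}), and (ii) explicitly split $v_ap_b$ into the diagonal contributions $H^{-1}_{1/r}(a,b)p_b^2+H^{-1}_{1/x}(a,b)p_bq_b$ plus independent remainders $p_b\bar p_b+p_b\bar q_b$. Your invocation of the Orlicz-norm ($\psi_2\cdot\psi_2\subset\psi_1$) characterization sidesteps both steps, since that product bound holds without independence; this is a cleaner and more general route, at the cost of importing a standard but nontrivial external fact rather than the paper's self-contained elementary estimates. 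Your Step 2 (Chernoff/Bernstein bound with the restricted $\lambda$-range, union bound over $O(|\mathcal V|^2)$ pairs, and the perturbation argument for the $2\times 2$ inverse using Assumption \ref{asm:bound}) reproduces the paper's proof of Lemma \ref{lem:uniformbound} and is not part of the statement under review, but it is consistent with the paper's argument.
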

\begin{lemma}\label{lem:bernstein}
Let $X_1,\dots,X_n$ are independent random variables satisfying 
$$\mathbb{E}[e^{\lambda X_i}]\le e^{\lambda^2\sigma^2}\quad\text{for $|\lambda|\le B$}.$$
Then, the following inequality holds:
$$\mathbb{P}\left(\left|\sum_{i=1}^nX_i\right|\ge t\right)\le 2\exp\left(-\min\left(\frac{Bt}{2},\frac{t^2}{4\sigma^2n}\right)\right).$$
\end{lemma}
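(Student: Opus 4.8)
The plan is to use the standard Chernoff (exponential Markov) argument, exploiting the given moment-generating-function bound together with independence, and then to close the two-sided estimate by a symmetrization and union bound. The entire content beyond routine algebra lies in a single constrained optimization over the exponential parameter $\lambda$.

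First I would bound the upper tail $\mathbb{P}(\sum_{i=1}^n X_i\ge t)$. For any $\lambda\in(0,B]$, applying Markov's inequality to $e^{\lambda\sum_i X_i}$ gives $\mathbb{P}(\sum_i X_i\ge t)\le e^{-\lambda t}\,\mathbb{E}[e^{\lambda\sum_i X_i}]$. Independence factorizes the joint MGF, and the hypothesis $\mathbb{E}[e^{\lambda X_i}]\le e^{\lambda^2\sigma^2}$ (applicable precisely because $0<\lambda\le B$) yields $\mathbb{E}[e^{\lambda\sum_i X_i}]=\prod_i\mathbb{E}[e^{\lambda X_i}]\le e^{n\lambda^2\sigma^2}$. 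Hence $\mathbb{P}(\sum_i X_i\ge t)\le\exp(n\sigma^2\lambda^2-\lambda t)$ for every feasible $\lambda$.

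Next I would optimize the exponent $f(\lambda)=n\sigma^2\lambda^2-\lambda t$ over the interval $(0,B]$. The unconstrained minimizer is $\lambda^\star=t/(2n\sigma^2)$, and the decisive case split is whether $\lambda^\star$ is feasible. If $t\le 2n\sigma^2 B$, then $\lambda^\star\le B$ and substituting it gives $f(\lambda^\star)=-t^2/(4n\sigma^2)$; if instead $t>2n\sigma^2 B$, the parabola $f$ is still strictly decreasing at $\lambda=B$, so the best feasible choice is $\lambda=B$, giving $f(B)=n\sigma^2 B^2-Bt\le -Bt/2$, where the final inequality is exactly the case condition $n\sigma^2 B\le t/2$. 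Comparing the two regimes via $t^2/(4n\sigma^2)\le Bt/2\iff t\le 2n\sigma^2 B$ shows that in both cases the achieved exponent is $-\min(Bt/2,\,t^2/(4n\sigma^2))$, so $\mathbb{P}(\sum_i X_i\ge t)\le\exp(-\min(Bt/2,\,t^2/(4n\sigma^2)))$.

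Finally I would obtain the lower tail by symmetrization: since the hypothesis is symmetric in the sign of $\lambda$ (it holds for all $|\lambda|\le B$), the variables $-X_i$ satisfy the identical MGF bound, so rerunning the argument on $-X_i$ bounds $\mathbb{P}(-\sum_i X_i\ge t)$ by the same quantity. A union bound over the two one-sided events supplies the factor of $2$ and finishes the proof. The main obstacle, and the only genuinely non-mechanical step, is the constrained optimization: the feasibility boundary $\lambda=B$ forces the sub-exponential (linear-in-$t$) regime $Bt/2$ for large $t$, while the interior optimum produces the sub-Gaussian (quadratic-in-$t$) regime $t^2/(4n\sigma^2)$ for small $t$, and the $\min$ in the statement is precisely the envelope of these two regimes.
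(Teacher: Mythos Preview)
Your proposal is correct and follows essentially the same approach as the paper: the paper likewise applies Markov's inequality to $e^{\lambda S}$, factorizes by independence, bounds each factor via the hypothesis, and then chooses $\lambda=\min(B,\,t/(2\sigma^2 n))$ before handling the lower tail by symmetry and a union bound. Your write-up simply spells out the constrained optimization more explicitly than the paper does.
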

The proofs of Lemma \ref{lem:subgaus} and Lemma \ref{lem:bernstein} are presented in Appendix \ref{sec:pflem:subgaus} and Appendix \ref{sec:pflem:bernstein} respectively.
Lemma \ref{lem:subgaus} and Lemma \ref{lem:bernstein} directly lead us to obtain the following inequality for any constant $\zeta$:
\begin{align}
\mathbb{P}(\mathcal E_{v_ap_b}(\zeta))&\le 2\exp\left(-\min\left(\frac{M\zeta n}{2\sqrt{|\mathcal V|}},\frac{\zeta^2 n}{4|\mathcal V|\alpha^2}\right)\right)\le\frac{\eta}{4|\mathcal V|^2}\label{eq:errbd1}
\end{align}
for $n\ge D_1(\zeta)|\mathcal V|\log(|\mathcal V|/\eta)$ and some constant $D_1(\zeta)$
where $n$ is the number of samples. The same inequality holds for $v_aq_b$. Similarly, for any constant $\zeta$, the inequality below holds
\begin{align}
\mathbb{P}(\mathcal E_{p_aq_a}(\zeta))&\le 2\exp\left(-\min\left(\frac{M\zeta n}{2},\frac{\zeta^2 n}{4\alpha^2}\right)\right)\le\frac{\eta}{6|\mathcal V|}\label{eq:errbd2}
\end{align}
for $n\ge D_2(\zeta)\log(|\mathcal V|/\eta)$ for some constant $D_2(\zeta)$. One can observe that same inequality holds for $p_a^2$ and $q_a^2$.
Now, we define the global error event
\begin{align*}
\mathcal E(\zeta):=&\Bigg(\bigcup_{a,b\in\mathcal V}\mathcal E_{v_ap_b}(\zeta)\cup\mathcal E_{v_aq_b}(\zeta)\Bigg)\nonumber\\
&\cup\Bigg(\bigcup_{a\in\mathcal V}\mathcal E_{p_a^2}(\zeta)\cup\mathcal E_{p_aq_a}(\zeta)\cup\mathcal E_{q_a^2}(\zeta)\Bigg).
\end{align*}
Using two inequalities Eqs.~(\ref{eq:errbd1}, \ref{eq:errbd2}), we apply the union bound to bound the error probability as follows:
\begin{align*}
&\mathbb{P}\left(\mathcal E(\zeta)\right)\le 2|\mathcal V|^2\times\frac{\eta}{4|\mathcal V|^2}+3|\mathcal V|\times\frac{\eta}{6|\mathcal V|}=\eta.
\end{align*} 
Using the above union bound, we choose a small enough constant $\zeta$ and its corresponding number of samples $C|\mathcal V|\log|\mathcal V|$, where $C=\max(D_1(\zeta),D_2(\zeta))$, so that if $\mathcal E(\zeta)$ does not occur, then $|d_r(a,b)-\widehat{d}_r(a,b)|\le\delta$ where
$$\widehat{d}_r(a,b)=\widehat{H}_{1/r}^{-1}(a,a)+\widehat{H}_{1/r}^{-1}(b,b)-2\widehat{H}_{1/r}(a,b)$$
and
\begin{align*}
\begin{bmatrix}\widehat{H}_{1/r}^{-1}(a,b)\\
\widehat{H}_{1/x}^{-1}(a,b)\end{bmatrix}&=\frac1{\widehat{\mathbb{E}}[p_b^2]\widehat{\mathbb{E}}[q_b^2]-(\widehat{\mathbb{E}}[p_bq_b])^2}\\
&\quad~~\times\begin{bmatrix}
\widehat{\mathbb{E}}[q_b^2] & -\widehat{\mathbb{E}}[p_bq_b]\\
-\widehat{\mathbb{E}}[p_bq_b] & \widehat{\mathbb{E}}[p_b^2]\end{bmatrix}\begin{bmatrix}\widehat{\mathbb{E}}[v_ap_b]\\
\widehat{\mathbb{E}}[v_aq_b]\end{bmatrix}
\end{align*}
which are from Eqs. (\ref{eq:dist3}, \ref{eq:dist4}).
Such a constant $\zeta$ always exists due to Assumption \ref{asm:bound}. 
This completes the proof of Lemma~\ref{lem:uniformbound}.
\end{proof}

\begin{figure*}[t]
\centering
 \begin{subfigure}{0.32\textwidth}
 \includegraphics[width=0.49\textwidth]{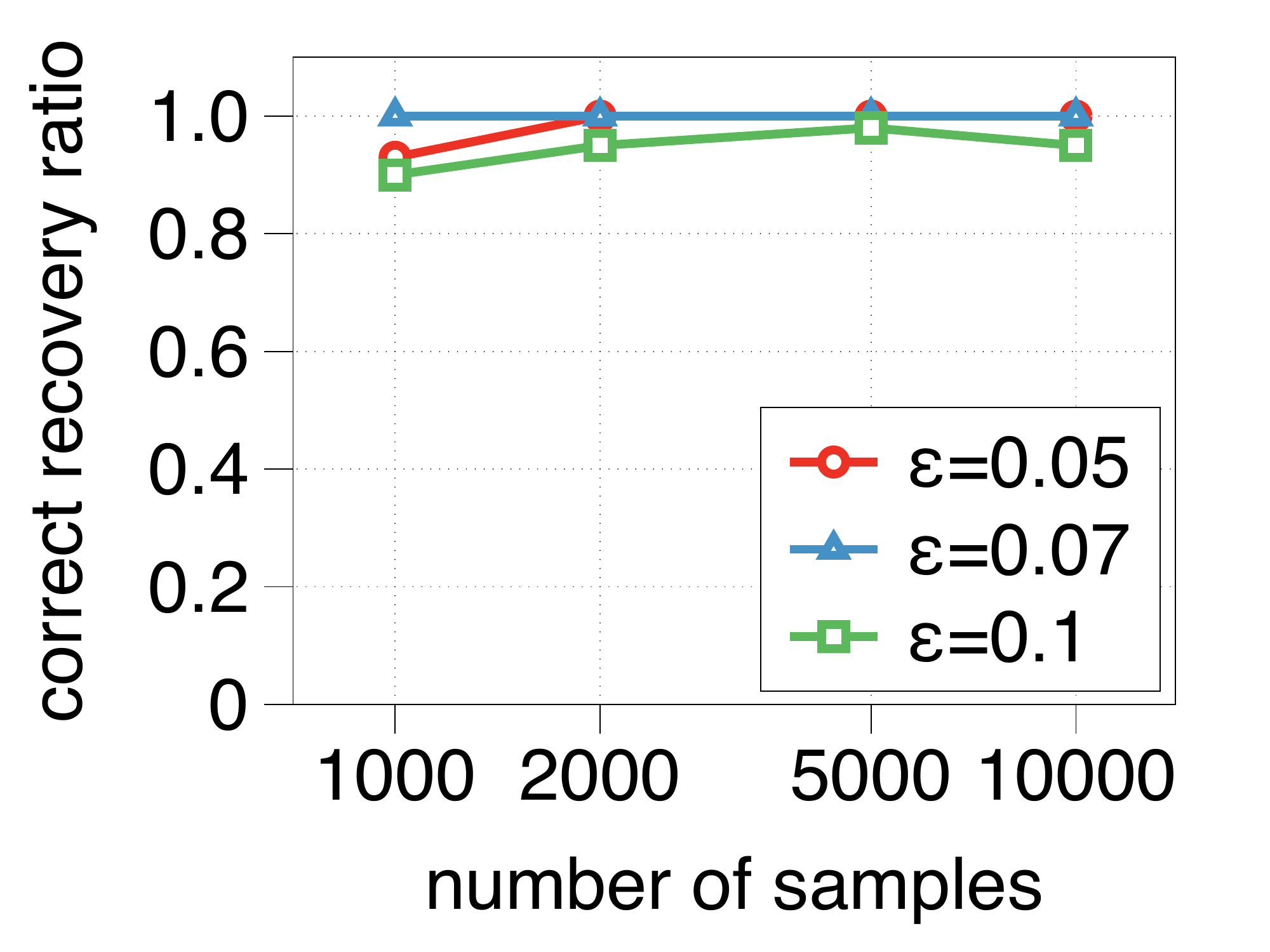}
 \includegraphics[width=0.49\textwidth]{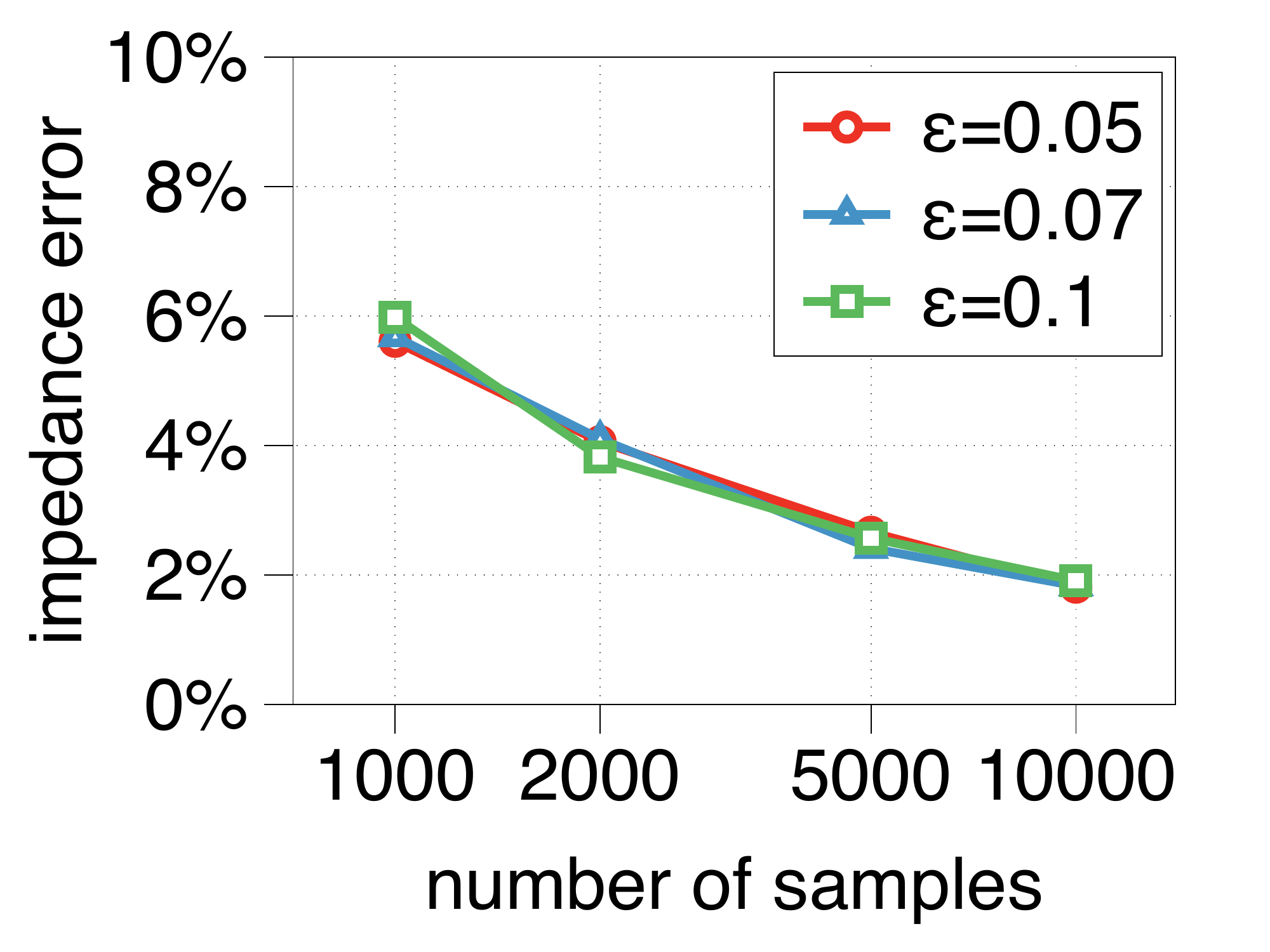}
 \caption{10 vertices}
 \label{fig:syn10}
 \end{subfigure}
 ~
 \begin{subfigure}{0.32\textwidth}
 \includegraphics[width=0.49\textwidth]{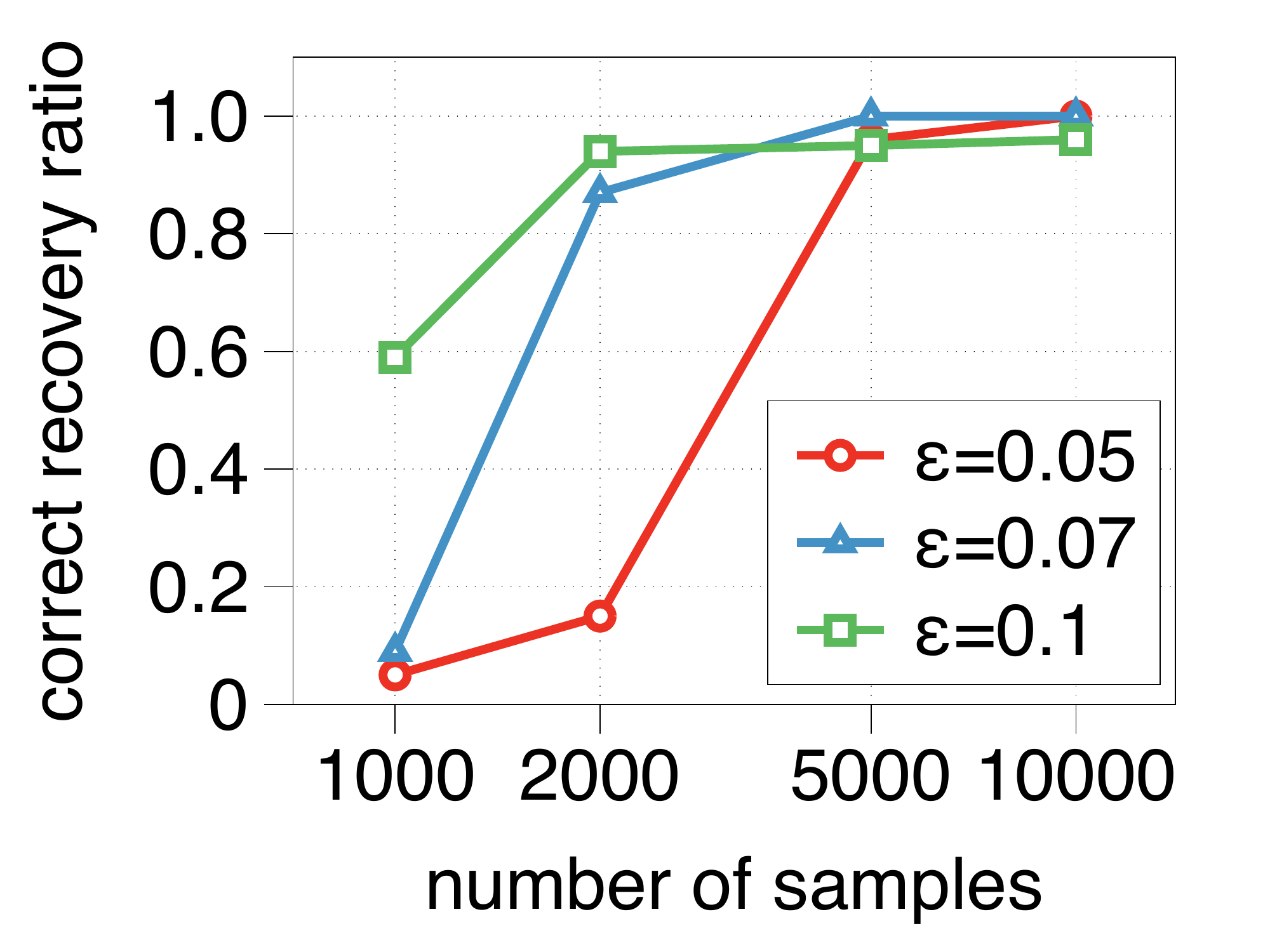}
 \includegraphics[width=0.49\textwidth]{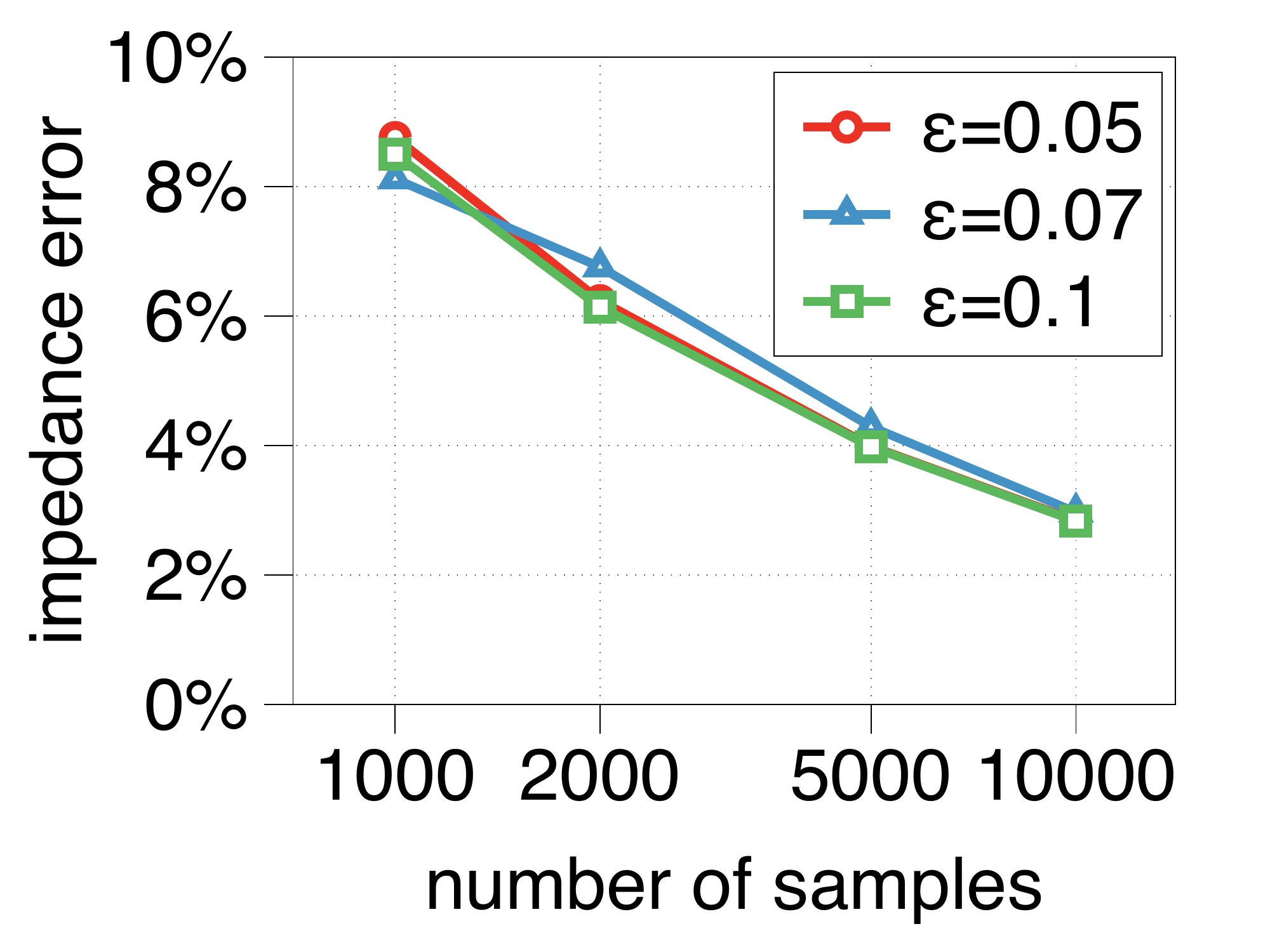}
 \caption{50 vertices}
 \label{fig:syn50}
 \end{subfigure}
 ~
 \begin{subfigure}{0.32\textwidth}
 \includegraphics[width=0.49\textwidth]{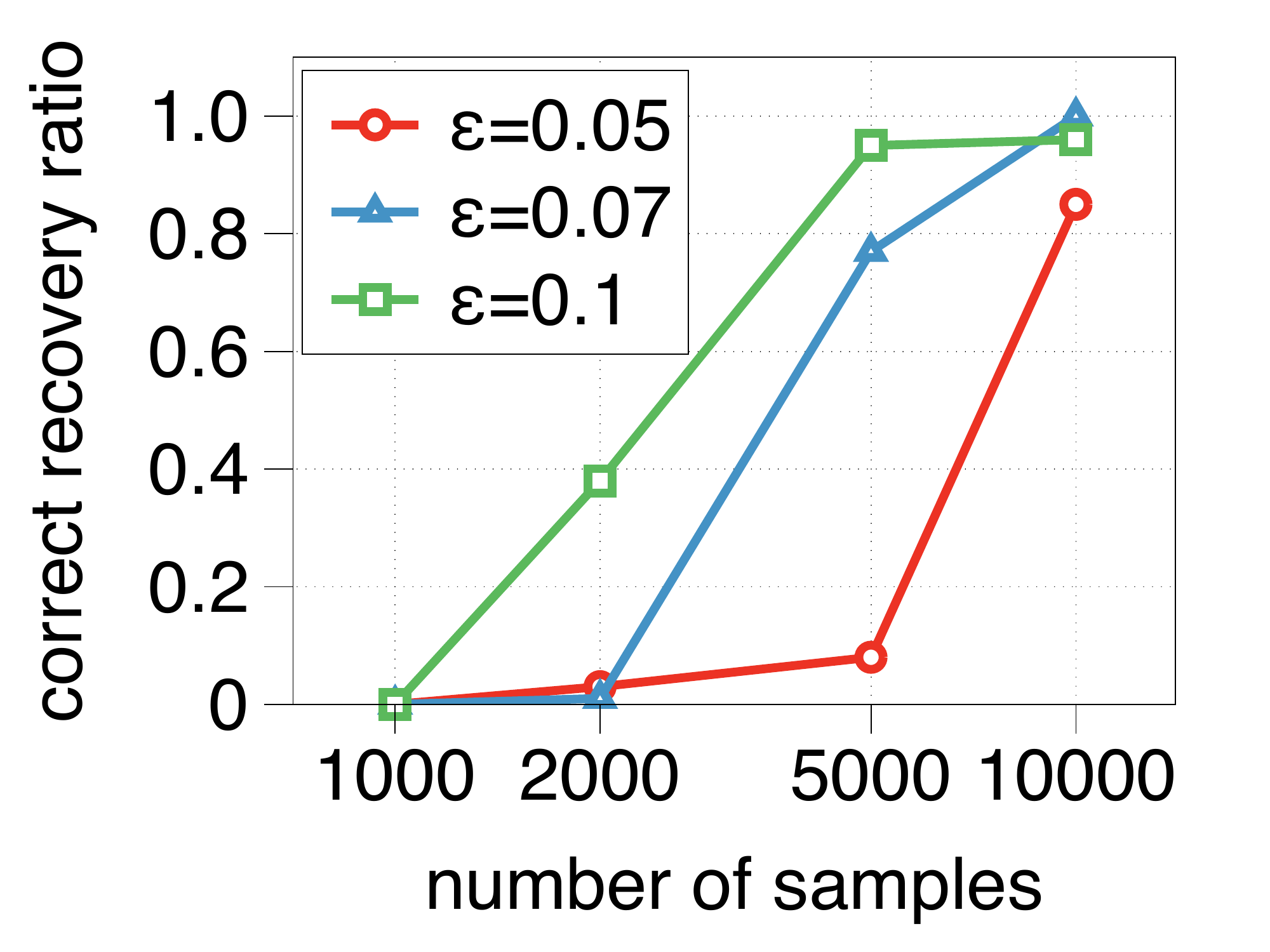}
 \includegraphics[width=0.49\textwidth]{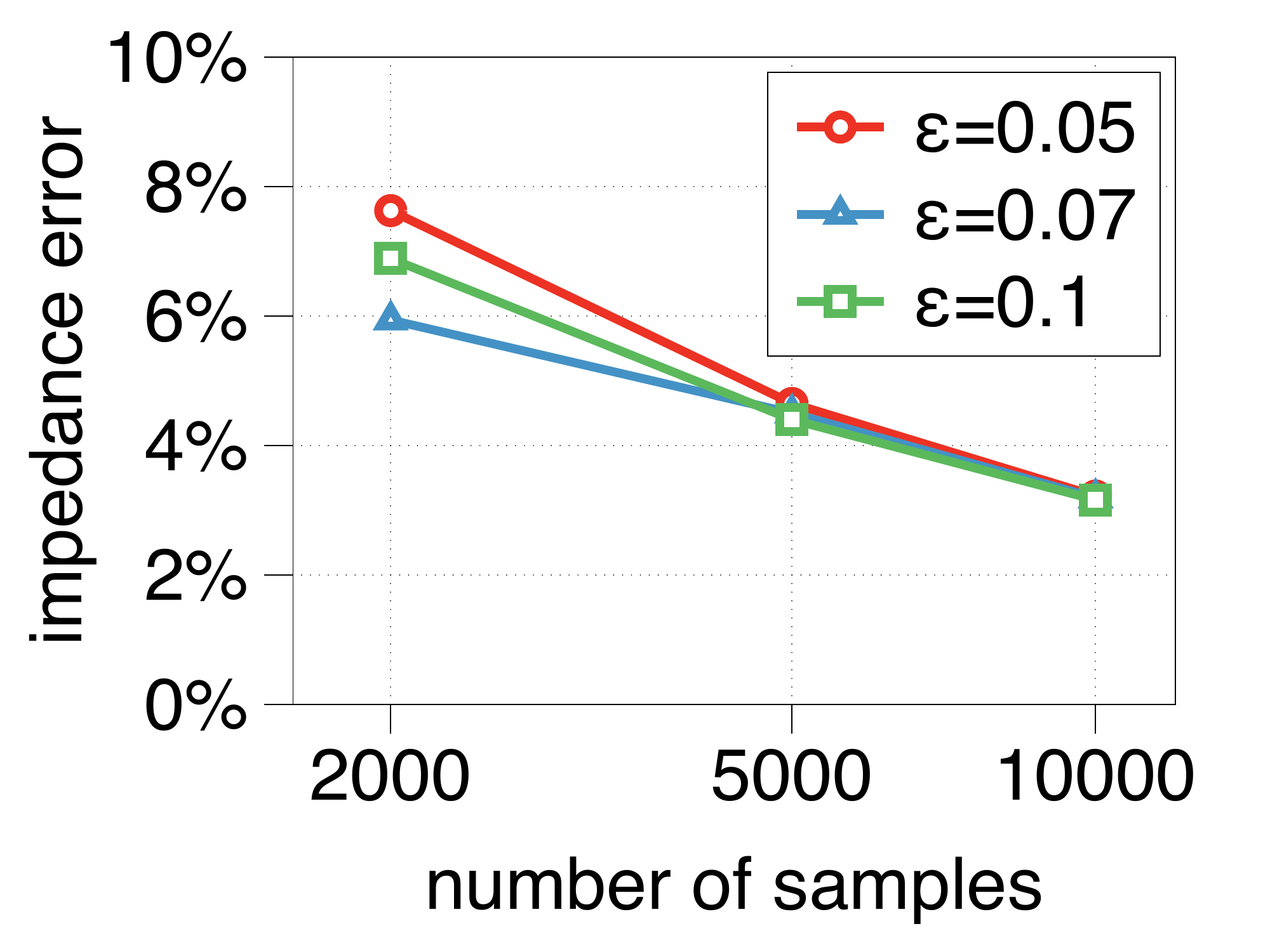}
 \caption{100 vertices}
 \label{fig:syn100}
 \end{subfigure}
\caption{Experimental results for synthetic grids with 10, 50, and 100 vertices averaged over 100 random radial grids, measured in terms of the accuracy of topology recovery, and the impedance error measured as $\frac{1}{2|\mathcal{E}|}\sum_{(ab)\in\mathcal E}\frac{|r_{ab}-\hat r_{ab}|}{|r_{ab}|}+\frac{|x_{ab}-\hat x_{ab}|}{|x_{ab}|}$ in the correctly recovered topologies. $\varepsilon$ denotes the threshold in Algorithm \ref{alg:main}, as described in Section \ref{sec:finite}} 
\label{fig:syn}
\end{figure*}
\section{Experiments}\label{sec:experiments}
In this section, we present experimental results of Algorithm \ref{alg:learningdeep} and Algorithm \ref{alg:main} on custom grids and IEEE test cases for both LC-PF, non-linear ac power flow, and real-world samples.

\noindent{\bf Custom Examples with LC-PF samples:}
We first run simulations on randomly designed grids with voltages generated by the LC-PF model. Due to the space constraint, we only simulate Algorithm \ref{alg:main} for the random grids and postpone the discussion of Algorithm \ref{alg:learningdeep} to IEEE cases.
In each simulation, we construct a random radial grid with maximum degree 5, and line resistances and reactances independently sampled from the uniform distribution over $[.1,.2]$. We sample the complex power injections from the independent normal distribution and produce nodal voltage using LC-PF Eq.~(\ref{eq:lcpf2}).

Under this setting, we run simulations by changing the number of vertices from 10 to 100, the number of samples from 1000 to 10000, and changing tolerance $\varepsilon$ in the algorithm with fixed $\tau=1$. To quantify the performance, we measure errors in the recovered topology and estimated impedances, averaged over 100 random radial grids. 
The results are summarized in Fig. \ref{fig:syn} where the impedance error is defined as $\frac{1}{2|\mathcal{E}|}\sum_{(ab)\in\mathcal E}\frac{|r_{ab}-\hat r_{ab}|}{|r_{ab}|}+\frac{|x_{ab}-\hat x_{ab}|}{|x_{ab}|}$. One can observe that our algorithm recovers line impedances with a small error even in the demanding case of 1000 samples. We also observe that larger $\varepsilon$ results in higher accuracy for a small number of samples, but it becomes less accurate for a large number of samples (compare $\varepsilon=0.1$ to $\varepsilon=0.07$).
However, if the threshold is too small ($\varepsilon=0.05$), the algorithm performance decreases for all sample sizes. Note that similar results (thus not shown) are derived when changing the variance of the complex power injections.

\begin{figure}[ht]
\centering
 \begin{subfigure}[b]{0.14\textwidth}
 \includegraphics[width=\textwidth]{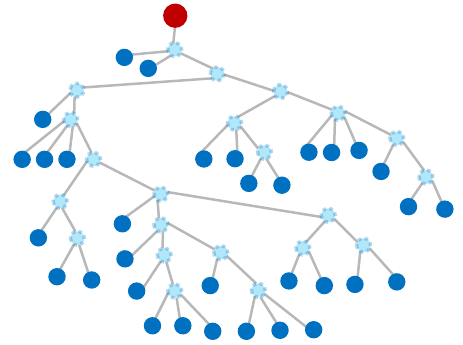}
 \caption{}
 \label{fig:ieee123grid}
 \end{subfigure}
 ~
 \begin{subfigure}[b]{0.14\textwidth}
 \includegraphics[width=\textwidth]{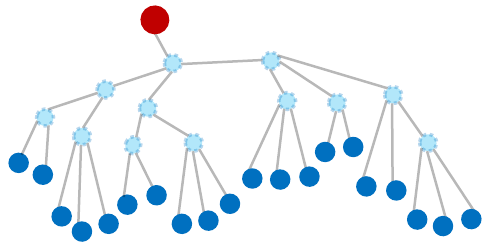}
 \caption{}
 \label{fig:33bwgrid}
 \end{subfigure}
 \caption{Illustrations of (a) $56$ bus distribution grid with $33$ leaf nodes, $22$ internal nodes 
 (b) $33$ bus distribution grid with $20$ leaf nodes, $12$ internal nodes.
 The substation is colored red.}
\end{figure}

\begin{figure*}[t]
\centering
 \begin{subfigure}{0.185\textwidth}
 \includegraphics[width=\textwidth]{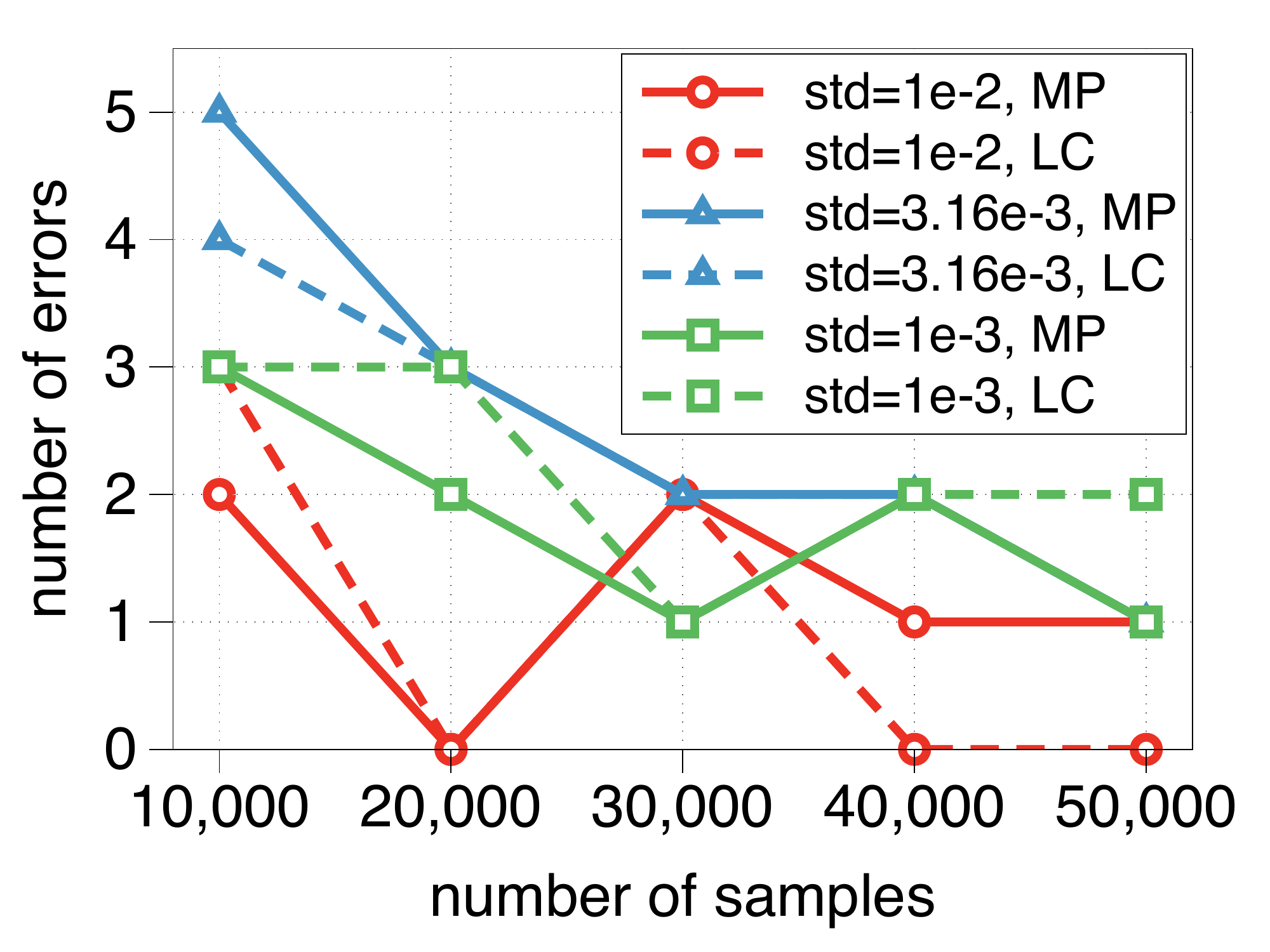}
 \caption{}
 \label{fig:ieee123var}
 \end{subfigure}
 ~
 \begin{subfigure}{0.185\textwidth}
 \includegraphics[width=\textwidth]{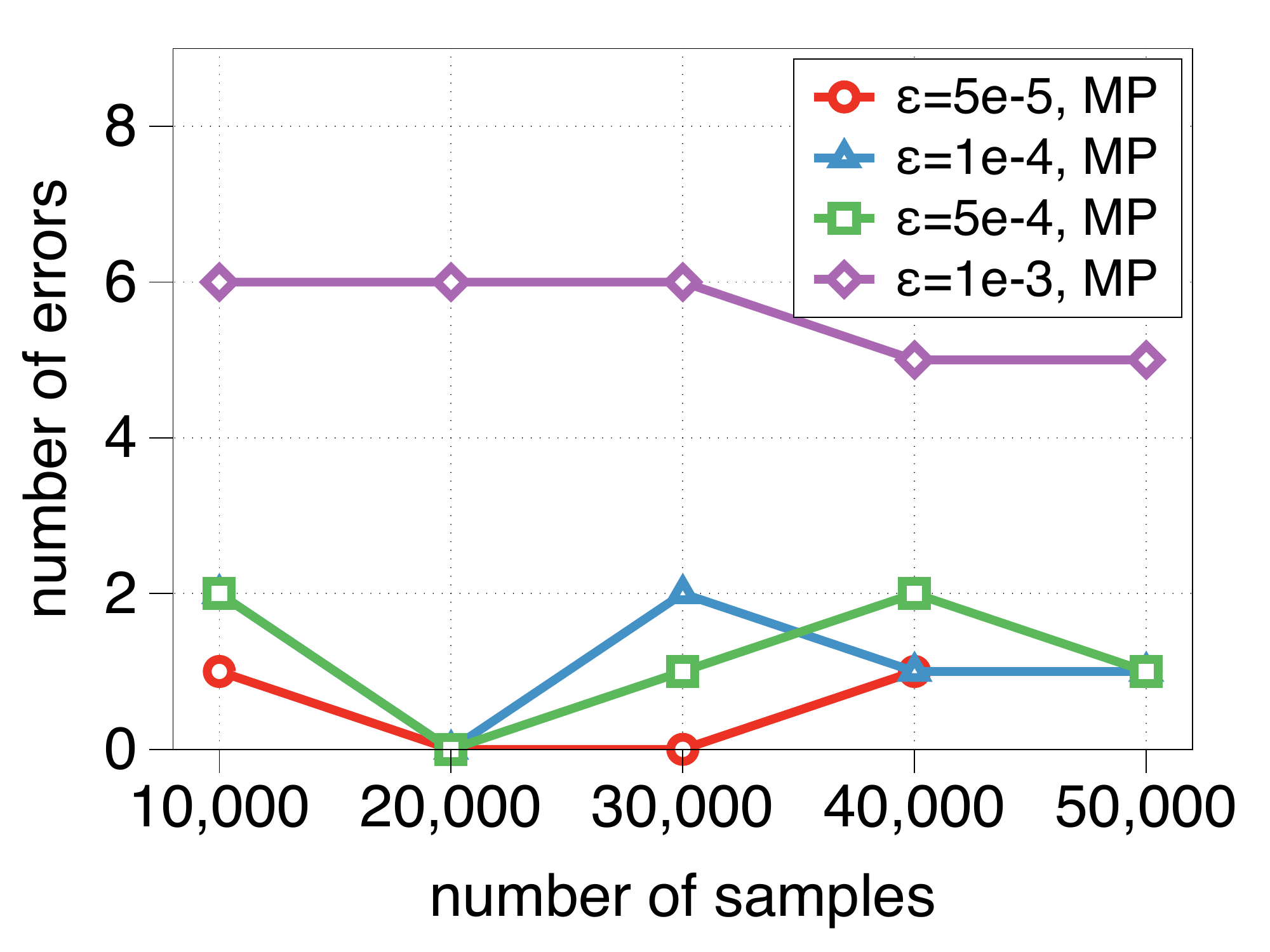}
 \caption{}
 \label{fig:ieee123eps}
 \end{subfigure}
 ~
 \begin{subfigure}{0.185\textwidth}
 \includegraphics[width=\textwidth]{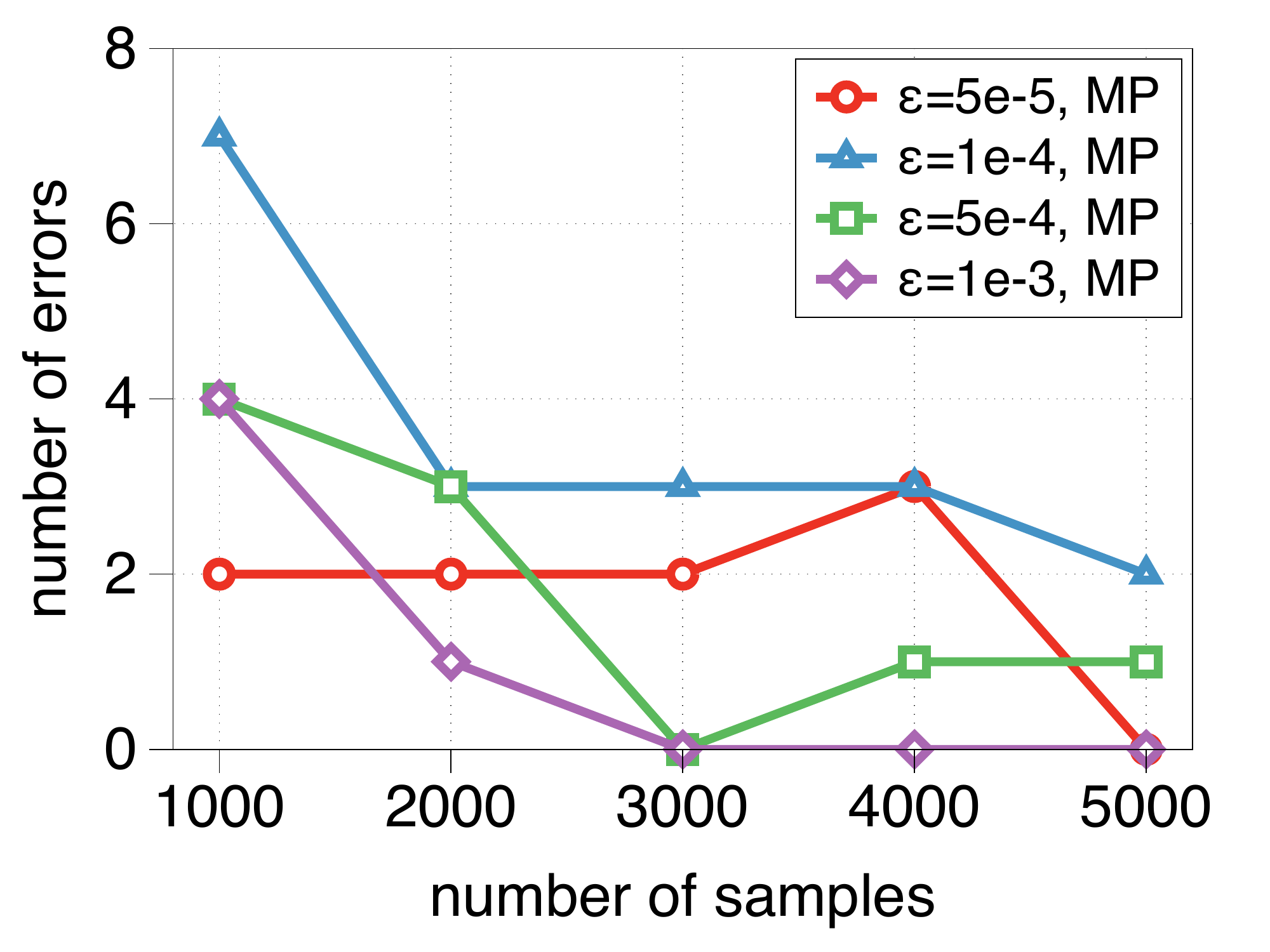}
 \caption{}
 \label{fig:33bweps}
 \end{subfigure}
 ~
 \begin{subfigure}{0.185\textwidth}
 \includegraphics[width=\textwidth]{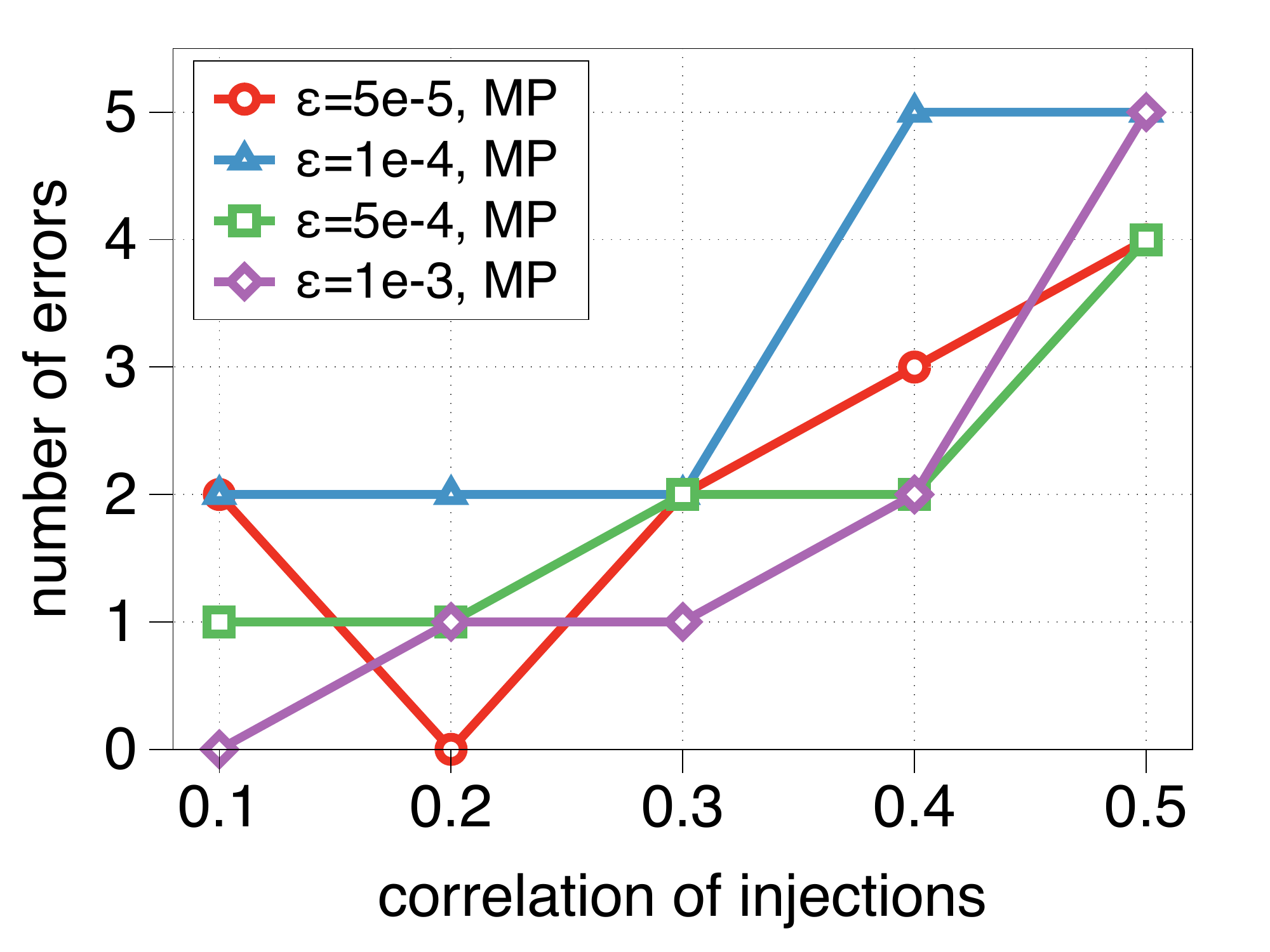}
 \caption{}
 \label{fig:33bwcorr}
 \end{subfigure}
 ~
 \begin{subfigure}{0.185\textwidth}
 \includegraphics[width=\textwidth]{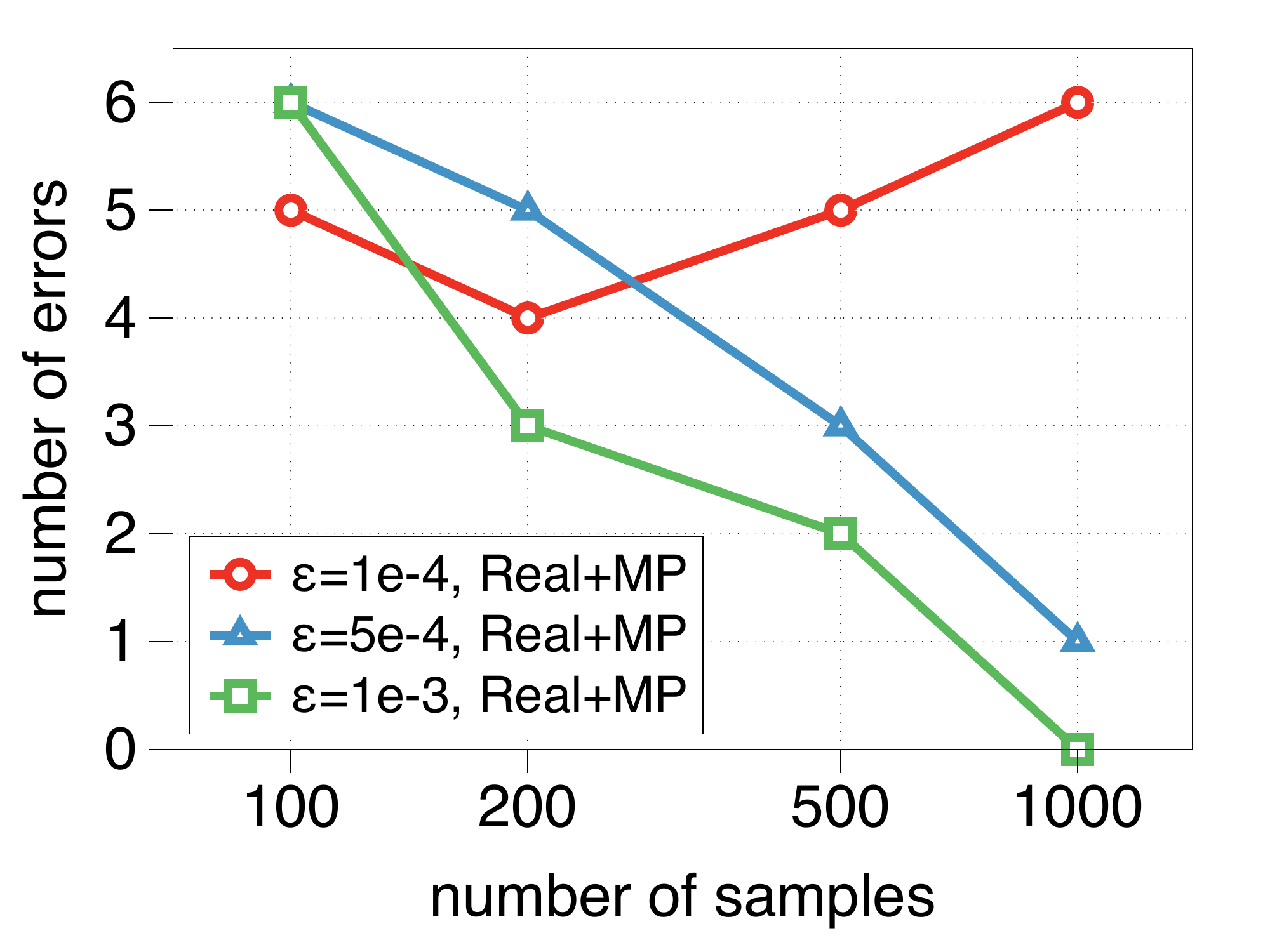}
 \caption{}
 \label{fig:33bwhouse}
 \end{subfigure}
 \caption{Errors in topology estimation of Algorithm \ref{alg:main}
 (a) in $56$ bus model for different power injection standard deviation (b) in $56$ bus model for different values of $\varepsilon$ (c) in $33$ bus model for different values of $\varepsilon$
 (d) in $33$ bus model with different correlation between nodal injections, at 5000 samples (e) in $33$ bus model with real injections. MP, LC and Real implies that the samples are from MATPOWER, LC-PF and real data, respectively.}
 \label{fig:expmain}
\end{figure*}

\noindent{\bf IEEE test cases with non-linear ac samples:}
Here, we discuss more realistic simulations of Algorithm \ref{alg:learningdeep} and \ref{alg:main} on samples generated by the ac power flow model in test cases. We use a modified MATPOWER test case with $33$ nodes \cite{matpower}, and a modified case with $56$ nodes \cite{bolognani2013identification} derived from the IEEE $123$ test feeder \cite{kersting2001radial}. Note that modifications were made to ensure radial topology and that all internal nodes have a minimum degree $\ge3$ (see Assumption \ref{asm:deg3}). The modified grids are illustrated in Fig. \ref{fig:ieee123grid} and Fig. \ref{fig:33bwgrid}. We generate the complex power injections from the independent normal distribution as in the case of the custom models. We obtain the corresponding voltage magnitudes by using ac power flow equations in MATPOWER \cite{zimmerman2011matpower}. We also compare the performance of our algorithm on LC-PF voltages generated with the same complex power injections to see the effect of non-linearity. 

We first show simulation results of Algorithm \ref{alg:learningdeep}, where the input includes complex nodal injection statistics, voltage magnitude at all leaf/end-user nodes, and a set of permissible edges with known impedances from which the operational edges are determined. We consider the $33$ bus model with nodal injection {standard deviation of $10^{-2}$ p.u.} per node. We include 50 additional edges of comparable impedances along with the true operational edges to create the input permissible edge set of 82 edges. Fig. \ref{fig:33bwtau} shows the errors in topology estimation for different values of tolerances with an increasing number of samples. Note that for LC-PF and ac power flow voltage magnitude samples, the errors are comparable. At extremely large sample sizes that represent asymptotic algorithm performance, the errors are close to zero. 

Next, we discuss Algorithm \ref{alg:main} where the input comprises of voltage and injection samples. Under this setting, we measure the performance of our algorithm by varying the number of voltage and injection samples available, the standard deviation of the complex power injection, and the threshold value, $\varepsilon$ used in Algorithm \ref{alg:main}. The experimental results are summarized in Fig. \ref{fig:expmain} where the standard deviation of injections is $10^{-2}$ p.u., $\varepsilon=10^{-4},\tau=1$ for Algorithm \ref{alg:main} unless otherwise noted. To quantify errors in topology estimation, we count the number of edge differences between the recovered topology and the true topology. Fig. \ref{fig:ieee123var} and Fig. \ref{fig:ieee123eps} show our $56$ bus model experimental results. In Fig. \ref{fig:ieee123var}, we observe that the algorithm works similarly for both MATPOWER samples and the LC-PF samples. In Fig. \ref{fig:ieee123eps}, in accordance with observations for custom model experiments, the algorithm performance decreases as the threshold $\varepsilon$ increases. For the $33$ bus model, we perform similar experiments and report results in Fig. \ref{fig:33bweps}. 

Now, we evaluate Algorithm \ref{alg:main} under a more realistic setting where all injections (for both leaf and internal injections) are correlated, i.e., $(p^T,q^T)^T\sim N(0, (1-c)I+c\boldsymbol 1)$ where $c$ denotes the correlation between injections, $I$ denotes the identity matrix, and $\boldsymbol 1$ denotes the matrix consisting of ones. Interestingly, Algorithm \ref{alg:main} even performs well under mild correlation between injections, as presented in Fig. \ref{fig:33bwcorr} for the $33$ bus model. In Fig. \ref{fig:33bwhouse}, we evaluate Algorithm \ref{alg:main} using real load active power data from \cite{disc}, sampled at $15$ minute intervals. We generate reactive power samples from the active loads using a constant power factor and construct the complex-power leaf node injections. The internal node injections are sampled from independent Gaussian distributions as in prior experiments. Given the injections, we generate voltage samples using MATPOWER. Surprisingly, Algorithm \ref{alg:main} outputs accurate estimates at much lower samples sizes, as demonstrated in Fig. \ref{fig:33bwhouse}. In addition, we observe that more errors occur from reconstructing sibling relationships far from the substation node. This may be attributed to the increasing non-linearity as the depth of a grid grows, resulting incorrect impedance distance estimates. This is in line with our observation that the algorithm performs better for the $33$ bus model at low number of samples compared to the $56$ bus model.
\begin{figure}[ht]
 \centering
\includegraphics[width=0.2\textwidth]{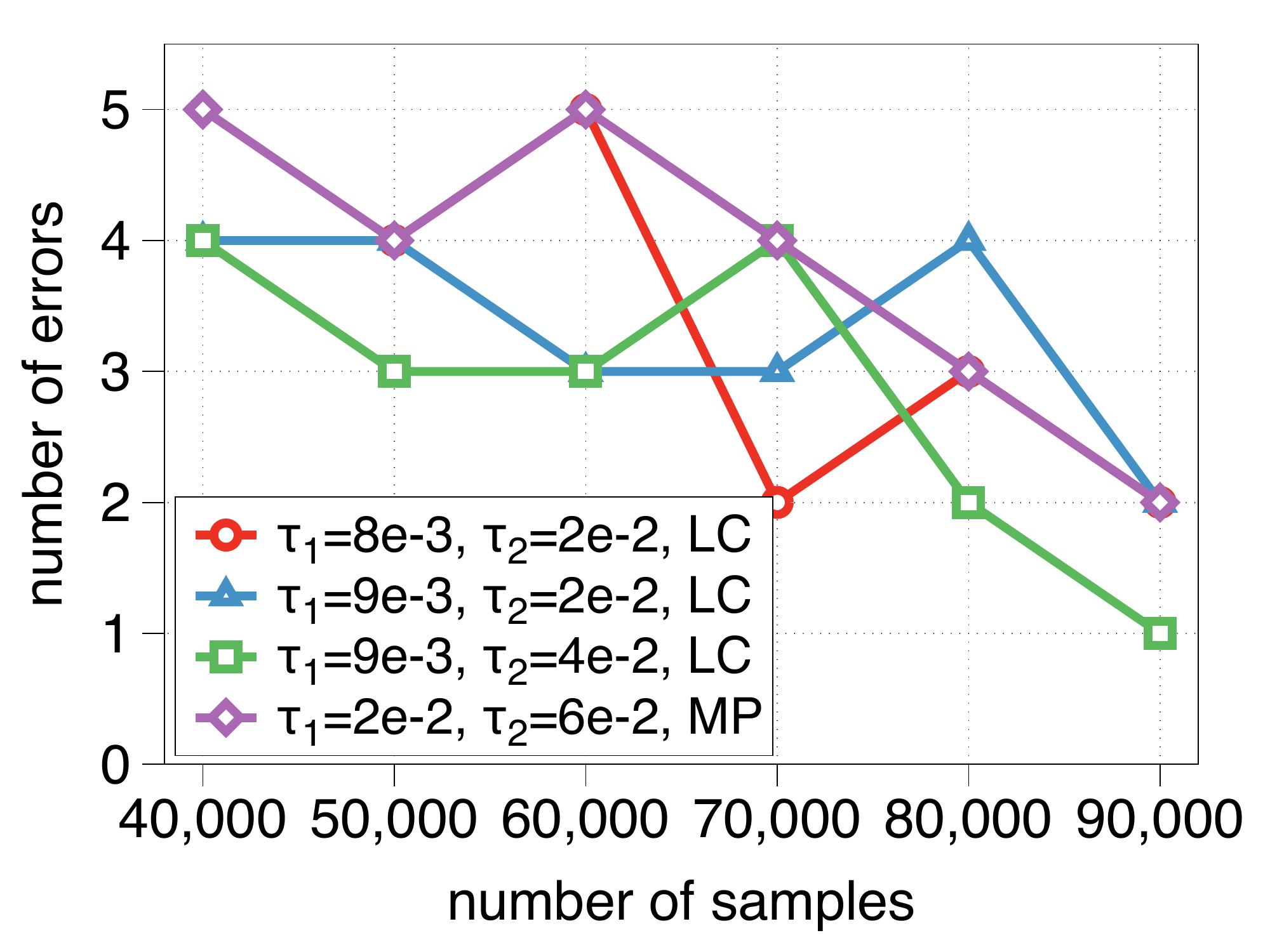}
\caption{Errors in topology estimation of Algorithm \ref{alg:learningdeep} in $33$ bus model for tolerance values $\tau_1$,$\tau_2$. MP, LC implies that the samples are from MATPOWER, LC-PF respectively.}
 \label{fig:33bwtau}
\end{figure}

\noindent\textbf{Sample collection window:} In practice, the time for collecting samples depends on the type of meter and observation window. For example, smart meter data is collected at $5$-$15$ minutes intervals, while (micro-)PMU is collected at a sampling frequency of $30$ Hz. In that regard, the simulated case with correlated injections in the $33$ bus model (Fig.~\ref{fig:33bwcorr}) will take approximately $3$ minutes of PMU observations for 5000 samples. On the other hand, smart meter based estimation (Fig.~\ref{fig:33bwhouse}) will take $8$-$25$ hours  for $100$ real-world samples assuming $5$-$15$ minutes sampling intervals. However, it is worth mentioning that all results of Algorithm \ref{alg:main} assume all node pairs as potential edges. In realistic grids, the set of candidate edges is often limited to a much smaller set. Furthermore, the operator has information about the grid structure in time-intervals preceding the observation window. Both of these may be used to improve the sample performance of learning algorithms, by reducing the line search space and biasing the search towards topologies closer to the prior topology. 

\section{Conclusion and Future Work}\label{sec:conclusions}
In this paper, we present two algorithms that recover topology (and line impedances) using voltage and injection measurements collected only from the end-users/leaf nodes in the radial distribution grid, while all intermediate nodes are unobserved. The first algorithm uses injection statistics at end-users and learns the topology. On the other hand, the second algorithm owing to the presence of injection samples is able to do joint topology and impedance estimation. We show that either algorithm has a computational complexity which scales as $|\mathcal V|^3$. Further, we show that under some mild technical conditions, the second algorithm guarantees to output the correct topology with only $O(|\mathcal V|\log|\mathcal V|)$ samples. We demonstrate the performance of our algorithms through numerical simulations with samples generated from the non-linear ac power flow model in MATPOWER. 

This work opens up several directions for possible extension. We plan to analyze such learning algorithms for three phase power distribution grids under a similar linearized scheme as proposed in \cite{dekathreephase}. While this paper discusses buses with $PQ$ loads with independent nodal injections, non-trivial extensions to systems with correlated injections, voltage regulators and transformers will be analyzed in future work. Finally, we plan to pursue theoretical extensions of this work from radial grids to the case of loopy grids with large girth.

\vspace{-0.075in}
\bibliography{reference,sigproc,FIDVR,SmartGrid,voltage,trees}
\appendix

\subsection{Proof of Lemma \ref{lem:subgaus}}\label{sec:pflem:subgaus}
We start the proof by bounding $\mathbb{E}[e^{\lambda(p_a^2-\mathbb{E}[p_a^2])}]$ utilizing the following lemmas. The proof of Lemma \ref{lem:subgausprod} is presented in Appendix \ref{sec:pflem:subgausprod}.
\begin{lemma}[Lemma 5.5 of \cite{vershynin2010introduction}]\label{lem:subgaus2}
Given a zero mean random variable $X$, the statement $i$ implies the statement $j$ with $K_j=L K_i$ for some constant $L>0$, i.e., they all imply sub-Gaussianity.
\begin{itemize}
 \item[1.]$\mathbb{P}(|X|>t)\le e^{1-t^2/K_1^2}$
 \item[2.]$(\mathbb{E}[|X|^r)^{1/r}\le K_2\sqrt{r}$
 \item[3.]$\mathbb{E}[e^{\lambda X}]\le e^{\lambda^2K_3^2}$
\end{itemize}
\end{lemma}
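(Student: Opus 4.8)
\textbf{Proof plan for Lemma \ref{lem:subgaus2}.}

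The plan is to prove the chain of implications $1 \Rightarrow 2 \Rightarrow 3 \Rightarrow 1$ so that any one of the three characterizations forces the other two, each with its constant $K_j$ a fixed multiple of the source constant $K_i$. This is the standard equivalence of sub-Gaussian definitions, and since it is quoted as Lemma 5.5 of \cite{vershynin2010introduction}, I would cite that source for the full details and only sketch the mechanism here. The zero-mean hypothesis is needed only in the last implication $3 \Rightarrow 1$ (or rather is automatically compatible with it, since $\mathbb{E}[e^{\lambda X}] \ge e^{\lambda \mathbb{E}[X]}$ forces $\mathbb{E}[X]=0$ when the bound $e^{\lambda^2 K_3^2}$ holds for all $\lambda$); the first two implications hold for any random variable.

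First I would prove $1 \Rightarrow 2$ by integrating the tail. Using the identity $\mathbb{E}[|X|^r] = \int_0^\infty \mathbb{P}(|X|>t)\, r t^{r-1}\, dt$ together with the tail bound $\mathbb{P}(|X|>t)\le e\, e^{-t^2/K_1^2}$, a change of variables $u = t^2/K_1^2$ reduces the integral to a Gamma function, giving $\mathbb{E}[|X|^r] \le e\, K_1^r\, \Gamma(r/2+1)$. Applying Stirling's bound $\Gamma(r/2+1)^{1/r} \le C\sqrt{r}$ yields $(\mathbb{E}[|X|^r])^{1/r}\le K_2\sqrt{r}$ with $K_2 = L K_1$ for an absolute constant $L$. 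Next, for $2 \Rightarrow 3$ I would expand the exponential moment as a power series, $\mathbb{E}[e^{\lambda X}] = \sum_{r\ge 0} \lambda^r \mathbb{E}[X^r]/r!$, bound $|\mathbb{E}[X^r]| \le \mathbb{E}[|X|^r] \le (K_2\sqrt{r})^r$, and control the resulting series $\sum_r \lambda^r (K_2\sqrt r)^r / r!$ using $r! \ge (r/e)^r$. The even terms dominate and the series is bounded by $e^{\lambda^2 K_3^2}$ for a suitable $K_3 = L K_2$; here the zero-mean assumption is used to discard the linear term so the bound holds in the two-sided form claimed.

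The step I expect to be the most delicate is $3 \Rightarrow 1$, recovering the tail bound from the moment generating function. The mechanism is a Chernoff/Markov argument: for $t>0$ and any $\lambda>0$, $\mathbb{P}(X>t) \le e^{-\lambda t}\,\mathbb{E}[e^{\lambda X}] \le e^{-\lambda t + \lambda^2 K_3^2}$, and optimizing over $\lambda$ by choosing $\lambda = t/(2K_3^2)$ gives $\mathbb{P}(X>t)\le e^{-t^2/(4K_3^2)}$. The same argument applied to $-X$ (legitimate because the hypothesis $\mathbb{E}[e^{\lambda X}]\le e^{\lambda^2 K_3^2}$ is two-sided in $\lambda$) bounds $\mathbb{P}(X<-t)$, and a union bound yields a two-sided tail. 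The only subtlety is reconciling the exponent: the Chernoff bound produces $e^{-t^2/(4K_3^2)}$, whereas form 1 demands $e^{1-t^2/K_1^2}$. Absorbing the factor of $4$ into the constant and using the harmless slack $e^{1-t^2/K_1^2}\ge e^{-t^2/K_1^2}$ lets us set $K_1 = L K_3$, completing the cycle. Since each implication changes the constant only by a universal multiplicative factor, all three statements are equivalent up to absolute constants, which is exactly the assertion of the lemma.
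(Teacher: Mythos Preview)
Your proof sketch is correct and follows the standard cycle $1\Rightarrow 2\Rightarrow 3\Rightarrow 1$ found in Vershynin's notes. Note, however, that the paper does not actually prove this lemma: it is stated with an explicit citation to Lemma~5.5 of \cite{vershynin2010introduction} and is used as a black box in the proof of Lemma~\ref{lem:subgaus}, so there is no ``paper's own proof'' to compare against beyond that reference. One small inconsistency in your write-up: you first say the zero-mean hypothesis is needed only in $3\Rightarrow 1$, but then (correctly) invoke it in $2\Rightarrow 3$ to kill the linear term in the Taylor expansion; in fact the Chernoff step $3\Rightarrow 1$ does not require $\mathbb{E}[X]=0$ at all, so you should relocate that remark.
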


\begin{lemma}\label{lem:subgausprod}
Let $X,Y$ be independent zero mean random variables satisfying $\mathbb{E}[e^{\lambda X}]\le e^{\lambda^2\sigma_X^2}$,$~\mathbb{E}[e^{\lambda Y}]\le e^{\lambda^2\sigma_Y^2}.$
Then, the following bounds hold
\begin{align*}
&\mathbb{E}[e^{\lambda XY}]\le e^{\lambda^2 (C_1\sigma_X\sigma_Y)^2}\qquad\text{for $|\lambda|\le \frac{C_2}{\sigma_X\sigma_Y}$}\\
&\mathbb{E}[e^{\lambda( X^2-\mathbb{E}[X^2])}]\le e^{\lambda^2 (C_1\sigma_X^2)^2}~~\text{for $|\lambda|\le \frac{C_2}{\sigma_X^2}$}
\end{align*}
for some constants $C_1,C_2$.
\end{lemma}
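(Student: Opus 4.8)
The plan is to reduce both inequalities to the standard equivalence of sub-Gaussian characterizations recorded in Lemma \ref{lem:subgaus2}, after a homogeneity normalization. First I would set $U:=X/\sigma_X$ and $V:=Y/\sigma_Y$; then $\mathbb{E}[e^{\lambda U}]\le e^{\lambda^2}$ and $\mathbb{E}[e^{\lambda V}]\le e^{\lambda^2}$ for all $\lambda$, so by Lemma \ref{lem:subgaus2} (property $3$ implies property $2$) there is an absolute constant $c_0$ with $(\mathbb{E}|U|^r)^{1/r}\le c_0\sqrt{r}$ and $(\mathbb{E}|V|^r)^{1/r}\le c_0\sqrt{r}$ for all $r\ge1$. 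Since $XY=\sigma_X\sigma_Y\,UV$ and $X^2-\mathbb{E}[X^2]=\sigma_X^2(U^2-\mathbb{E}[U^2])$, it suffices to produce absolute constants $c_1,c_2$ with $\mathbb{E}[e^{\mu UV}]\le e^{c_1^2\mu^2}$ and $\mathbb{E}[e^{\mu(U^2-\mathbb{E}U^2)}]\le e^{c_1^2\mu^2}$ whenever $|\mu|\le c_2$; the substitutions $\mu=\lambda\sigma_X\sigma_Y$ and $\mu=\lambda\sigma_X^2$ then yield the two stated bounds, with $C_1$ the larger and $C_2$ the smaller of the constants obtained below.

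For the product bound I would condition on $V$ and apply sub-Gaussianity of $U$ at the point $\mu V$, which is legitimate because $X$ and $Y$ are independent; this gives $\mathbb{E}[e^{\mu UV}\mid V]\le e^{\mu^2 V^2}$ and hence $\mathbb{E}[e^{\mu UV}]\le\mathbb{E}[e^{\mu^2 V^2}]$. Expanding the exponential and using $\mathbb{E}[V^{2k}]\le(c_0\sqrt{2k})^{2k}=(2c_0^2 k)^k$ together with $k!\ge(k/e)^k$,
\[
\mathbb{E}[e^{tV^2}]=\sum_{k\ge0}\frac{t^k\,\mathbb{E}[V^{2k}]}{k!}\le\sum_{k\ge0}(2ec_0^2 t)^k=\frac{1}{1-2ec_0^2 t}\le e^{4ec_0^2 t}
\]
for $0\le t\le 1/(4ec_0^2)$, where the last step uses $\tfrac1{1-x}\le e^{2x}$ on $[0,1/2]$ (which follows from $-\log(1-x)\le x/(1-x)\le 2x$). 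Taking $t=\mu^2$ gives $\mathbb{E}[e^{\mu UV}]\le e^{4ec_0^2\mu^2}$ for $|\mu|\le 1/(2\sqrt{e}\,c_0)$, i.e.\ the first inequality.

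The centered-square bound is where I expect the real work: the crude estimate $\mathbb{E}[e^{\mu U^2}]\le 1/(1-2ec_0^2\mu)$ only yields $e^{O(\mu)}$ growth, whereas the claim requires genuinely quadratic growth of the log-MGF near $0$, so one must exploit the centering to annihilate the linear term. Writing $W:=U^2-\mathbb{E}[U^2]$, so that $\mathbb{E}[W]=0$, I would expand
\[
\mathbb{E}[e^{\mu W}]=1+\sum_{k\ge2}\frac{\mu^k\,\mathbb{E}[W^k]}{k!}\le 1+\sum_{k\ge2}\frac{|\mu|^k\,\mathbb{E}|W|^k}{k!},
\]
estimate $(\mathbb{E}|W|^k)^{1/k}\le(\mathbb{E}|U|^{2k})^{1/k}+\mathbb{E}[U^2]\le 2c_0^2 k+2c_0^2\le 3c_0^2 k$ for $k\ge2$ by Minkowski, and then, again via $k!\ge(k/e)^k$, bound the tail by $\sum_{k\ge2}(3ec_0^2|\mu|)^k\le 2(3ec_0^2)^2\mu^2$ whenever $3ec_0^2|\mu|\le 1/2$. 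This gives $\mathbb{E}[e^{\mu W}]\le 1+2(3ec_0^2)^2\mu^2\le e^{2(3ec_0^2)^2\mu^2}$ for $|\mu|\le 1/(6ec_0^2)$, the second inequality. What remains is routine: justifying the term-by-term expansion (valid since $e^{|\mu||W|}$ is integrable for $|\mu|$ in the stated range, by the same moment bounds) and collecting constants. The only conceptual obstacle is the one flagged above --- recognizing that the centered Taylor expansion, not the direct Laplace-transform-of-$U^2$ estimate, is what recovers the quadratic rate.
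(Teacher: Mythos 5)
Your proof is correct, but it takes a genuinely different route from the paper's in both halves. For the product bound the paper expands $\mathbb{E}[e^{\lambda XY}]$ directly into its Taylor series, uses independence to factor $\mathbb{E}[(XY)^r]=\mathbb{E}[X^r]\mathbb{E}[Y^r]$ (with the zero-mean hypothesis killing the $r=1$ term), bounds each moment via Lemma \ref{lem:subgaus2}, and controls the resulting geometric series with Stirling's formula; you instead condition on $V$, dominate $\mathbb{E}[e^{\mu UV}]$ by $\mathbb{E}[e^{\mu^2V^2}]$, and push the moment expansion onto $V^2$ alone. The two routes are of comparable difficulty, use independence in the same essential way, and yield the same quadratic MGF bound on a constant-length interval (only the constants differ, which is immaterial). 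The more substantive divergence is in the centered-square bound: the paper does not prove it at all but cites Appendix B of \cite{honorio2014tight}, whereas you give a self-contained argument via the centered Taylor expansion of $W=U^2-\mathbb{E}[U^2]$ together with Minkowski's inequality, correctly identifying that the centering is what annihilates the linear term and recovers genuinely quadratic growth of the log-MGF near the origin (a direct bound on $\mathbb{E}[e^{\mu U^2}]$ would only give $e^{O(|\mu|)}$). Your normalization to $U=X/\sigma_X$, $V=Y/\sigma_Y$ also cleanly separates the homogeneity bookkeeping from the analytic content. In short: same conclusion, a slightly different and fully self-contained derivation that is, if anything, more complete than what the paper records.
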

From the assumption in Theorem \ref{thm:main}, assume that $p_a,q_a$ are sub-Gaussian with sub-Gaussian parameters bounded by a constant $K$.
Under this assumption,
from Lemma \ref{lem:subgaus2}, $\sigma=LK$ satisfies
$$\mathbb{E}[e^{\lambda p_a}]\le e^{\lambda^2\sigma^2}$$
where $L$ is a constant appearing in Lemma \ref{lem:subgaus2}.
From Lemma \ref{lem:subgausprod}, one can observe that there exist absolute constants $C_1,C_2$ such that
$\mathbb{E}[e^{\lambda(p_a^2-\mathbb{E}[p_a^2])}]\le e^{\lambda^2(C_1\sigma^2)^2}$
 for $|\lambda|\le\frac{C_2}{\sigma^2}$.
Note that the same bound holds for $\mathbb{E}[e^{\lambda(q_a^2-\mathbb{E}[q_a^2])}]$.
Now, we address to bound $\mathbb{E}[e^{\lambda(p_aq_a-\mathbb{E}[p_aq_a])}]$.
To this end, we consider the following decomposition
\begin{equation}\label{eq:paqa}
p_aq_a=\frac12(p_a+q_a)^2-\frac12p_a^2-\frac12q_a^2
\end{equation}
and introduce the below lemma.
\begin{lemma}\label{lem:subexpsum}
Let $X,Y$ be zero mean random variables satisfying \begin{align*}
&\mathbb{E}[e^{\lambda X}]\le e^{\lambda^2\sigma_X^2}\quad\text{for $|\lambda|\le B_X$}\\
&\mathbb{E}[e^{\lambda Y}]\le e^{\lambda^2\sigma_Y^2}\quad\text{for $|\lambda|\le B_Y$}.
\end{align*}
Then, $\mathbb{E}[e^{\lambda (X+Y)}]\le e^{\lambda^2(2\sigma_X^2+2\sigma_Y^2)}$
for $|\lambda|\le\frac12\min(B_X,B_Y)$.
In addition, if $X,Y$ are sub-Gaussian, i.e., $B_X=B_Y=\infty$, then $X+Y$ is also sub-Gaussian with
\begin{align*}
&\mathbb{E}[e^{\lambda (X+Y)}]\le e^{\lambda^2(\sigma_X^2+\sigma_Y^2)}~~\text{if $X,Y$ are independent}\\
&\mathbb{E}[e^{\lambda (X+Y)}]\le e^{\lambda^2(\sigma_X+\sigma_Y)^2}~\text{otherwise}.
\end{align*}
\end{lemma}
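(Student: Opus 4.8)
The plan is to derive all three bounds directly from the moment generating function (MGF) hypotheses using elementary inequalities: Cauchy--Schwarz for the range-restricted bound, factorization for the independent case, and a tuned H\"older inequality for the dependent sub-Gaussian case.

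First, for the bound valid on $|\lambda|\le\frac12\min(B_X,B_Y)$, I would write $\mathbb{E}[e^{\lambda(X+Y)}]=\mathbb{E}[e^{\lambda X}e^{\lambda Y}]$ and apply Cauchy--Schwarz to obtain $\mathbb{E}[e^{\lambda(X+Y)}]\le\mathbb{E}[e^{2\lambda X}]^{1/2}\,\mathbb{E}[e^{2\lambda Y}]^{1/2}$. Since $|2\lambda|\le B_X$ and $|2\lambda|\le B_Y$ for such $\lambda$, the two hypotheses give $\mathbb{E}[e^{2\lambda X}]\le e^{4\lambda^2\sigma_X^2}$ and $\mathbb{E}[e^{2\lambda Y}]\le e^{4\lambda^2\sigma_Y^2}$; multiplying the square roots yields $e^{2\lambda^2\sigma_X^2+2\lambda^2\sigma_Y^2}$, as claimed. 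For $X,Y$ independent and sub-Gaussian (so $B_X=B_Y=\infty$), the MGF factorizes: $\mathbb{E}[e^{\lambda(X+Y)}]=\mathbb{E}[e^{\lambda X}]\mathbb{E}[e^{\lambda Y}]\le e^{\lambda^2\sigma_X^2}e^{\lambda^2\sigma_Y^2}=e^{\lambda^2(\sigma_X^2+\sigma_Y^2)}$, with no restriction on $\lambda$.

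For the dependent sub-Gaussian case, I would apply H\"older's inequality with the conjugate exponents $p=\frac{\sigma_X+\sigma_Y}{\sigma_X}$ and $q=\frac{\sigma_X+\sigma_Y}{\sigma_Y}$, which satisfy $\frac1p+\frac1q=1$. (One may assume $\sigma_X+\sigma_Y>0$; if both vanish then $X=Y=0$ a.s.\ and the bound is trivial, and if exactly one vanishes the statement reduces to the single-variable hypothesis for the other.) Then $\mathbb{E}[e^{\lambda(X+Y)}]\le\mathbb{E}[e^{p\lambda X}]^{1/p}\,\mathbb{E}[e^{q\lambda Y}]^{1/q}\le e^{p\lambda^2\sigma_X^2}e^{q\lambda^2\sigma_Y^2}$, where the sub-Gaussian bounds are used with arguments $p\lambda$ and $q\lambda$ (legitimate since $B_X=B_Y=\infty$). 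Because $p\sigma_X^2=(\sigma_X+\sigma_Y)\sigma_X$ and $q\sigma_Y^2=(\sigma_X+\sigma_Y)\sigma_Y$, the exponent collapses to $\lambda^2(\sigma_X+\sigma_Y)^2$, giving the stated inequality.

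The only nontrivial point is the choice of H\"older exponents: they are precisely the values that make the weighted sum of variance proxies telescope into $(\sigma_X+\sigma_Y)^2$. Beyond that and the bookkeeping for the degenerate cases, every step is a one-line estimate, so I do not anticipate any real obstacle.
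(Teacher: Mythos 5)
Your proposal is correct and follows essentially the same route as the paper: Cauchy--Schwarz (H\"older with $p=q=2$) for the range-restricted bound, MGF factorization for the independent case, and the tuned H\"older exponents $p=\frac{\sigma_X+\sigma_Y}{\sigma_X}$, $q=\frac{\sigma_X+\sigma_Y}{\sigma_Y}$ for the dependent case. The only difference is that you actually write out the dependent-case argument, whereas the paper omits it and cites Theorem 2.7 of Rivasplata's note, which is precisely this H\"older computation.
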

The proof of Lemma \ref{lem:subexpsum} is given in Appendix \ref{sec:pflem:subexpsum}. Lemma \ref{lem:subexpsum} directly implies that $p_a+q_a$ is sub-Gaussian satisfying
$$\mathbb{E}[e^{\lambda(p_a+q_b)}]\le e^{4\lambda^2\sigma^2}.$$
Further, from Lemma \ref{lem:subgausprod}, the following bound holds
$$\mathbb{E}[e^{\lambda((p_a+q_a)^2-\mathbb{E}[(p_a+q_a)^2])}]\le e^{\lambda^2(4C_1\sigma^2)^2}$$
for $|\lambda|\le\frac{C_2}{4\sigma^2}$.
Using the above bound, Lemma \ref{lem:subexpsum}, and Eq. \eqref{eq:paqa}, the bound for $\mathbb{E}[e^{\lambda(p_aq_a-\mathbb{E}[p_aq_a])}]$ can be derived as
\begin{align}
&\mathbb{E}[e^{\lambda(p_aq_a-\mathbb{E}[p_aq_a])}]\notag\\
&=\mathbb{E}[e^{\frac\lambda2((p_a+q_a)^2-\mathbb{E}[(p_a+q_a)^2]-p_a^2+\mathbb{E}[p_a^2]-q_a^2+\mathbb{E}[q_a^2])}]\notag\\
&\le e^{\lambda^2(C_1^\prime\sigma^2)^2}\label{eq:paqabd}
\end{align}
for $|\lambda|\le\frac{C_2^\prime}{\sigma^2}$ for some constants $C_1^\prime, C_2^\prime$. 

So far, we found bounds for $p_a^2,q_a^2$ and $p_aq_a$.
Now, we begin to bound $\mathbb{E}[e^{\lambda(v_ap_b-\mathbb{E}[v_ap_b])}]$.
As in obtaining the bound for $p_aq_b$, we first decompose $v_ap_b$ using Eq. \eqref{eq:lcpf2} as
\begin{align*}
v_ap_b&=\sum_{c\in\mathcal V}H_{1/r}^{-1}(a,c)p_bp_c+H_{1/x}^{-1}(a,c)p_bq_c\\
&=H_{1/r}^{-1}(a,b)p_b^2+p_b\bar{p}_b+H_{1/x}^{-1}(a,b)p_bq_b+p_b\bar{q}_b, \text{~where}
\end{align*}
$\bar{p}_b=\sum_{c\in\mathcal V\setminus\{b\}}H_{1/r}^{-1}(a,c)p_c,\quad \bar{q}_b=\sum_{c\in\mathcal V\setminus\{b\}}H_{1/x}^{-1}(a,c)q_c.$
As done before, we will bound $\mathbb{E}[e^{\lambda p_b\bar{p}_b}]$ and $\mathbb{E}[e^{\lambda p_b\bar{q}_b}]$.
Let us define $H_{\max}=\max_{a,b\in\mathcal V}(\max(|H_{1/r}^{-1}(a,b)|,|H_{1/x}^{-1}(a,b)|))$.
Since we assume the constant depth of the power grid and bounded line parameters in Theorem \ref{thm:main}, $H_{\max}$ is constantly bounded due to Lemma \ref{lemma1}.
Using this and Lemma \ref{lem:subgaus2}, we bound 
$$\mathbb{E}[e^{\lambda H_{1/r}^{-1}(a,c)p_c}]\le e^{(\lambda H_{1/r}^{-1}(a,c))^2\sigma^2}\le e^{\lambda^2 H_{\max}^2\sigma^2}.$$
Moreover, using Lemma \ref{lem:subexpsum}, we also bound
$$\mathbb{E}[e^{\lambda \bar{p_b}}]\le e^{\lambda^2(|\mathcal{V}|-1)H_{\max}^2\sigma^2}\le e^{\lambda^2|\mathcal{V}|H_{\max}^2\sigma^2}.$$
Since $p_b,\bar{p}_b$ are independent, using Lemma \ref{lem:subgausprod}, one can derive $\mathbb{E}[e^{\lambda p_b\bar{p_b}}]\le e^{\lambda^2(C_1\sqrt{|\mathcal{V}|}H_{\max}\sigma^2)^2}$
for $|\lambda|\le\frac{C_2}{\sqrt{|\mathcal V|}H_{\max}\sigma^2}$.
Finally, using Lemma \ref{lem:subexpsum}, the following bound holds
\begin{align}
&\mathbb{E}[e^{\lambda(v_ap_b-\mathbb{E}[v_ap_b])}]\notag\\
&=\mathbb{E}[e^{\lambda(H_{1/r}^{-1}(a,b)(p_b^2-\mathbb{E}[p_b^2])+H_{1/x}^{-1}(a,b)(p_bq_b-\mathbb{E}[p_bq_b])+p_b\bar{p}_b+p_b\bar{q}_b)}]\notag\\
&\le e^{\lambda^2(C_1^{\prime\prime}\sqrt{|\mathcal V|}\sigma^2)^2}\quad\text{for\quad  $|\lambda|\le{C_2^{\prime\prime}}/{\sqrt{|V|}\sigma^2}$}\label{eq:vapbbd}
\end{align}
for some constants $C_1^{\prime\prime},C_2^{\prime\prime}$.
Note that same bound holds for $\mathbb{E}[e^{\lambda(v_aq_b-\mathbb{E}[v_aq_b])}]$. Choosing $\alpha=\max(C_1^\prime,C_1^{\prime\prime})\sigma^2, M=\frac{\min(C_2^\prime,C_2^{\prime\prime})}{\sigma^2}$ completes the proof of Lemma \ref{lem:subgaus}.
\subsection{Proof of Lemma \ref{lem:bernstein}}\label{sec:pflem:bernstein}
Before starting the proof, we note that the proof is analogous to the proof of Proposition 5.16 in \cite{vershynin2010introduction}.
Let $S=\sum_iX_i$. In this proof,
We split the cases for $S\ge t$ and $-S\ge t$.
To this end, we bound the probability of $S\ge t$.
\begin{align}
\mathbb{P}(S\ge t)&=\mathbb{P}(e^{\lambda S}\ge e^{\lambda t})\le e^{-\lambda t}\mathbb{E}[e^{\lambda S}]\label{eq:berstein2}\\
&=e^{-\lambda t}\prod_{i=1}^n\mathbb{E}[e^{\lambda X_i}]\\&\le e^{-\lambda t}\prod_{i=1}^n e^{\lambda^2\sigma^2}~\text{for $|\lambda|\le B$}\label{eq:berstein4}\\
&=e^{-\lambda t+\lambda^2\sigma^2n}~\text{for $|\lambda|\le B$}\label{eq:berstein5}\\
&\le\exp\left(-\min\left(\frac{Bt}{2},\frac{t^2}{4\sigma^2n}\right)\right)\label{eq:berstein6}
\end{align}
Eq. \eqref{eq:berstein2} is from Markov's inequality. Eq. \eqref{eq:berstein4} is from the assumption of Lemma \ref{lem:bernstein}.
Eq. \eqref{eq:berstein6} is from choosing $\lambda=\min(B,t/(2\sigma^2n))$. One can obtain the same bound for $\mathbb{P}(-S\ge t)$. Applying union bound on $\mathbb{P}(-S\ge t)$, $\mathbb{P}(S\ge t)$ leads us to the result of Lemma \ref{lem:bernstein}.
This completes the proof of Lemma \ref{lem:bernstein}.

\subsection{Proof of Lemma \ref{lem:subgausprod}}\label{sec:pflem:subgausprod}
We first derive the bound for $\mathbb{E}[e^{\lambda XY}]$.
\begin{align}
&\mathbb{E}[e^{\lambda XY}]=1+\lambda \mathbb{E}[X]\mathbb{E}[Y]+\sum_{r=2}^\infty\frac{\lambda^r\mathbb{E}[X^r]\mathbb{E}[Y^r]}{r!}\label{eq:subgausprod1}\\
&\le1+\sum_{r=2}^\infty\frac{\lambda^r\mathbb{E}[|X|^r]\mathbb{E}[|Y|^r]}{r!}\label{eq:subgausprod3}\\
&\le1+\sum_{r=2}^\infty\frac{\lambda^r(L\sigma_X\sqrt{r})^r(L\sigma_Y\sqrt{r})^r}{r!}\label{eq:subgausprod4}\\
&\le1+\sum_{r=2}^\infty\frac{(\lambda L^2\sigma_X\sigma_yr)^r}{\sqrt{2\pi r}r^re^{-r}}=1+\sum_{r=2}^\infty\frac{(\lambda eL^2\sigma_X\sigma_y)^r}{\sqrt{2\pi r}}\label{eq:subgausprod5}\\
&\le1+\frac1{\sqrt{2\pi}}\sum_{r=2}^\infty{(\lambda eL^2\sigma_X\sigma_y)^r}\label{eq:subgausprod7}\\
&=1+\frac{(\lambda eL^2\sigma_X\sigma_y)^2}{\sqrt{2\pi}(1-\lambda eL^2\sigma_X\sigma_y)}~\text{for $|\lambda|<\frac1{eL^2\sigma_X\sigma_Y}$}\label{eq:subgausprod8}\\
&\le1+{(2\lambda eL^2\sigma_X\sigma_y)^2}~\text{for $|\lambda|\le\left(1-\frac1{4\sqrt{2\pi}}\right)\frac{1}{eL^2\sigma_X\sigma_Y}$}\label{eq:subgausprod9}\\
&\le\exp(\lambda^2(2eL^2\sigma_X\sigma_y)^2)\label{eq:subgausprod10}
\end{align}
Eq. \eqref{eq:subgausprod1} is from the Taylor series expansion and the independence of $X,Y$.
Eq. \eqref{eq:subgausprod4} is from Lemma \ref{lem:subgaus2} and $L$ is an absolute constant appearing in Lemma \ref{lem:subgaus2}.
Eq. \eqref{eq:subgausprod5} is from the lowerbound of Stirling's approximation $\sqrt{2\pi r}r^re^{-r}\le r!$.
Eq. \eqref{eq:subgausprod7} is obtained by deleting $\sqrt{r}$ in the denominator.
Eq. \eqref{eq:subgausprod8} is from the sum of power series.
Eq. \eqref{eq:subgausprod9} is from $\frac1{\sqrt{2\pi}(1-\lambda eL^2\sigma_X\sigma_Y)}\le 4$ when $|\lambda|\le\left(1-\frac1{4\sqrt{2\pi}}\right)\frac1{eL^2\sigma_X\sigma_Y}$.
Eq. \eqref{eq:subgausprod10} is from $1+x\le e^x$.

Now, we consider the bound for $\mathbb{E}[e^{\lambda X^2}]$. For this bound, We refer Appendix B of \cite{honorio2014tight} which states that
$$\mathbb{E}[e^{\lambda (X^2-\mathbb{E}[X^2])}]\le e^{\lambda^2(8\sigma_X)^2}~\text{for $|\lambda|\le\frac1{8\sigma_X^2}$}.$$
Choosing $$C_1=\max(2eL^2,8),~C_2=\min\left(\left(1-\frac1{4\sqrt{2\pi}}\right)\frac1{eL^2},8\right)$$ completes the proof of Lemma \ref{lem:subgausprod}.

\subsection{Proof of Lemma \ref{lem:subexpsum}}\label{sec:pflem:subexpsum}
First, we consider the case that at least one of $B_X,B_Y$ is bounded.
To this end, we derive the following bound which directly leads us to the first result of Lemma \ref{lem:subexpsum}.
\begin{align}
&\mathbb{E}[e^{\lambda(X+Y)}]\le\left(\mathbb{E}[(e^{\lambda X})^2]\right)^{1/2}\left(\mathbb{E}[(e^{\lambda Y})^2]\right)^{1/2}\label{eq:subexpsum1}\\
&\le (e^{(2\lambda)^2\sigma_X^2})^{1/2}(e^{(2\lambda)^2\sigma_Y^2})^{1/2}~\text{for $|\lambda|\le\frac12\min(B_X,B_Y$)}\label{eq:subexpsum2}\\
&=e^{\lambda^2(2\sigma_X^2+2\sigma_Y^2)}~\text{for $|\lambda|\le\frac12\min(B_X,B_Y$)}\label{eq:subexpsum3}
\end{align}
Here, Eq. \eqref{eq:subexpsum1} is from H\"{o}lder's inequality. Eq. \eqref{eq:subexpsum2} is from the assumption of Lemma \ref{lem:subexpsum}. 

Now, we consider the case when $B_X=B_Y=\infty$. When $X,Y$ are independent, the result is trivial.
When $X,Y$ are dependent, the proof is analogous to the proof of Theorem 2.7 of \cite{rivasplata2012subgaussian}, therefore we omit the proof.
This completes the proof of Lemma \ref{lem:subexpsum}.

 \vspace{-28pt}
 \begin{IEEEbiography}[{\includegraphics[width=.8in,height=1in,clip,keepaspectratio]{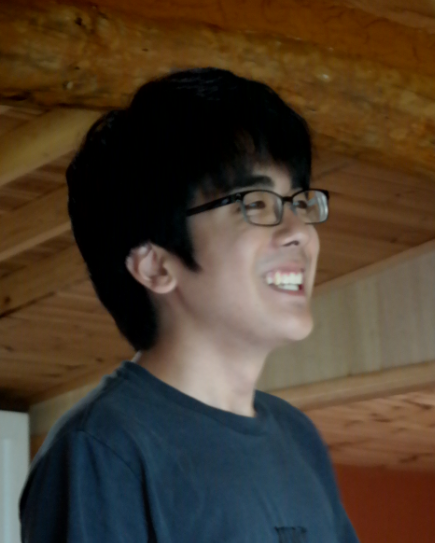}}]{Sejun park}
       is a Ph.D. student  in the School of Electrical  Engineering at Korea  Advanced Institute of  Science and Technology  (KAIST), advised by Prof.  Jinwoo Shin. He has been worked on developing provable inference and learning algorithms for probabilistic graphical models and power distribution grids. His current research interests are discrete problems associated with neural networks including neural network pruning and discrete neural networks.
    \end{IEEEbiography}
    \vspace{-34pt}
     \begin{IEEEbiography}[{\includegraphics[width=.8in,height=1in,clip,keepaspectratio]{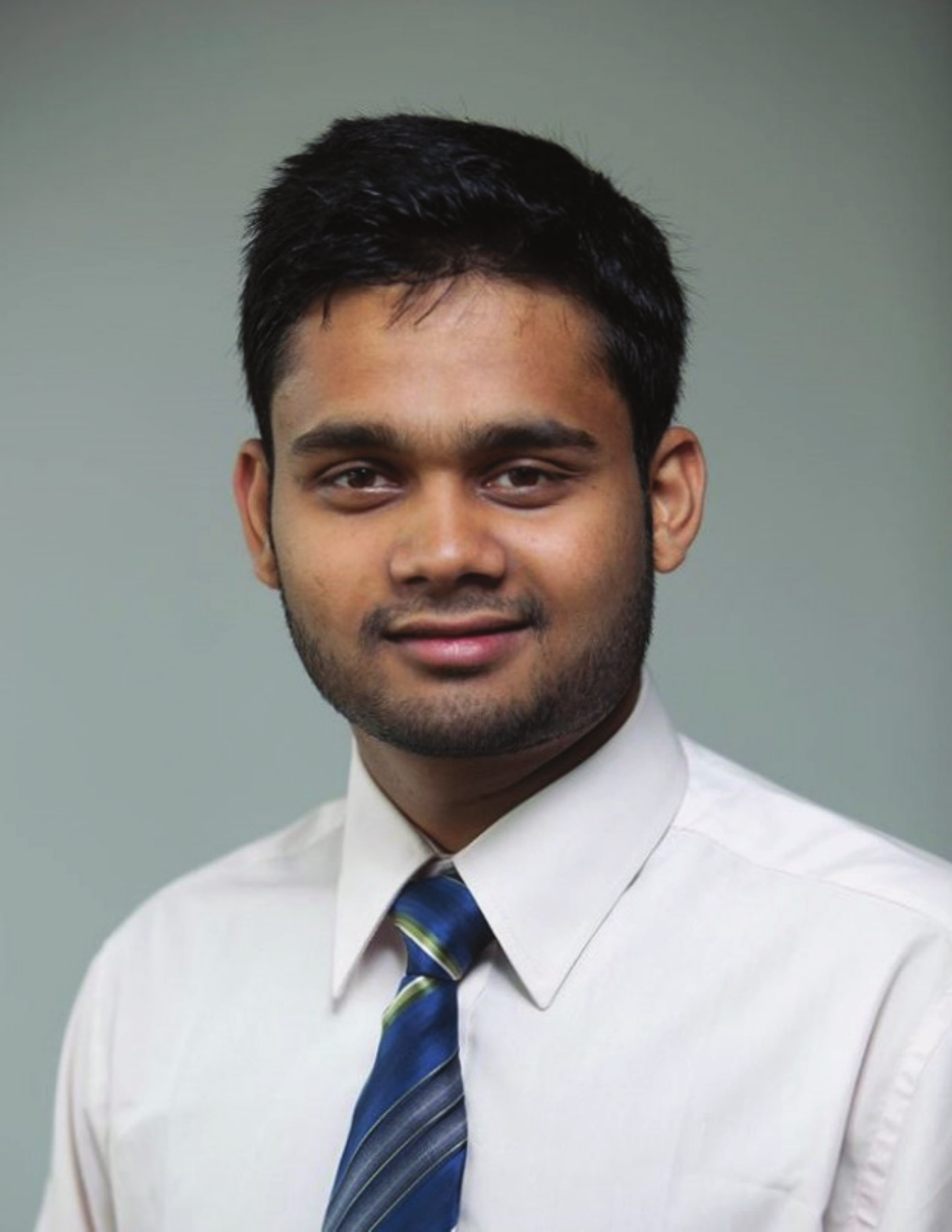}}]{Deepjyoti Deka}
       is a staff scientist in the Applied Mathematics and Plasma Physics group of the Theoretical Division at Los Alamos National Laboratory (LANL), where he was previously a postdoctoral research associate at the Center for Nonlinear Studies (CNLS). His research interests include data-analysis of power grid structure, operations and security, and optimization in social and physical networks. At LANL, Dr. Deka serves as a co-principal investigator for DOE projects on machine learning in distribution systems and in cyber-physical security. Before joining the laboratory he received the M.S. and Ph.D. degrees in electrical engineering from the University of Texas, Austin, TX, USA, in 2011 and 2015, respectively. He completed his undergraduate degree in electrical engineering from IIT Guwahati, India  in 2009 with an institute silver medal as the best outgoing student of the department.
    \end{IEEEbiography} \vspace{-34pt}
        \begin{IEEEbiography}[{\includegraphics[width=.8in,height=1in,clip,keepaspectratio]{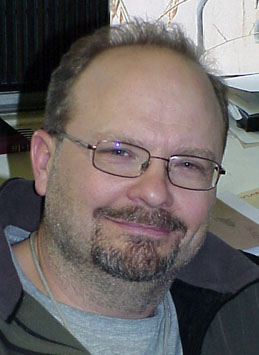}}]{Scott Backhaus}
        is currently the Cryogenics Team Lead in the Quantum Electromagnetics Group at the National Institute of Standards and Technology. He is also an independent consultant to Camus Energy, a startup company focused on monitoring and control of electrical distribution networks. He recently served as the Coordinator for Electromagnetic Pulse (EMP) and Geomagnetic Disturbance (GMD) impacts on critical infrastructure systems for the Department of Homeland Security’s (DHS) Cybersecurity and Infrastructure Security Agency. He previously served in multiple roles in his 20-plus years at Los Alamos National Laboratory (LANL), including Program Manager for Office of Electricity, Program Manager for DHS Critical Infrastructure, principal investigator for several LANL projects funded by the Office of Electricity, and team lead for LANL’s component of the DHS National Infrastructure Simulation and Analysis Group.  He received his Ph.D. in Physics in 1997 from the University of California at Berkeley in the area of macroscopic quantum behavior of superfluid He(3) and He(4).
    \end{IEEEbiography}
     \vspace{-34pt}
        \begin{IEEEbiography}[{\includegraphics[width=.8in,height=1in,clip,keepaspectratio]{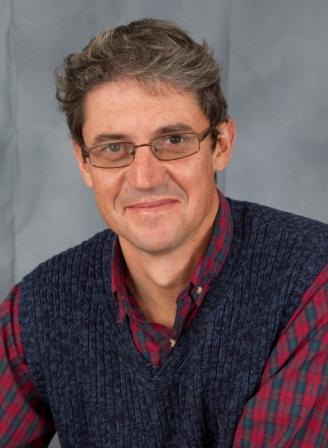}}]{Michael Chertkov's} area of focus is mathematics, including statistics and data science, applied to physical, engineered and other systems. Dr. Chertkov received his Ph.D.in physics from the Weizmann Institute of Science in 1996, and his M.Sc. in physics from Novosibirsk State University in 1990. After his Ph.D., Dr. Chertkov spent three years at Princeton University as a R. H. Dicke Fellow in the Department of Physics. He joined Los Alamos National Lab in 1999, initially as a J.R.Oppenheimer Fellow in the Theoretical Division, and continued as a Technical Staff Member. In 2019, Dr. Chertkov joined the University of Arizona as a Professor of Mathematics and leads the Interdisciplinary Graduate Program in Applied Mathematics. He is a fellow of the American Physical Society (APS) and a senior member of IEEE. 
    \end{IEEEbiography}
   \end{document}